\pdfoutput=1
\documentclass[11pt, letterpaper]{article}

\usepackage[margin=1in]{geometry}

\usepackage{graphicx} %
\usepackage[numbers]{natbib}

\usepackage[todo=disable,compact=false]{QuickCommands}

\title{Algorithms with Smoothed Polynomial-Time Complexity for Deterministic Discounted-sum and Mean-payoff Games}

\author{
    Ali Asadi \\
    Institute of Science and Technology Austria (ISTA) \\
    \texttt{ali.asadi@ist.ac.at}
    \and
    Krishnendu Chatterjee \\
    Institute of Science and Technology Austria (ISTA) \\
    \texttt{kchatterjee@ist.ac.at}
    \and
    Ruichen Luo \\
    Institute of Science and Technology Austria (ISTA) \\
    \texttt{rluo@ist.ac.at}
}

\date{}

\begin{document}

\maketitle
\thispagestyle{empty}

\begin{abstract}
    Turn-based deterministic games on graphs are two-player zero-sum games played on finite directed graphs,
where the vertex set is partitioned between the Max and the adversarial Min Player. The respective player chooses outgoing edges from their vertices.
We consider these games with the classical discounted-sum and mean-payoff objectives.
The goal of the Max Player is to maximize the objective against the adversarial Min Player.
These games lie in NP $\cap$ coNP and are among the rare combinatorial problems that belong to this complexity class, yet the existence of a polynomial-time algorithm is a major open question.
All known deterministic algorithms require exponential time in the worst case, while the classical strategy iteration (policy iteration) algorithm is remarkably efficient in practice, which motivates the study of the smoothed complexity of these games.
Previous results established a smoothed polynomial-time bound only for the restricted class of ergodic games and presented only a high-probability guarantee.
Our main result is a new analysis technique for a variant of the strategy iteration algorithm, and we show these algorithms achieve smoothed polynomial complexity for general game graphs with both high-probability and expected runtime guarantees.
In particular, our result removes the ergodicity restriction of prior work and strengthens the high-probability guarantee to the stronger expected-runtime guarantee, thereby resolving the open question on the smoothed complexity of these games.

\end{abstract}

\clearpage
\setcounter{page}{1}

\section{Introduction}
Two-player turn-based games on graphs with discounted-sum and mean-payoff objectives are a classical model whose polynomial-time solvability is a long-standing open problem.
In this work, we study the smoothed complexity of these games and show that variants of the policy-iteration algorithm achieve smoothed polynomial time on general game graphs, with both high-probability and expected runtime guarantees.

\paragraph{Turn-based game graphs.}
The class of two-player turn-based (perfect-information) deterministic games
is a widely studied theoretical model.
The model consists of a finite directed graph $G=(V,E)$ where the vertices
are partitioned into Player-Max and Player-Min vertices, and the corresponding
player chooses outgoing edges of the vertices in control.
This model extends classical graphs and is used to model interaction between
two adversarial agents, and has many applications, e.g.,
in analysis of reactive synthesis (interaction of controller and environment)~\cite{pnueli1989synthesis,ramadge1987supervisory};
in adversarial planning~\cite{mahanti1985and,hansen1998heuristic}.
Moreover, this model closely corresponds to the notion of alternation in Turing machines
which has been studied in~\cite{chandra1981alternation}.

\paragraph{Discounted-sum and mean-payoff objectives.}
In the study of games over graphs, the two fundamental payoff functions or objectives are
discounted-sum and mean-payoff. Every edge is assigned a real-valued reward, which represents
the per-stage reward.
A play (or infinite-walk) of the graph is an infinite-sequence of vertices, with edges between
the consecutive vertices.
The payoff of a play under discounted-sum (resp., mean-payoff) is the discounted-sum (resp., the
long-run average) of the rewards that appear in the play.
These objectives are the most well-studied and basic objectives studied in games over
graphs~\cite{shapley1953stochastic,gillette1957stochastic,filar2012competitive,puterman2014markov},
e.g., mean-payoff and discounted-sum games have been studied for quantitative reactive
synthesis~\cite{de2003discounting,chatterjee2015quantitative,bloem2009better}.

\paragraph{Algorithmic results.}
Besides the practical motivation to study turn-based deterministic games with
discounted-sum and mean-payoff objectives, they represent an intriguing class of
problems for algorithmic study.
The main computational problem is related to the optimal value for an objective,
where the Max Player aims to maximize and the Min Player aims to minimize the payoff function.
The decision problems for these games lie in NP $\cap$ coNP (also UP $\cap$ coUP)~\cite{condon1992complexity,zwick1996complexity},
but the existence of a polynomial-time algorithm is a long-standing and major open
problem.
All deterministic algorithms for this problem have a worst-case exponential-time complexity.

\paragraph{Smoothed complexity.}
While the worst-case complexity for the computational problem for these games
is exponential, a classical algorithm (namely the policy-iteration algorithm) works
very well in practice~\cite{dasdan2004experimental,kretinsky2022comparison}.
Hence it is natural to consider the smoothed complexity of this problem~\cite{spielman2004smoothed},
which is a classical complexity analysis between worst-case and average-case analysis.
The smoothed complexity analysis of these problems has an interesting history:
First, a polynomial-time smoothed complexity result for mean-payoff games was claimed in~\cite{boros2011stochastic};
unfortunately this claim was incorrect and later retracted.
Second, this result was partially recovered in~\cite{loff2024smoothed} which shows smoothed polynomial-time
complexity for the special class of ergodic games (where all vertices have the same value)
with a high-probability guarantee.

\paragraph{Open problem.}
The class of ergodic games is often simpler to analyze, e.g., partially-observable MDPs (POMDPs)
with ergodicity are decidable~\cite{chatterjee2025uniform} whereas in general they are undecidable~\cite{madani2003undecidability};
and for mean-payoff objectives the algorithmic study for ergodic games is much 
simpler~\cite{chatterjee2014complexity,hoffman1966nonterminating}.
Hence the first key open problem is to extend the smoothed polynomial time result
from ergodic to general game graphs.
Moreover, the previous result only presents a high-probability guarantee, whereas
for smoothed complexity analysis a stronger and more desired guarantee is the
expected runtime guarantee (see \cref{remark:high-probability-vs-expectation}).
Thus, the second interesting open problem is to establish polynomial-time smoothed complexity
with expected runtime guarantee for turn-based deterministic discounted-sum and
mean-payoff games.

\paragraph{Our contribution.}
We answer the above open questions in the affirmative. Moreover, we show that
variants of the policy-iteration algorithm achieve the polynomial-time smoothed complexity
on general game graphs with expected runtime guarantee.
In other words, we resolve the open question with variants of a well-known
algorithm that also works well in practice.

\paragraph{Related works.}
The algorithmic study of discounted-sum and mean-payoff games has a long history~\cite{ehrenfeucht1979positional,gurvich1988cyclic,zwick1996complexity,asadi2024deterministic,dorfman2019faster,kozachinskiy2021polyhedral}.
All these deterministic algorithms have worst-case exponential-time complexity for the general case,
and even the known randomized algorithms achieve only sub-exponential time~\cite{ludwig1995subexponential,bjorklund2007combinatorial}.
For a fixed discount factor, policy iteration is strongly polynomial for MDPs~\cite{ye2011simplex},
turn-based stochastic games~\cite{hansen2013strategy}, and robust
MDPs~\cite{asadi2026strongly,asadi2026complexity,zhong2026linear}.
Without a fixed discount factor, lower bounds for Howard's policy iteration are known already for
deterministic MDPs: for the mean-payoff objective, \cite{hansen2010lower,asadi2025lower} give lower
bounds on the number of iterations, and for the discounted-sum objective with the discount factor as
part of the input, \cite{zhong2026policy} establishes a lower bound that is exponential in the number
of states, which shows that policy iteration is not strongly polynomial in this setting.
Moreover, discounted-sum games are closely related to turn-based stochastic games,
which subsume MDPs as a special case. Thus, it is natural to ask whether our smoothed
polynomial-time result extends to this broader class.
Quite interestingly, \cite{christ2023smoothed} establishes a super-polynomial-time lower bound for
the smoothed complexity of policy iteration on MDPs, which rules out such an extension to
turn-based stochastic games.

\paragraph{Concurrent work.}
In independent work, Loff and Skomra~\cite{loff2026smoothed} study the smoothed complexity of
policy iteration and value iteration for deterministic discounted-sum and mean-payoff games on
general game graphs; we learned of their results through personal communication.

\section{Preliminaries}\label{sec:Preliminary}
For ease of reference, \cref{table:notation} collects the recurring notation of the paper. Symbols are grouped by role; entries point to the section or equation where they are first introduced.

\begin{table}[!t]
\centering
\caption{Recurring notation used throughout the paper.}
\label{table:notation}
\renewcommand{\arraystretch}{1.15}
\begin{tabular}{@{}l@{\quad}l@{}}
\hline
\multicolumn{2}{@{}l}{\textit{Game graph and games}} \\
\hline
$G = (V_{\min}, V_{\max}, E)$ & game graph; $V = V_{\min} \uplus V_{\max}$, $n = \Abs{V}$, $m = \Abs{E}$ \\
$G^{\gamma, \bfr}$ & discounted-sum game on $G$ with discount $\gamma$ and rewards $\bfr$ \\
$G^{\text{mp}, \bfr}$ & mean-payoff game on $G$ with rewards $\bfr$ \\
\hline
\multicolumn{2}{@{}l}{\textit{Strategies}} \\
\hline
$\pi_{\min}, \pi_{\max}$ & positional strategies of the Min / Max player \\
$\Pi_{\min}, \Pi_{\max}$ & sets of all positional strategies \\
$\pi^{\text{bw}}_{\min}, \pi^{\text{bw}}_{\max}$ & Blackwell optimal strategies (\cref{sec:background}) \\
$I_{\min}, I_{\max}$ & vertices with improving switches for Min / Max \\
\hline
\multicolumn{2}{@{}l}{\textit{Rewards, perturbation, values}} \\
\hline
$\bfr_0 \in [-1, 1]^E$ & adversarial (unperturbed) reward vector \\
$\xi \in \R^E$ & noise vector; density bounded by $\phi$, tail parameter $\theta$ \\
$\bfr = \bfr_0 + \xi \in \R^E$ & perturbed reward vector (\cref{sec:smoothed-model}) \\
$\gamma \in [0, 1)$ & discount factor \\
$\mu^{\gamma, \bfr}(v), \, \mu^{\text{mp}, \bfr}(v)$ & discounted-sum / mean-payoff value at vertex $v$ \\
$\gamma_{\text{bw}}(\bfr)$ & Blackwell threshold (\cref{sec:background}) \\
\hline
\multicolumn{2}{@{}l}{\textit{Algorithms}} \\
\hline
$\algname{PI}(G, \gamma, \bfr)$ & policy iteration on $G^{\gamma, \bfr}$ (\cref{sec:background}) \\
$\algname{RePI}$ & restarting policy iteration for discounted-sum games (\cref{alg:restarting-tpi}) \\
$\algname{RePI}_{\text{mp}}$ & restarting policy iteration for mean-payoff games (\cref{alg:restarting-tpi-mp}) \\
$\gamma_k = 1 - e^{-k}$ & geometric threshold schedule used by \algname{RePI} / $\algname{RePI}_{\text{mp}}$ \\
\hline
\multicolumn{2}{@{}l}{\textit{Analysis quantities}} \\
\hline
$\Delta^{\gamma, \bfr}(v, v^\prime)$ & suboptimality gap of the edge $(v, v^\prime)$ \\
$K \triangleq 92 m^2 n^2 (\phi + 1)$ & coefficient in the tail bound, defined in \cref{eq:K-definition} \\
$M(x) \triangleq 1 + \tfrac{1}{\theta} \ln(m x)$ & tail bound function (\cref{lemma:general-bw-threshold-tail}) \\
$Y(\bfr)$ & truncated inverse gap (\cref{sec:expected-running-time-ergodic}) \\
$\epsilon \in (0, 1)$ & failure probability for high-probability guarantees \\
\hline
\end{tabular}
\end{table}

\begin{definition}[Game graphs]
A game graph $G = (V_{\min}, V_{\max}, E)$ is a directed finite graph consisting of two disjoint finite sets of vertices $V_{\min}$ and $V_{\max}$, with vertex set $V \triangleq V_{\min} \uplus V_{\max}$ and edges $E \subseteq V \times V$.
We denote by $n\triangleq \Abs {V}$ and $m \triangleq \Abs {E}$.
An edge $a = (v, v^\prime) \in E$ indicates the transition from vertex $v$ to $v^\prime$.
Without loss of generality, we consider that for every vertex $v \in V$, there exists at least one $v^\prime \in V$ such that $(v, v^\prime) \in E$.
\end{definition}

\paragraph{Dynamics and paths.}
At the beginning of the game, a token is placed at an initial vertex $v_0 \in V$.
At any time $t$~($ t \in \Z _{\ge 0}$):
if $v_t \in V_{\min}$, then the Min Player chooses an edge $a_t \in E$, where $a_t = (v_t, v_{t+1})$, and the token is moved from $v_t$ to $v_{t+1}$;
and similarly, if $v_t \in V_{\max}$, then the Max Player chooses an edge and moves the token.
A \emph{path} (or play) starting from an initial vertex $v_0$ is an infinite sequence of vertices $\rho = v_0, v_1, v_2, \dots$ such that $(v_t, v_{t+1}) \in E$ for all $t \in \Z_{\ge 0}$.

\paragraph{Objectives.} We consider \emph{discounted-sum} and \emph{mean-payoff} objectives, and in both cases
there is a reward vector $\bfr \in \R^E$ that assigns to each edge $a = (v, v^\prime) \in E$ a real number $\bfr(v, v^\prime)$ as its per-stage reward.
\begin{itemize}
    \item \emph{Discounted-sum objective:} For the discounted-sum objective, a real discount factor $\gamma \in [0, 1)$ is given, and the game is denoted by $G^{\gamma, \bfr}$.
For any path $\rho = v_0, v_1, v_2, \dots$, the discounted-sum payoff is the infinite discounted sum of the rewards of the edges that appear in the path:
\[
    u^{\gamma, \bfr}(\rho) = \sum_{t=0}^{\infty} \gamma^t \bfr(v_t, v_{t+1}) .
\]
Since  $\gamma \in [0,1)$ and the graph is finite (i.e., rewards are bounded), this infinite sum converges.
\item \emph{Mean-payoff objectives.} For mean-payoff objectives, the same graph and reward vector are considered, denoted by $G^{\text{mp}, \bfr}$, but the payoff function evaluates a path by its long-run average reward instead of discounted sum.
For any path $\rho = v_0, v_1, \dots$, the \emph{mean-payoff} is:
\[
    u^{\text{mp}, \bfr}(\rho) = \liminf_{T \to \infty} \frac{1}{T} \sum_{t=0}^{T-1} \bfr(v_t, v_{t+1}) .
\]
\end{itemize}

\paragraph{Strategies.} A \emph{strategy} (or policy) is a rule that a player uses to choose their next edge.
In the most general sense, a strategy can depend on the prefixes of paths.
However, it is known that for both discounted-sum and mean-payoff games with perfect information, memoryless and deterministic optimal strategies always exist~\citep{condon1992complexity,ehrenfeucht1979positional}.
We refer to strategies that are both memoryless and deterministic as \emph{positional} strategies, to which we restrict our attention without loss of generality.
Formally, we define a strategy for the Min Player as a mapping $\pi_{\min} : V_{\min} \to V$ such that $(v, \pi_{\min}(v)) \in E$ for all $v \in V_{\min}$.
Symmetrically, a strategy for the Max Player is a mapping $\pi_{\max} : V_{\max} \to V$ such that $(v, \pi_{\max}(v)) \in E$ for all $v \in V_{\max}$.
We denote the sets of all valid positional strategies for the Min and Max Players as $\Pi_{\min}$ and $\Pi_{\max}$, respectively.
Given an initial vertex $v \in V$ and a pair of positional strategies $(\pi_{\min}, \pi_{\max})$,
the game graph $G$ produces a unique path, which we denote by $\rho(v, \pi_{\min}, \pi_{\max})$.
Note that this path consists of a finite-path followed by a simple cycle infinitely repeated.

\paragraph{Values.}
It is well-known from the literature~\citep{condon1992complexity} that for any given $\gamma$ and $\bfr$, every vertex $v \in V$ has a unique discounted-sum \emph{value}, denoted by $\mu^{\gamma, \bfr}(v)$, which satisfies:
\[
    \mu^{\gamma, \bfr}(v) = \min_{\pi_{\min} \in \Pi_{\min}} \max_{\pi_{\max} \in \Pi_{\max}} u^{\gamma, \bfr}(\rho(v, \pi_{\min}, \pi_{\max})) = \max_{\pi_{\max} \in \Pi_{\max}} \min_{\pi_{\min} \in \Pi_{\min}} u^{\gamma, \bfr}(\rho(v, \pi_{\min}, \pi_{\max})) .
\]
Similarly, the mean-payoff game $G^{\text{mp}, \bfr}$ has a unique value for every $v \in V$, denoted by $\mu^{\text{mp}, \bfr}(v)$. %

\paragraph{Optimal strategies.} A strategy $\pi_{\min}^* \in \Pi_{\min}$ is an \emph{optimal min strategy} for the discounted-sum game $G^{\gamma, \bfr}$ if it guarantees that the discounted-sum payoff never exceeds the discounted-sum value, regardless of the opponent's strategy; i.e., for all starting vertices $v \in V$ and all $\pi_{\max} \in \Pi_{\max}$, we have $u^{\gamma, \bfr}(\rho(v, \pi_{\min}^*, \pi_{\max})) \le \mu^{\gamma, \bfr}(v)$.
The notion of optimal max strategies is defined symmetrically.
We say that $(\pi^*_{\min}, \pi^*_{\max})$ is a pair of optimal strategies for $G^{\gamma, \bfr}$ if $\pi^*_{\min}$ is an optimal min strategy and $\pi^*_{\max}$ is an optimal max strategy for the discounted-sum objective. The notion of optimal strategies for mean-payoff games is defined analogously.

\section{Smoothed model}\label{sec:smoothed-model}
The current best-known deterministic algorithms for discounted-sum and mean-payoff games have worst-case exponential complexity. However, in practice, instances of these games rarely exhibit worst-case behaviors.
Hence the goal of this work is to study the problems through the lens of \emph{smoothed analysis}~\citep{spielman2004smoothed},
which bridges the gap between average-case and worst-case complexity by analyzing the performance of an algorithm under slight random perturbations of adversarial inputs.
Throughout this section, we describe the smoothed model for discounted-sum games. The model for mean-payoff games is obtained similarly.

\paragraph{Input formalization.} We follow the standard notion of smoothed model where the adversarial input is normalized. In discounted-sum and mean-payoff games, the rewards can be shifted and scaled to ensure that they lie in the interval $[-1,1]$.
In our smoothed analysis framework, an adversary first specifies the graph $G = (V_{\min}, V_{\max}, E)$ and the discount factor $\gamma \in (0,1)$.
The adversary also chooses an initial, worst-case reward vector $\bfr_0 \in [-1, 1]^E$.
That is, the adversary can assign any initial reward to any edge, provided it is bounded within the interval $[-1, 1]$.
After the adversary fixes $G$, $\gamma$, and $\bfr_0$, a random continuous perturbation is applied to the rewards.
Specifically, we define a noise vector $\xi \in \R^E$.
For broad application, we do not restrict $\xi$ to a specific distribution.
Instead, we require the perturbation to satisfy a set of standard regularity properties, parameterized by a maximum density bound $\phi > 0$ and a tail decay parameter $\theta > 0$.

\begin{definition}[Smoothed model]\label{def:noise-model}
The random noise vector $\xi \in \R^E$ satisfies the following properties:
\begin{enumerate}
    \item \emph{Independence:} The coordinates $\xi_a$ for all $a \in E$ are mutually independent.
    \item \emph{Density bound $\phi$:} For all $a \in E$, the probability distribution of $\xi_a$ is absolutely continuous with respect to the Lebesgue measure. Furthermore, its probability density function is uniformly bounded by $\phi$.
    \item \emph{Tail bound $\theta$:} For all $a \in E$ and $t > 0$, the tail probability satisfies $\Pr [\Abs{\xi_a} > t] \le \exp\bigl(- \theta \, t \bigr)$.
\end{enumerate}
\end{definition}

\paragraph{Special cases.}
Our general smoothed model captures many standard continuous distributions: %
\begin{itemize}
    \item \emph{Gaussian noise:} For $\xi_a \sim \cN(0, \sigma^2)$, the conditions hold with $\phi = \frac{1}{\sigma \sqrt{2\pi}}$ and $\theta = \frac{\sqrt{2/\pi}}{\sigma}$.
    \item \emph{Laplacian noise:} For $\xi_a \sim \text{Laplace}(0, b)$, the conditions hold with $\phi = \frac{1}{2b}$ and $\theta = \frac{1}{b}$.
    \item \emph{Uniform noise:} For $\xi_a \sim \text{Uniform}(-b, b)$, the conditions hold with $\phi = \frac{1}{2b}$ and $\theta = \frac{1}{b}$.
\end{itemize}

\paragraph{Problem statement.}
Given the input consisting of a game graph $G$, discount factor $\gamma$, and the normalized reward vector $\bfr_0$,
the algorithmic problem is to compute optimal strategies and values for the \emph{smoothed game} $G^{\gamma, \bfr}$, where the final perturbed reward vector $\bfr \in \R^E$ is given by
$\bfr = \bfr_0 + \xi$.

\paragraph{Probability measure.} The random noise $\xi$ induces a unique probability measure over the space of perturbed reward vectors $\bfr$. We define the probability measure as $\Pr_{\xi}[\cdot]$ and the corresponding expectation under this measure as $\mathbb{E}_{\xi}[\cdot]$.
Since for any edge $a$ the adversarial base rewards are bounded such that $\bfr_0(a) \in [-1, 1]$, the tail concentration property guarantees that the magnitude of the final perturbed reward satisfies
\begin{equation}\label{eq:reward-tail-bound}
    \Pr_{\xi}[\Abs{\bfr(a)} > 1 + t] \le \exp\bigl(- \theta \, t \bigr).
\end{equation}

\paragraph{Polynomial-time smoothed complexity.} Finally, we define the smoothed complexity of an algorithm for this problem.
Let $\cT(G, \gamma, \bfr_0 + \xi)$ denote the total number of arithmetic operations of an algorithm on the perturbed game $G^{\gamma, \bfr}$. We evaluate the algorithm's performance under two standard types of smoothed guarantees:
\begin{enumerate}
    \item \emph{High probability guarantee:} An algorithm achieves \emph{smoothed polynomial time with high probability} if, for any failure probability $\epsilon \in (0, 1)$, its runtime is bounded by a polynomial in the game dimensions ($n$ and $m$), the noise parameters ($\phi$ and $1/\theta$), and the inverse failure probability ($1/\epsilon$), with probability at least $1 - \epsilon$. Formally, for any graph $G$, discount factor $\gamma$, initial reward vector $\bfr_0 \in [-1, 1]^E$:
    \[
        \Pr_{\xi} \Bigl[ \cT(G, \gamma, \bfr_0 + \xi) \le \text{poly} \Bigl(n, m, \phi, \frac{1}{\theta}, \frac{1}{\epsilon} \Bigr) \Bigr] \ge 1 - \epsilon .
    \]

    \item \emph{Expected runtime guarantee:} An algorithm achieves \emph{expected smoothed polynomial time} if its expected runtime over the random noise is bounded by a polynomial in $n, m, \phi$, and $1/\theta$. Formally, for any graph $G$, discount factor $\gamma$, and initial reward vector $\bfr_0 \in [-1, 1]^E$:
    \[
        \mathbb{E}_{\xi} \Bigl[ \cT(G, \gamma, \bfr_0 + \xi) \Bigr] \le \text{poly} \Bigl(n, m, \phi, \frac{1}{\theta} \Bigr) .
    \]
\end{enumerate}

\begin{remark}[High probability vs. expectation guarantee]\label{remark:high-probability-vs-expectation}
We clarify the two types of guarantees above in the context of smoothed complexity.
In the design of randomized algorithms, a high-probability bound is often considered a stronger guarantee
because the algorithm can draw fresh randomness and restart upon failure.
In smoothed analysis, however, the random perturbation is fixed exactly once, making restarts impossible.
Consequently, an expected polynomial runtime is a strictly stronger mathematical guarantee.
As specifically noted in \cite{spielman2009smoothed}, a high-probability polynomial bound follows from an expected polynomial bound via Markov's inequality.
The converse, however, does not hold: an algorithm may run efficiently on $1-\epsilon$ fraction of the perturbations, yet its runtime could grow exponentially on the remaining $\epsilon$ fraction of bad instances, resulting in exponential expected runtime.
Thus, bounding the expected runtime is the desired and stronger guarantee in the context of smoothed analysis.
Thus the main goal of the study is to provide above guarantees, especially expected runtime guarantee, for
discounted-sum and mean-payoff games.
\end{remark}

\section{Overview of Techniques}\label{sec:techniques-overview}
In this section, we first recall the previous result of~\cite{loff2024smoothed} and the techniques behind it, and then give an overview of our four technical ingredients and of the novelty of our analysis.

\paragraph{Previous result and technique.}
The previous result~\cite{loff2024smoothed} in the literature for polynomial-time smoothed complexity for games
with mean-payoff and discounted-sum objectives is restrictive in two ways:
First, the class of game graphs is restricted to ergodic game graphs, where all vertices have the same value for mean-payoff objectives.
Second, the result only provides smoothed high-probability guarantee and not the more general expected guarantee.
The main algorithm for the smoothed analysis is the policy-iteration algorithm, which is simpler for ergodic mean-payoff games,
as compared to general mean-payoff games.
While for general mean-payoff games, the policy-iteration algorithm for mean-payoff objectives relies on (i)~values and
(ii)~potential or bias; for ergodic games, since all vertices have the same value, the only relevant concept is
the bias.
The results of~\cite{loff2024smoothed} depend on the notion of bias-induced optimal strategies, which is only relevant for ergodic games.
The key mathematical techniques in~\cite{loff2024smoothed} along with bias-induced optimal strategies are (i)~the notion of condition number, and
(ii)~geometric interpretation of reward vectors and hyper-plane separation induced by strategy profiles.

\paragraph{Our techniques.}
Our analysis provides the general result removing all the above restrictions. We present our key technical contributions below.
\begin{enumerate}

\item \emph{Ingredient~1:} First, we provide a quantitative decomposition of discounted-sum values with respect to mean-payoff values (\cref{lemma:value-decomposition}) and then present a Lipschitz continuity result with respect to switches of edges from optimal strategies (\cref{lemma:bounded-improvement-function}).
These results hold for all game graphs, and are neither specific to the smoothed model nor a probabilistic guarantee.

\item \emph{Ingredient~2:} Second, we present a probabilistic separation between (i)~the value of a vertex and (ii)~the value obtained by playing a sub-optimal edge and then playing optimally~(\cref{lemma:value-separation-at-one}); which we call the separation of the sub-optimality gap.

\item \emph{Ingredient~3:} Finally, for general mean-payoff games, the policy iteration algorithm depends on both the values and the bias.
Our final key ingredient is a mean-payoff value separation bound, which shows that for every vertex pair either the value is the same or
there is a separation bound for the values (\cref{lemma:non-ergodic-value-separation}).

\end{enumerate}
Ingredient~1 is the foundational result which is the basis of all results.
We remark that with Ingredient~1 and Ingredient~2 we have an alternative proof for ergodic
games with high-probability guarantee. However, for brevity, we focus on general game graphs for this paper.
Ingredient~3 along with the other two ingredients is the main idea for generalization
from ergodic to all game graphs, for high-probability guarantee.
For the extension from high-probability to expected runtime guarantee we present another technical
contribution:
\begin{enumerate}
\setcounter{enumi}{3}
\item  \emph{Ingredient~4:} We consider the Blackwell-optimal discount factor $\gamma_{\text{\upshape bw}} $ such that optimal strategies for discounted-sum are also optimal for the mean-payoff objectives.  Let the inverse gap be $1/(1-\gamma_{\text{\upshape bw}})$.
The runtime complexity of the policy-iteration algorithm is dominated by the inverse gap, and in general
the expected value of this inverse gap may be exponential.
We therefore consider a truncated version of this inverse gap, which is the minimum of the inverse gap and the number of strategies, and show a polynomial
upper bound for the expectation of the truncated version (\cref{lemma:general-expected-truncated-bw}).
\end{enumerate}

\paragraph{Novelty.}
While some individual steps of our analysis are standard in isolation, our results rely on two new technical insights.
First, removing the ergodicity assumption requires overcoming a difficulty that does not appear in~\cite{loff2024smoothed}: their analysis relies on the fact that all vertices have the same mean-payoff value, which does not hold in general game graphs.
To overcome this difficulty, we prove the mean-payoff value separation of Ingredient~3 (\cref{lemma:non-ergodic-value-separation}) with a new argument based on the reachability of recurrent cycles under Blackwell-optimal strategies (\cref{lemma:reachability-separation}).
Together with the separation of the suboptimality gap of Ingredient~2, this separation allows us to control the Blackwell threshold without the ergodicity assumption (\cref{lemma:general-bw-threshold-tail}).
Second, our truncation-and-fallback argument extends the high-probability guarantee to an expected runtime guarantee: the fallback of \algname{RePI} yields a deterministic bound on its number of iterations (\cref{lemma:general-repi-total-iteration-bound}), and hence it suffices to bound the expectation of the truncated inverse gap of Ingredient~4 (\cref{lemma:general-expected-truncated-bw}).
These two insights, combined with standard arguments, constitute the main novelty of our analysis.

Our ingredients tightly relate discounted-sum and mean-payoff games and hence we view mean-payoff
games through the lens of discounted-sum games and Blackwell-optimality.
Hence our main analysis focuses on discounted-sum games with arbitrary discount factors.
We start with the formal background on Blackwell-optimality and policy iteration algorithm for discounted-sum games.

\section{Background on Blackwell Optimality and Policy Iteration}\label{sec:background}
In this section, we provide some useful theoretical and algorithmic tools from the literature.
We first recall the concept of Blackwell optimality, which is a fundamental result in the theory of dynamic programming and games that establishes the existence of strategies, namely \emph{Blackwell optimal strategies}, that are uniformly optimal for all discount factors sufficiently close to $1$~\citep{blackwell1962discrete,bewley1976asymptotic}.
This property provides an important bridge between discounted-sum and mean-payoff games.
We then describe the classic policy iteration algorithm for discounted-sum games to compute the optimal strategies and game values.

\paragraph{Blackwell threshold $\gamma_{\text{bw}}(\bfr)$ and Blackwell-optimal strategies.}
Formally, given a specified reward vector $\bfr$, a strategy $\pi_{\min}^{\text{bw}} \in \Pi_{\min}$ is a \emph{Blackwell optimal min strategy} if there exists a threshold discount factor $\gamma_0 \in [0, 1)$ such that for all $\gamma \in [\gamma_0, 1)$, the strategy $\pi_{\min}^{\text{bw}}$ is an optimal min strategy for the discounted-sum game $G^{\gamma, \bfr}$.
Symmetrically, a strategy $\pi_{\max}^{\text{bw}} \in \Pi_{\max}$ is a \emph{Blackwell optimal max strategy} with respect to $\bfr$ if there exists a threshold discount factor $\gamma_0 \in [0, 1)$ such that for all $\gamma \in [\gamma_0, 1)$, $\pi_{\max}^{\text{bw}}$ is an optimal max strategy for $G^{\gamma, \bfr}$.
We define the \emph{Blackwell threshold} with respect to $\bfr$, denoted by $\gamma_{\text{bw}}(\bfr) \in [0, 1)$, as the infimum over all such valid thresholds.

\paragraph{Asymptotic relationship between discounted-sum and mean-payoff values.} By definition, for any $\gamma \in [\gamma_{\text{bw}}(\bfr), 1)$, a pair of Blackwell optimal strategies $(\pi^{\text{bw}}_{\min}, \pi^{\text{bw}}_{\max})$ is guaranteed to be optimal for the discounted-sum game $G^{\gamma, \bfr}$.
Furthermore, these strategies are of particular theoretical importance because they simultaneously act as optimal strategies for the corresponding mean-payoff game $G^{\text{mp}, \bfr}$, providing a crucial bridge between discounted-sum and mean-payoff games~\citep{bewley1976asymptotic}.
Formally, for any reward vector $\bfr \in \R^E$ and vertex $v \in V$, the normalized discounted-sum value converges exactly to the mean-payoff value, i.e.,
\begin{equation}\label{eq:blackwell-optimality}
    \mu^{\text{mp}, \bfr}(v) = \lim _{\gamma \uparrow 1}\ (1-\gamma) \mu^{\gamma, \bfr} (v).
\end{equation}

\paragraph{Policy iteration.} To compute the optimal strategies and values in the discounted-sum game $G^{\gamma, \bfr}$, we describe the classic policy iteration algorithm, formally presented in \cref{alg:policy-iteration}.
The algorithm operates from the perspective of one of the players; without loss of generality, we formulate it for the Min Player.
Starting from an arbitrary initial strategy, the algorithm iteratively refines the strategy $\pi_{\min}$ through alternating phases of policy evaluation and policy improvement.
\begin{itemize}
\item \emph{Policy evaluation phase:} The algorithm fixes the Min Player's current strategy $\pi_{\min}$.
With $\pi_{\min}$ fixed, the game reduces to a graph with only choices for the Max Player (aka deterministic MDP).
Graphs (or deterministic MDPs) with discounted-sum objectives can be solved in strongly polynomial time~\citep{post2015simplex}.
The resulting vector $\mu$ represents the expected discounted-sum payoff under the current strategies.
\item \emph{Policy improvement phase:} The algorithm searches for \emph{improving switches}---local edge changes that strictly decrease the current evaluated payoff for the Min Player.
If no improving switch exists for any vertex in $V_{\min}$, the algorithm terminates.
Otherwise, the algorithm updates $\pi_{\min}$ by simultaneously applying the best improving switch at every possible vertex.
This specific greedy update mechanism is known as Howard's all-switch rule.
\end{itemize}
In other words, the policy-iteration algorithm iterates over strategies of a player by greedily switching over local improvements.

\begin{algorithm}[htb]
\caption{Policy Iteration: $\algname{PI} (G, \gamma, \bfr)$}\label{alg:policy-iteration}
\begin{algorithmic}[1]
    \State \textbf{Input:} Graph $G = (V_{\min}, V_{\max}, E)$, discount factor $\gamma \in [0,1)$, reward vector $\bfr \in \R^E$.
    \State \textbf{Output:} Optimal strategy pair $(\pi_{\min}^*, \pi_{\max}^*)$ and value vector $\mu^{\gamma, \bfr}$ for the game $G^{\gamma, \bfr}$.
    \State Initialize an arbitrary positional strategy $\pi_{\min} \in \Pi_{\min}$ for the Min Player.
    \Loop
        \State \textbf{// Policy Evaluation Phase}
        \State Compute the value vector $\mu \in \R^V$ and Max Player's best-response $\pi_{\max} \in \Pi_{\max}$ satisfying:
        \Statex \qquad $\mu(v) = \bfr(v, \pi_{\min}(v)) + \gamma \mu(\pi_{\min}(v))$ for all $v \in V_{\min}$,
        \Statex \qquad $\mu(v) = \max_{(v, v^\prime) \in E} \bigl( \bfr(v, v^\prime) + \gamma \mu(v^\prime) \bigr)$ for all $v \in V_{\max}$.
        \State \textbf{// Policy Improvement Phase}
        \State Let $I = \bigl\{ v \in V_{\min} \mid \exists (v, v^\prime) \in E \text{ such that } \bfr(v, v^\prime) + \gamma \mu(v^\prime) < \mu(v) \bigr\}$.
        \If{$I = \emptyset$}
            \State \Return $(\pi_{\min}, \pi_{\max}, \mu)$
        \EndIf
        \State Update $\pi_{\min}$ by switching the edge for \textbf{all} vertices $v \in I$ to minimize the value:
        \Statex \qquad $\pi_{\min}(v) \gets \arg\min_{(v, v^\prime) \in E} \bigl( \bfr(v, v^\prime) + \gamma \mu(v^\prime) \bigr)$.
    \EndLoop
\end{algorithmic}
\end{algorithm}

\paragraph{Correctness and complexity.}
The key correctness proof establishes that when no local-improvement switches are possible, then the obtained strategy is globally optimal~\cite{condon1992complexity}.
Since the total number of strategies is at most $n^n$ and each update strictly decreases the value vector,
the algorithm is guaranteed to converge in finite time~\cite{filar2012competitive}.
Specifically, for discounted-sum games, the upper bound on the number of iterations depends on the discount factor~$\gamma$.
The worst-case strongly polynomial bound for a constant discount factor from the literature, which is the basic result we use in subsequent analysis,  is as follows:

\begin{proposition}[{\cite[Theorem 7.5]{hansen2013strategy}}]\label{prop:pi-worst-case-bound}
Let $\gamma \in [0, 1)$. For any game $G^{\gamma, \bfr}$ with $n$ vertices and $m$ edges, the number of iterations that \algname{PI} algorithm (\cref{alg:policy-iteration}) requires is at most $6 \frac{m}{1-\gamma} \ln\Bigl(\frac{n}{1-\gamma}\Bigr)$.
\end{proposition}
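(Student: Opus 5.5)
The plan is to follow the strongly-polynomial strategy-iteration analysis of Hansen, Miltersen and Zwick, specialized to Howard's all-switch rule for the Min Player against a best-responding Max Player. The argument has two parts. First, a contraction step: the values converge geometrically to the optimum. Second, an elimination step: after enough iterations, some edge that is currently used disappears from every future strategy. Since there are at most $m$ edges, combining the two gives the bound. Throughout, for a Min strategy $\pi$ we write $\mu_\pi$ for its evaluated value vector (the best response of Max), and $\mu^*=\mu^{\gamma,\bfr}$ for the optimal value vector.

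\emph{Contraction step.} Let $T$ be the game Bellman operator. Howard's update satisfies $\mu_{\pi_{k+1}}\le T\mu_{\pi_k}\le\mu_{\pi_k}$, using monotonicity and the fact that a Min strategy that is greedy with respect to $\mu_{\pi_k}$ gives a value no larger than $T\mu_{\pi_k}$. Because $T$ is a $\gamma$-contraction with fixed point $\mu^*$, we get $\|\mu_{\pi_{k+1}}-\mu^*\|_\infty\le\gamma\|\mu_{\pi_k}-\mu^*\|_\infty$. For the potential we use the total suboptimality $\sum_v(\mu_{\pi_k}(v)-\mu^*(v))$, or equivalently a sum of reduced costs $\Delta(v,\pi_k(v))\ge 0$ measured against $\mu^*$. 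Following Hansen et al., we bound this sum by $\frac{n}{1-\gamma}$ times the largest single reduced cost, and the contraction then transfers to it. The reduced-cost identity for a fixed pair of strategies is the tool here: the value difference equals a discounted sum of reduced costs along the induced path.

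\emph{Elimination step.} Fix an iteration $k$. Let $a=(v,\pi_k(v))$ be the edge used by $\pi_k$ that has the maximal reduced cost $\Delta_a$. By the bound above, $\mu_{\pi_k}-\mu^*\le \frac{n}{1-\gamma}\Delta_a$ in every coordinate. After $L$ further iterations, the contraction gives $\|\mu_{\pi_{k+L}}-\mu^*\|_\infty\le\gamma^L\frac{n}{1-\gamma}\Delta_a$. On the other hand, any strategy that uses $a$ has value at $v$ at least $\mu^*(v)+\Delta_a$, because its induced path starts with $a$ and every other reduced cost is nonnegative. So once $\gamma^L\frac{n}{1-\gamma}<1$, that is, once $L> \frac{1}{1-\gamma}\ln\frac{n}{1-\gamma}$, the edge $a$ never appears again. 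The same holds for every later strategy, since values only decrease.

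Each such block of roughly $\frac{1}{1-\gamma}\ln\frac{n}{1-\gamma}$ iterations therefore removes at least one edge permanently, and there are at most $m$ edges. This gives $O\bigl(\frac{m}{1-\gamma}\ln\frac{n}{1-\gamma}\bigr)$ iterations, and tracking the constants carefully yields the factor $6$.

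The main obstacle is the two-player setting. There, the reduced-cost inequality $\mu_\pi-\mu^*\le\frac{n}{1-\gamma}\max\Delta$ must be proved while Max best-responds, and the best response changes from one strategy to the next. To handle this, we compare against the Max best response to $\pi_k$, for which $\mu^*$ is only an upper or lower bound, and we use one-sided reduced costs. This is exactly the content of \cite[Theorem 7.5]{hansen2013strategy}, which we rely on rather than reprove.
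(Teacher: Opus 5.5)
Your proposal matches the paper, which does not prove this statement itself and simply imports the bound from \cite[Theorem 7.5]{hansen2013strategy}. Your contraction-plus-elimination sketch is the standard argument behind that theorem. One step needs a repair you only gesture at. In the two-player setting, Max's reduced costs measured against $\mu^*$ are nonpositive, so the bound $\mu_{\pi'}(v)\ge\mu^*(v)+\Delta_a$ for a later Min strategy $\pi'$ using $a$ does not follow along the best-response path, as you argue. Instead, evaluate $\pi'$ against an optimal Max strategy: its edges have zero reduced cost, and it gives Max at most what the best response gives.
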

Each iteration of \algname{PI} performs a single policy evaluation, which requires at most $\cO (n^3 m^2 \log^2 n)$ arithmetic operations~\cite{post2015simplex}.
While this algorithm performs well in practice~\cite{dasdan2004experimental,kretinsky2022comparison}, the worst-case bound of \cref{prop:pi-worst-case-bound} is exponential as $\gamma$ approaches $1$ (e.g., $\gamma=1 -2^{-n}$).

\section{Mathematical Ingredients~1-3}\label{sec:math-properties}
In this section, we present the details related to the first three ingredients mentioned in \cref{sec:techniques-overview}.

\subsection{Ingredient~1: Value Decomposition and Lipschitz Continuity}\label{sec:ingredient-1}
In this subsection, we present conceptual ideas behind Ingredient~1 from \cref{sec:techniques-overview}.
\begin{itemize}
\item We first establish a value decomposition for discounted-sum games under Blackwell optimality (\cref{lemma:value-decomposition}), which provides a useful connection between discounted-sum and mean-payoff games.
As compared to \Cref{eq:blackwell-optimality} which only provides convergence-in-the-limit guarantee, this result provides a
quantitative bound on the difference of the discounted-sum and mean-payoff values.
While a general characterization of discounted-sum and mean-payoff value is provided in~\cite{bewley1976asymptotic} for concurrent games (with simultaneous interaction), we present a quantitative characterization albeit for the simpler class of turn-based games.

\item We then define the suboptimality gap, which quantifies the suboptimality incurred by switching a single edge in the optimal strategy.
We show that whenever the two endpoints of an edge share the same mean-payoff value, the suboptimality gap is Lipschitz continuous with respect to the discount factor $\gamma$ over the Blackwell threshold (\cref{lemma:bounded-improvement-function})
\end{itemize}

\begin{lemma}[Value decomposition under Blackwell optimality]\label{lemma:value-decomposition}
Consider a game graph $G$ with reward vector $\bfr \in \R^E$.
Let $v \in V$ be a vertex.
\begin{enumerate}
\item \emph{Decomposition:}
For all $\gamma \in (\gamma_{\text{bw}}(\bfr), 1)$, the discounted-sum value admits the following decomposition, where $w^{\bfr}_v(\gamma)$ is a rational function of $\gamma$:
\[
    \mu^{\gamma, \bfr}(v) = \frac{\mu^{\text{mp}, \bfr}(v)}{1-\gamma} + w^{\bfr}_v(\gamma) .
\]

\item \emph{Magnitude and derivative:}
For all $\gamma \in (\gamma_{\text{bw}}(\bfr), 1)$, the magnitude and the derivative of $w^{\bfr}_v(\gamma)$ are bounded by:
\[
    \Abs {w^{\bfr}_v(\gamma)} \le 4n \Norm{\bfr}_\infty \quad \text{and} \quad \Abs[\Big]{(\frac{d}{d\gamma} w^{\bfr}_v)(\gamma)} \le 7n^2 \Norm{\bfr}_\infty .
\]
\end{enumerate}
\end{lemma}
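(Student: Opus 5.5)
The plan is to fix a pair of Blackwell optimal strategies $(\pi^{\text{bw}}_{\min}, \pi^{\text{bw}}_{\max})$. For every $\gamma \in (\gamma_{\text{bw}}(\bfr), 1)$ this pair is optimal for $G^{\gamma, \bfr}$, so $\mu^{\gamma, \bfr}(v) = u^{\gamma, \bfr}(\rho(v, \pi^{\text{bw}}_{\min}, \pi^{\text{bw}}_{\max}))$. Let $\rho = \rho(v, \pi^{\text{bw}}_{\min}, \pi^{\text{bw}}_{\max})$. This play is a lasso: a simple path of $k$ edges, with rewards $a_0,\dots,a_{k-1}$, followed by a simple cycle of $\ell$ edges, with rewards $c_0,\dots,c_{\ell-1}$, repeated forever. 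Here $k + \ell \le n$ and $\ell \ge 1$.

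\emph{Decomposition.} Summing the geometric series gives the explicit form
\[
\mu^{\gamma,\bfr}(v) = \sum_{t<k} \gamma^t a_t + \frac{\gamma^k}{1-\gamma^\ell}\sum_{j<\ell}\gamma^j c_j .
\]
This is a rational function of $\gamma$. Since the same strategies are also optimal for the mean-payoff game, the mean-payoff value of $v$ is the cycle average, $\mu^{\text{mp},\bfr}(v) = \bar c \triangleq \frac{1}{\ell}\sum_j c_j$; this also follows from \cref{eq:blackwell-optimality}. Define $w^{\bfr}_v(\gamma) \triangleq \mu^{\gamma,\bfr}(v) - \bar c/(1-\gamma)$. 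This is rational, which settles part~1.

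\emph{Magnitude.} Write $c_j = \bar c + d_j$ with $\sum_j d_j = 0$ and $|d_j| \le 2\Norm{\bfr}_\infty$. Using $\sum_{j<\ell}\gamma^j = (1-\gamma^\ell)/(1-\gamma)$, the cycle part splits as
\[
\frac{\gamma^k \bar c}{1-\gamma} + \frac{\gamma^k}{1-\gamma^\ell}\sum_j \gamma^j d_j .
\]
Hence
\[
w^{\bfr}_v(\gamma) = \sum_{t<k}\gamma^t a_t \;-\; \bar c\,\frac{1-\gamma^k}{1-\gamma} \;+\; \gamma^k\,\frac{\sum_j \gamma^j d_j}{1-\gamma^\ell}.
\]
The first two terms are each bounded by $k\Norm{\bfr}_\infty$, because $(1-\gamma^k)/(1-\gamma) = \sum_{t<k}\gamma^t \le k$. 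For the third term, $\sum_j d_j = 0$ lets us write $\sum_j \gamma^j d_j = -\sum_j (1-\gamma^j) d_j$. Then $(1-\gamma^j)/(1-\gamma^\ell) \le 1$ for $j<\ell$, and the third term is at most $2\ell\Norm{\bfr}_\infty$. Summing, $|w| \le (2k+2\ell)\Norm{\bfr}_\infty \le 4n\Norm{\bfr}_\infty$, with slack.

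\emph{Derivative.} Differentiate the three terms separately. The first term has derivative at most $\sum_{t<k} t\Norm{\bfr}_\infty \le k^2\Norm{\bfr}_\infty$. The second term's derivative is $\bar c\sum_{t<k} t\gamma^{t-1}$, also at most $k^2\Norm{\bfr}_\infty$. The third term is $\gamma^k \cdot g(\gamma)$ with $g(\gamma) = -\sum_j \frac{1-\gamma^j}{1-\gamma^\ell} d_j$. Write $\frac{1-\gamma^j}{1-\gamma^\ell} = S_j(\gamma)/S_\ell(\gamma)$, where $S_p(\gamma) = \sum_{i<p}\gamma^i$. This is a ratio of polynomials with no pole on $[0,1]$.

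The main obstacle is bounding $\frac{d}{d\gamma}\bigl(S_j/S_\ell\bigr)$ uniformly on $(0,1)$, since $S_\ell$ may be as small as $1$ near $\gamma = 0$. I would use $(S_j/S_\ell)' = (S_j' S_\ell - S_j S_\ell')/S_\ell^2$. We have $|S_j'| \le j^2/2$ and $|S_\ell'| \le \ell^2/2$. Also $S_j \le S_\ell$, since $j<\ell$, and $S_\ell \ge 1$. Together these give
\[
\bigl|(S_j/S_\ell)'\bigr| \le \frac{S_j'}{S_\ell} + \frac{S_\ell'}{S_\ell} \le \ell^2 .
\]
A sharper ratio bound exploiting $S_p'\le pS_p$ would give roughly $2\ell$, but $\ell^2$ suffices. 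The product rule then gives $|(\gamma^k g)'| \le k\gamma^{k-1}|g| + |g'| \le 2k\ell\Norm{\bfr}_\infty + 2\ell^3\Norm{\bfr}_\infty$.

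That is cubic, so I would tighten it with the sharper ratio bound. Since $S_j'/S_j \le j$ and $S_\ell'/S_\ell \le \ell$, we get $|(S_j/S_\ell)'| \le (j+\ell)\,S_j/S_\ell \le 2\ell$. The third term's derivative is then at most $2k\ell + 4\ell^2$ times $\Norm{\bfr}_\infty$. Collecting everything yields at most $2k^2 + 2k\ell + 4\ell^2 \le 7n^2$ times $\Norm{\bfr}_\infty$, which is the claimed bound.
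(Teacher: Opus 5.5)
Your proposal is correct and takes essentially the same route as the paper. It fixes a Blackwell optimal pair and expands the lasso, then uses $\sum_j d_j = 0$ to rewrite the cycle part as $-\gamma^k \sum_j (S_j/S_\ell)\, d_j$, and bounds the derivative of $S_j/S_\ell$ by $O(\ell)$ via $S_p' \le p\, S_p$. You also use the joint bound $k + \ell \le n$ in place of the paper's separate bounds $t, c \le n$, which gives slightly tighter constants than the paper needs.
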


\begin{proof}[Proof overview]
Let $\mu \triangleq \mu^{\text{mp}, \bfr}(v)$.
Fix a Blackwell optimal strategy pair, under which the path from $v$ consists of a transient path of length $t \le n$ followed by a cycle of length $c \le n$.
Let $r_k$ denote the reward at step $k$. Then $\mu$ is the average cycle reward, so $\sum_{j=0}^{c-1}(r_{t+j}-\mu)=0$.
Writing each reward as $(r_k-\mu)+\mu$ in
$
    \mu^{\gamma, \bfr}(v)
    = \sum_{j=0}^{t-1} r_j \gamma^j
    + \frac{\gamma^t}{1-\gamma^c}\sum_{j=0}^{c-1}r_{t+j}\gamma^j
$
shows that the $\mu$ terms contribute $\mu/(1-\gamma)$.
The cycle cancellation removes this singular factor from the remaining terms, which define the rational function $w^{\bfr}_v(\gamma)$.
For magnitude and derivative, $w^{\bfr}_v(\gamma)$ contains at most $2n$ deviations, each bounded by $2\Norm{\bfr}_\infty$ and multiplied by a coefficient in $[0,1]$, giving $\Abs{w^{\bfr}_v(\gamma)} \le 4n\Norm{\bfr}_\infty$.
Differentiating these coefficients bounds the transient contribution by $n^2\Norm{\bfr}_\infty$ and the cycle contribution by $6n^2\Norm{\bfr}_\infty$, yielding the derivative bound.
\end{proof}

\begin{proof}
Let $M = \Norm{\bfr}_\infty$ and $\mu \triangleq \mu^{\text{mp}, \bfr}(v)$.
Since $\gamma \in (\gamma_{\text{bw}}(\bfr), 1)$, we consider a Blackwell optimal pair of positional strategies $(\pi^{\text{bw}}_{\min}, \pi^{\text{bw}}_{\max})$ for $G^{\gamma, \bfr}$.
Observe that the path from vertex $v$ under the strategy profile $(\pi^{\text{bw}}_{\min}, \pi^{\text{bw}}_{\max})$ consists of a finite path of length $t \le n$ and a recurrent cycle of length $c \le n$.
Let $r_k$ denote the reward at step $k$.
Note that the mean-payoff is the cycle's average reward, i.e., $\mu = \frac{1}{c} \sum_{j=0}^{c-1} r_{t+j}$. Therefore, we have $\Abs{\mu} \le M$ and $\sum_{j=0}^{c-1} (r_{t+j} - \mu) = 0$.
We now present the two items of the proof.

\smallskip\noindent\emph{Value decomposition.}
The value vector expands as follows:
\begin{align*}
    \mu^{\gamma, \bfr}(v) &= \sum_{j=0}^{t-1} r_j \gamma^j + \frac{\gamma^t}{1-\gamma^c} \sum_{j=0}^{c-1} r_{t+j} \gamma^j \\
    &= \sum_{j=0}^{t-1} (r_j - \mu) \gamma^j + \frac{\gamma^t}{1-\gamma^c} \sum_{j=0}^{c-1} (r_{t+j} - \mu) \gamma^j + \mu \sum_{j=0}^{t-1} \gamma^j + \frac{\gamma^t \mu}{1-\gamma^c} \sum_{j=0}^{c-1} \gamma^j \\
    &= \sum_{j=0}^{t-1} (r_j - \mu) \gamma^j + \frac{\gamma^t}{1-\gamma^c} \sum_{j=0}^{c-1} (r_{t+j} - \mu) (\gamma^j - 1) + \mu \Bigl( \frac{1-\gamma^t}{1-\gamma} + \frac{\gamma^t}{1-\gamma} \Bigr) \\
    &= \underbrace{ \sum_{j=0}^{t-1} (r_j - \mu) \gamma^j + \sum_{j=1}^{c-1} \biggl(\frac{\sum_{i=0}^{j-1} \gamma^i}{\sum_{i=0}^{c-1} \gamma^i}\biggr) (\mu - r_{t+j}) \gamma^t }_{\triangleq w^{\bfr}_v(\gamma)} + \frac{\mu}{1-\gamma} ,
\end{align*}
where the first equality is due to decomposing the path into a transient path and a cycle, the second equality rearranges the terms to separate the mean-payoff component,
the third equality introduces $-1$ into the cycle sum because $\sum_{j=0}^{c-1} (r_{t+j} - \mu) = 0$, and the final equality factors out $1-\gamma$.
We define $f_j(\gamma) \triangleq \sum_{i=0}^{j-1} \gamma^i$. Then, we rewrite $w^{\bfr}_v(\gamma)$ as:
\begin{align*}
    w^{\bfr}_v(\gamma) &= \sum_{j=0}^{t-1} (r_j - \mu) \gamma^j + \frac{1}{f_c(\gamma)}\sum_{j=1}^{c-1} (\mu - r_{t+j}) \gamma^t f_j(\gamma)\\
    &= \frac{\sum_{j=0}^{t-1} (r_j - \mu) \gamma^j f_c(\gamma) + \sum_{j=1}^{c-1} (\mu - r_{t+j}) \gamma^t f_j(\gamma)}{f_c(\gamma)} .
\end{align*}
Therefore, $w^{\bfr}_v(\gamma)$ is a rational function of $\gamma$, which proves the first item of the statement. We now bound its magnitude and derivative.

\smallskip\noindent
\emph{Magnitude bound.}
For magnitude, we get
\begin{align*}
    \Abs{w^{\bfr}_v(\gamma)} &= \Abs[\bigg]{\sum_{j=0}^{t-1} (r_j - \mu) \gamma^j + \sum_{j=1}^{c-1} (\mu - r_{t+j}) \gamma^t \frac{f_j(\gamma)}{f_c(\gamma)}}\\
    &\le \sum_{j=0}^{t-1} \Abs{r_j - \mu} \gamma^j + \sum_{j=1}^{c-1} \Abs{\mu - r_{t+j}} \gamma^t \frac{f_j(\gamma)}{f_c(\gamma)}\\
    &\le 2M \sum_{j=0}^{t-1} \gamma^j + 2M \sum_{j=1}^{c-1} \gamma^t \frac{f_j(\gamma)}{f_c(\gamma)}\\
    &\le 2M \sum_{j=0}^{t-1} 1 + 2M \sum_{j=1}^{c-1} \frac{f_j(\gamma)}{f_c(\gamma)}\\
    &\le 4Mn = 4n\Norm{\bfr}_\infty ,
\end{align*}
where the first inequality is due to the triangle inequality,
the second inequality is due to the bounds $\Abs{r_j - \mu} \le 2M$ and $\Abs{\mu - r_{t+j}} \le 2M$,
the third inequality is due to $\gamma^j, \gamma^t \le 1$ for all $\gamma \in [0, 1]$, the fourth inequality is due to $f_j(\gamma) \le f_c(\gamma)$ for all $j < c$, and the final inequality is due to $t, c \le n$.

\smallskip\noindent
\emph{Derivative bound.}
For the derivative, we first compute $\Abs[\Big]{\frac{d}{d\gamma} \frac{f_j(\gamma)}{f_c(\gamma)}}$:
\begin{align*}
    \Abs[\Big]{\frac{d}{d\gamma} \frac{f_j(\gamma)}{f_c(\gamma)}} &= \Abs[\bigg]{\frac{(\frac{d}{d\gamma}f_j)(\gamma) f_c(\gamma) - f_j(\gamma) (\frac{d}{d\gamma}f_c)(\gamma)}{f_c(\gamma)^2}}\\
    &= \Abs[\bigg]{\frac{(\frac{d}{d\gamma}f_j)(\gamma)}{f_c(\gamma)} - \frac{f_j(\gamma)}{f_c(\gamma)} \cdot \frac{(\frac{d}{d\gamma}f_c)(\gamma)}{f_c(\gamma)}}\\
    &\le \Abs[\bigg]{\frac{(\frac{d}{d\gamma}f_j)(\gamma)}{f_c(\gamma)}} + \Abs[\bigg]{\frac{(\frac{d}{d\gamma}f_c)(\gamma)}{f_c(\gamma)}}\\
    &\le 2n ,
\end{align*}
where the first equality is due to the quotient rule,
the second equality is due to rearranging the terms,
the first inequality is due to the triangle inequality and the fact that $\Abs{\frac{f_j(\gamma)}{f_c(\gamma)}} \le 1$,
and the second inequality is due to $(\frac{d}{d\gamma}f_j)(\gamma) = \sum_{i=1}^{j-1} i \gamma^{i-1} \le \sum_{i=1}^{c-1} i \gamma^{i-1} = (\frac{d}{d\gamma}f_c)(\gamma) \le c \sum_{i=1}^{c-1} \gamma^{i-1} \le n f_c(\gamma)$. Now, we bound the derivative of $w^{\bfr}_v(\gamma)$ as follows:
\begin{align*}
    \Abs[\Big]{\frac{d}{d\gamma} w^{\bfr}_v(\gamma)}
    &\le \sum_{j=0}^{t-1} \Abs{r_j - \mu} \Abs[\Big]{\frac{d}{d\gamma} \gamma^j} + \sum_{j=1}^{c-1} \Abs{\mu - r_{t+j}} \Abs[\Big]{\frac{d}{d\gamma} \Bigl(\gamma^t \frac{f_j(\gamma)}{f_c(\gamma)}\Bigr)}\\
    &\le 2M \sum_{j=0}^{t-1} \Abs[\Big]{\frac{d}{d\gamma} \gamma^j} + 2M \sum_{j=1}^{c-1} \Abs[\Big]{\frac{d}{d\gamma} \Bigl(\gamma^t \frac{f_j(\gamma)}{f_c(\gamma)}\Bigr)}\\
    &\le 2M \sum_{j=1}^{t-1} j + 2M \sum_{j=1}^{c-1} \Abs[\Big]{\frac{d}{d\gamma} \Bigl(\gamma^t \frac{f_j(\gamma)}{f_c(\gamma)}\Bigr)}\\
    &\le Mn^2 + 2M \sum_{j=1}^{c-1} 3n\\
    &\le 7n^2 M = 7n^2 \Norm{\bfr}_\infty ,
\end{align*}
where the first inequality is due to the triangle inequality,
the second inequality is due to $\Abs{r_j - \mu} \le 2M$ and $\Abs{\mu - r_{t+j}} \le 2M$,
the third inequality is due to $\Abs[\Big]{\frac{d}{d\gamma} \gamma^j} = j \gamma^{j-1} \le j$ for all $\gamma \in [0, 1]$,
the fourth inequality is due to $t \le n$ and $\Abs[\Big]{\frac{d}{d\gamma} \frac{f_j(\gamma)}{f_c(\gamma)}} \le 2n$ as shown above,
and the final inequality is due to $c \le n$.

Hence all the desired items are proved and the result follows.
\end{proof}
\paragraph{Suboptimality gap $\Delta^{\gamma, \bfr}$.}
We define the \emph{suboptimality gap} for the discounted-sum game $G^{\gamma, \bfr}$ as follows. For all edges $(v, v^\prime) \in E$,
\[
\Delta^{\gamma, \bfr} (v, v^\prime) = \bfr (v, v^\prime) + \gamma \mu^{\gamma, \bfr}(v^\prime) - \mu^{\gamma, \bfr} (v) ;
\]
i.e., the quantity $\bfr (v, v^\prime) + \gamma \mu^{\gamma, \bfr}(v^\prime)$ is
the value for switching via the edge, and $\mu^{\gamma, \bfr} (v)$ is the optimal value; and hence the function represents
the relative value for switching via the edge.

\paragraph{Main property.}
For any reward vector $\bfr \in \R ^E$ such that $\mu^{\text{mp}, \bfr}(v) = \mu^{\text{mp}, \bfr}(v^\prime)$, we show that the suboptimality gap is Lipschitz continuous in $\gamma$ over the Blackwell threshold.
In contrast, the value vector $\mu^{\gamma, \bfr} (v)$ may not be Lipschitz continuous in $\gamma$ because the singular term $\frac{\mu^{\text{mp}, \bfr}(v)}{1-\gamma}$ in the value decomposition (\Cref{lemma:value-decomposition}) diverges as $\gamma$ approaches $1$.

\begin{lemma}[Lipschitz continuity]\label{lemma:bounded-improvement-function}
Let $\bfr \in \R^E$ be any reward vector.
For any edge $a = (v, v^\prime) \in E$ where $\mu^{\text{mp}, \bfr}(v) = \mu^{\text{mp}, \bfr}(v^\prime)$, the map $\gamma \mapsto \Delta ^{\gamma, \bfr}(v, v^\prime)$ is $18 n^2 \Norm {\bfr} _{\infty}$-Lipschitz continuous over the interval $(\gamma_{\text{bw}} (\bfr), 1)$.
\end{lemma}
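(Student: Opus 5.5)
We substitute the value decomposition of \cref{lemma:value-decomposition} into the definition of the suboptimality gap. The singular terms cancel because the two endpoints share the same mean-payoff value. What remains is a smooth function whose derivative can be bounded directly.

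\emph{Substitution.} Write $\mu \triangleq \mu^{\text{mp}, \bfr}(v) = \mu^{\text{mp}, \bfr}(v^\prime)$ and $M = \Norm{\bfr}_\infty$. For $\gamma \in (\gamma_{\text{bw}}(\bfr), 1)$, \cref{lemma:value-decomposition} gives
\begin{align*}
\Delta^{\gamma, \bfr}(v, v^\prime)
&= \bfr(v, v^\prime) + \gamma \Bigl(\frac{\mu}{1-\gamma} + w^{\bfr}_{v^\prime}(\gamma)\Bigr) - \frac{\mu}{1-\gamma} - w^{\bfr}_v(\gamma) \\
&= \bfr(v, v^\prime) - \mu + \gamma\, w^{\bfr}_{v^\prime}(\gamma) - w^{\bfr}_v(\gamma),
\end{align*}
using $\frac{\gamma\mu}{1-\gamma} - \frac{\mu}{1-\gamma} = -\mu$. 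The first two terms are constant in $\gamma$. The singular part $\frac{\mu}{1-\gamma}$ has disappeared, and this is exactly where the hypothesis of equal mean-payoff values is used.

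\emph{Derivative bound.} On the open interval, the functions $w^{\bfr}_v$ and $w^{\bfr}_{v^\prime}$ are differentiable, being rational functions without poles there. By the product rule,
\[
\Abs[\Big]{\tfrac{d}{d\gamma}\Delta^{\gamma,\bfr}(v,v^\prime)} \le \Abs{w^{\bfr}_{v^\prime}(\gamma)} + \gamma\,\Abs{(\tfrac{d}{d\gamma} w^{\bfr}_{v^\prime})(\gamma)} + \Abs{(\tfrac{d}{d\gamma} w^{\bfr}_{v})(\gamma)}.
\]
The bounds of \cref{lemma:value-decomposition} then give at most $4nM + 7n^2M + 7n^2M \le 18n^2 M$. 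The mean value theorem turns this uniform derivative bound into the claimed $18n^2\Norm{\bfr}_\infty$-Lipschitz property on $(\gamma_{\text{bw}}(\bfr),1)$.

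\emph{Main obstacle.} The delicate point is that the decomposition must come from a single rational representation valid on the whole interval, so that the derivative bound applies everywhere and the mean value theorem is legitimate. The Blackwell optimal pair is fixed for all $\gamma$ above the threshold, so the transient path and the cycle from $v$ and from $v^\prime$ do not change with $\gamma$. Hence $w^{\bfr}_v$ and $w^{\bfr}_{v^\prime}$ are each one fixed rational function on the interval, with denominator $f_c(\gamma) > 0$, and are therefore smooth there. I would state this explicitly before applying the derivative bound.
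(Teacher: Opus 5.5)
Your proposal is correct and follows the paper's proof: substitute the decomposition from \cref{lemma:value-decomposition}, let the equal mean-payoff values cancel the singular $\frac{\mu}{1-\gamma}$ term, and bound the derivative $w^{\bfr}_{v^\prime}(\gamma) + \gamma (\tfrac{d}{d\gamma} w^{\bfr}_{v^\prime})(\gamma) - (\tfrac{d}{d\gamma} w^{\bfr}_v)(\gamma)$ by $4nM + 7n^2M + 7n^2M \le 18n^2M$. Your remark that the decomposition is a single smooth function on the whole interval, which justifies the mean value theorem step, is left implicit in the paper. It holds because the discounted values are unique, so it does not depend on fixing one particular Blackwell optimal pair.
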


\begin{proof}[Proof overview]
Substituting the value decomposition from \cref{lemma:value-decomposition} into the suboptimality gap, the shared mean-payoff value cancels the singular terms, giving
$\Delta^{\gamma, \bfr}(v, v^\prime) = \bfr(v, v^\prime) - \mu^{\text{mp}, \bfr}(v) + \gamma w^{\bfr}_{v^\prime}(\gamma) - w^{\bfr}_v(\gamma)$.
Therefore, using the magnitude and derivative bounds from \cref{lemma:value-decomposition}, we have
$
    \Abs[\Big]{\frac{d}{d\gamma}\Delta^{\gamma, \bfr}(v, v^\prime)}
    \le 4n\Norm{\bfr}_\infty + 14n^2\Norm{\bfr}_\infty
    \le 18n^2\Norm{\bfr}_\infty
$,
which proves the result.
\end{proof}

\begin{proof}
By \cref{lemma:value-decomposition}, for all $\gamma \in [\gamma_{\text{bw}}(\bfr), 1)$,
the value at vertices $v, v^\prime \in V$ decomposes as:
\[
    \mu^{\gamma, \bfr}(v) = \frac{\mu^{\text{mp}, \bfr}(v)}{1-\gamma} + w^{\bfr}_v(\gamma)\,, \quad \text{and} \quad \mu^{\gamma, \bfr}(v^\prime) = \frac{\mu^{\text{mp}, \bfr}(v^\prime)}{1-\gamma} + w^{\bfr}_{v^\prime}(\gamma)\,,
\]
where $w^{\bfr}_v(\gamma)$ and $w^{\bfr}_{v^\prime}(\gamma)$ are rational functions of $\gamma$ with bounded magnitude and derivative as stated in \cref{lemma:value-decomposition}.

Substituting this decomposition into the suboptimality gap yields:
\begin{align*}
    \Delta^{\gamma, \bfr}(v, v^\prime) &= \bfr(v, v^\prime) + \gamma \Bigl( \frac{\mu^{\text{mp}, \bfr}(v^\prime)}{1-\gamma} + w^{\bfr}_{v^\prime}(\gamma) \Bigr) - \Bigl( \frac{\mu^{\text{mp}, \bfr}(v)}{1-\gamma} + w^{\bfr}_v(\gamma) \Bigr) \\
    &= \bfr(v, v^\prime) - \mu^{\text{mp}, \bfr}(v) + \gamma w^{\bfr}_{v^\prime}(\gamma) - w^{\bfr}_v(\gamma) \,,
\end{align*}
where the first equality is due to substituting the value decomposition, and the second equality is due to the assumption that $\mu^{\text{mp}, \bfr}(v) = \mu^{\text{mp}, \bfr}(v^\prime)$.
Notice that the singular term $\frac{\mu^{\text{mp}, \bfr}(v)}{1-\gamma}$ exactly cancels out.
To bound the Lipschitz constant over $[\gamma_{\text{bw}}(\bfr), 1)$, we evaluate the derivative with respect to $\gamma$:
\[
    \frac{d}{d\gamma} \Delta^{\gamma, \bfr}(v, v^\prime) = w^{\bfr}_{v^\prime}(\gamma) + \gamma (\frac{d}{d\gamma} w^{\bfr}_{v^\prime})(\gamma) - (\frac{d}{d\gamma} w^{\bfr}_v)(\gamma) .
\]
Therefore, we obtain:
\begin{align*}
    \Abs[\Big] {\frac{d}{d\gamma} \Delta^{\gamma, \bfr}(v, v^\prime)}
    &\le \Abs {w^{\bfr}_{v^\prime}(\gamma)} + \Abs {(\frac{d}{d\gamma} w^{\bfr}_{v^\prime})(\gamma)} + \Abs {(\frac{d}{d\gamma} w^{\bfr}_v)(\gamma)}\\
    &\le 4n \Norm{\bfr}_\infty + 7n^2 \Norm{\bfr}_\infty + 7n^2 \Norm{\bfr}_\infty \le 18 n^2 \Norm{\bfr}_\infty \,,
\end{align*}
where the first inequality is due to the triangle inequality and the fact that $\gamma < 1$, and the second inequality is due to the bounds on the magnitude and derivative of $w^{\bfr}_v(\gamma)$ and $w^{\bfr}_{v^\prime}(\gamma)$ as stated in \cref{lemma:value-decomposition}.
Hence, the map $\gamma \mapsto \Delta^{\gamma, \bfr}(v, v^\prime)$ is Lipschitz continuous over $[\gamma_{\text{bw}}(\bfr), 1)$ with Lipschitz constant $18 n^2 \Norm{\bfr}_\infty$.
\end{proof}

\paragraph{Gap at one $\Delta ^{1, \bfr}$.}
The Lipschitz continuity result implies that we can extend the definition of $\Delta^{\gamma, \bfr}$ to $\gamma = 1$ as follows. For any edge $(v, v^\prime) \in E$ where $\mu^{\text{mp}, \bfr}(v) = \mu^{\text{mp}, \bfr}(v^\prime)$, we have
\[
    \Delta^{1, \bfr}(v, v^\prime) \triangleq \lim_{\gamma \uparrow 1} \Delta^{\gamma, \bfr}(v, v^\prime)\,.
\]
Following from \cref{lemma:bounded-improvement-function}, the limit exists and is finite. We also refer to $\Delta^{1, \bfr}$ as the \emph{suboptimality gap} of the mean-payoff game $G^{\text{mp}, \bfr}$.

We conclude by showing that under the smoothed model, with probability one, every vertex has a unique outgoing edge with zero suboptimality gap (\cref{proposition:optimality-uniqueness}), which yields the uniqueness of the Blackwell optimal strategy pair (\cref{corollary:unique-blackwell-strategy}). These results use standard ideas from smoothed analysis.
The following arguments exploit the bounded density of the noise by conditioning on all but one coordinate at a time, which calls for notation that isolates a single edge.

\paragraph{Single-edge decomposition.} To analyze the effect of perturbing a single edge, we decompose the reward and noise vectors along any edge $a \in E$. We denote
\[
    \bfr = (\bfr(a) ; \bfr_{-a}) \quad \text{and} \quad \xi = (\xi_a ; \xi_{-a}) ,
\]
where $\bfr_{-a} \in \R^{E \setminus \{a\}}$ is the vector of rewards for all edges except $a$, defined by $\bfr_{-a}(v, v^\prime) = \bfr(v, v^\prime)$ for all $(v, v^\prime) \in E \setminus \{a\}$, and $\xi_{-a}$ is defined analogously.
For any fixed $\bfr_{-a} \in \R^{E \setminus \{a\}}$, we denote by $\Pr_{\xi_a}[\cdot]$ the probability measure over the single noise coordinate $\xi_a$.

\begin{proposition}[Optimality uniqueness]\label{proposition:optimality-uniqueness}
Under the smoothed model, we have
\[
    \Pr_{\xi}\bigl[
    \forall v\in V,\
    \exists !\, a = (v, v^\prime) \in E
    \text{ such that }
    \mu^{\text{mp}, \bfr}(v) = \mu^{\text{mp}, \bfr}(v^\prime) \text{ and } \Delta^{1, \bfr}(a)=0
    \bigr]=1 .
\]
In words, with probability $1$, for each vertex $v \in V$, there exists a unique outgoing edge $a = (v, v^\prime)$ such that $\mu^{\text{mp}, \bfr}(v) = \mu^{\text{mp}, \bfr}(v^\prime)$ and $\Delta^{1, \bfr}(a) = 0$.
\end{proposition}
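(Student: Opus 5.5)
Existence and uniqueness are handled separately, with the probabilistic work confined to uniqueness.

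\emph{Existence} holds for every reward vector $\bfr$, not just almost surely. Fix a Blackwell optimal pair $(\pi^{\text{bw}}_{\min}, \pi^{\text{bw}}_{\max})$ and a vertex $v$, and let $v^\prime$ be the successor of $v$ under this pair. For every $\gamma \in (\gamma_{\text{bw}}(\bfr),1)$ the edge $(v,v^\prime)$ attains the Bellman equation, so $\Delta^{\gamma,\bfr}(v,v^\prime)=0$. Multiplying $\mu^{\gamma,\bfr}(v) = \bfr(v,v^\prime)+\gamma\mu^{\gamma,\bfr}(v^\prime)$ by $(1-\gamma)$ and letting $\gamma\uparrow 1$ via \cref{eq:blackwell-optimality} gives $\mu^{\text{mp},\bfr}(v)=\mu^{\text{mp},\bfr}(v^\prime)$. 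Hence $\Delta^{1,\bfr}(v,v^\prime)=\lim_{\gamma \uparrow 1} 0 = 0$.

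\emph{Uniqueness} is where the smoothed model enters, and I would use a union bound over pairs of distinct outgoing edges $a=(v,v_1)$ and $b=(v,v_2)$. For each such pair it suffices to show
\[
\Pr_\xi\bigl[\text{both } a \text{ and } b \text{ satisfy the condition}\bigr]=0 .
\]
The key step is to express both gaps through finitely many strategy profiles. By \cref{lemma:value-decomposition}, for each profile $\sigma$ the quantity $\mu^{\text{mp}}(v^\prime)$ is an average of cycle rewards, and $w_{v^\prime}(1)$ is an explicit linear form in $\bfr$. For $\gamma$ near $1$, both $\mu^{\gamma,\bfr}(v_1)$ and $\mu^{\gamma,\bfr}(v_2)$ are realized by the Blackwell profile. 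Therefore, when both conditions hold, the equation $\Delta^{1,\bfr}(a)=\Delta^{1,\bfr}(b)=0$ becomes
\[
\bfr(a)+L^{\sigma}_{v_1}(\bfr)=\bfr(b)+L^{\sigma}_{v_2}(\bfr),
\]
where $L^\sigma_{u}(\bfr)=\lim_{\gamma\uparrow1}\bigl(\gamma\mu^{\gamma}_\sigma(u)-\mu^{\text{mp}}_\sigma(u)/(1-\gamma)\bigr)$ is a linear functional of $\bfr$. Here $\sigma$ ranges over the finitely many positional profiles, and both $v_1$ and $v_2$ share the common mean-payoff value $\mu^{\text{mp}}(v)$.

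It then suffices to show that for each fixed $\sigma$, the hyperplane $\{\bfr(a)-\bfr(b)+L^\sigma_{v_1}-L^\sigma_{v_2}=0\}$ has measure zero. The plan is to condition on $\xi_{-a}$ and treat $\xi_a$ as the free coordinate: if the coefficient of $\bfr(a)$ in this linear form is nonzero, the solution set in $\xi_a$ is a single point, which has probability zero because $\xi_a$ has a density bounded by $\phi$. Fubini's theorem then finishes the argument.

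The main obstacle is verifying that the coefficient of $\bfr(a)$ is nonzero. One approach is to choose $\sigma$ to be the Blackwell profile restricted to vertices other than $v$, so that the play from $v_1$ and the play from $v_2$ never use the edge $a$ while passing through $v$. If such a play does pass through $v$, I would instead use the edge $b$ as the isolated coordinate, or switch to a profile that follows $a$ at $v$. In that case the coefficient of $\bfr(a)$ in $L_{v_1}$ lies strictly below $1$, by a $\gamma^t f_j/f_c$-type weight with $j<c$. The only exceptional case is $t=0$ with $a$ the first edge of the cycle, and that case I would handle by comparing with the degenerate cycle at $a$ directly.

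To avoid conditioning on the Blackwell profile, which itself depends on $\bfr$, I would enumerate all profiles $\sigma$ with $\sigma(v)$ arbitrary. This way every linear form is fixed independently of the noise, and the union bound is taken over finitely many measure-zero sets.
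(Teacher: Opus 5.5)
Your overall architecture (one fixed linear form per positional profile, then a finite union of null hyperplanes) is sound and genuinely different from the paper's. However, the step you single out as the main obstacle is not resolved correctly as written.

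Put $F_\sigma(\bfr) \triangleq \bfr(a)+L^\sigma_{v_1}(\bfr)-\bfr(b)-L^\sigma_{v_2}(\bfr)$. Tracking the coefficient of $\bfr(a)$ through the $\gamma^t f_j/f_c$-type weights of $L^\sigma_{v_1}$ cannot work, because $L^\sigma_{v_2}$ may cancel it. Suppose $\sigma$ selects $a$ at $v$, the play from $v_1$ never visits $v$, and the play from $v_2$ reaches $v$. Then $\bfr(a)$ is a transient reward with coefficient $0$ in $L^\sigma_{v_1}$ and $1$ in $L^\sigma_{v_2}$. Its coefficient in $F_\sigma$ is therefore $1+0-1=0$, even though the weight in $L^\sigma_{v_1}$ is strictly below $1$.

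Your trigger for switching the pivot (``if such a play passes through $v$, use $b$'') is also the wrong criterion. Take the mirror situation: $\sigma$ selects $b$ at $v$, the play from $v_2$ never visits $v$, and the play from $v_1$ transits through $v$. There the coefficient of $\bfr(b)$ in $F_\sigma$ is $-1+1=0$. Finally, modifying the Blackwell profile at $v$ (or ``switching to a profile that follows $a$'') is not legitimate. The event only forces $F_\sigma(\bfr)=0$ for a $\sigma$ that is Blackwell optimal for $\bfr$.

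The correct case split is on $\sigma(v)$ alone. For a fixed profile $\sigma$, $L^\sigma_{v_1}$ and $L^\sigma_{v_2}$ involve only rewards of edges selected by $\sigma$. Since $\sigma$ selects a single edge at $v$, some $e\in\{a,b\}$ is never traversed, so $\bfr(e)$ has coefficient exactly $\pm 1$ in $F_\sigma$. Hence every $F_\sigma$ is a nonzero linear form. Conditioning on $\xi_{-e}$ gives $\Pr_\xi[F_\sigma(\bfr)=0]=0$, and a union over the finitely many profiles and edge pairs finishes. No weight computation and no self-loop exception are needed.

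With this repair your argument is a valid alternative to the paper's. The paper rests on the same pivot idea: some Blackwell pair avoids $a_1$ or $a_2$, giving $\cE\subseteq\cE_1\cup\cE_2$. Instead of enumerating profiles, it deletes the avoided edge and argues that the restricted game has the same values. It then writes $\Delta^{1,\bfr}(a_1)=x+C(\bfr_{-a_1})$ with $C$ independent of $x=\bfr(a_1)$, using the single equation $\Delta^{1,\bfr}(a_1)=0$ rather than the difference of the two gaps.

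Your route replaces the edge-deletion argument, and the existence of the limit $C$ for the restricted game, with explicit per-profile linear forms. These come from a profile-wise version of the lasso computation behind \cref{lemma:value-decomposition}. This makes the exceptional set visibly contained in a fixed finite union of hyperplanes. The exponential number of profiles in the union is harmless for a probability-zero statement. Your existence argument agrees with the paper's.
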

\begin{proof}
We separately establish (i) existence, i.e., for every vertex $v \in V$ at least one such edge exists deterministically; and
 (ii) uniqueness, i.e., with probability $1$ no two distinct outgoing edges from $v$ both satisfy the conditions.
 A union bound over the finite vertex set $V$ completes the proof.

\smallskip\noindent\emph{Existence.} Let $\bfr \in \R^E$ be any reward vector and $(\pi^{\text{bw}}_{\min}, \pi^{\text{bw}}_{\max})$ be a Blackwell optimal strategy pair.
Fix a vertex $v \in V$ and let $a = (v, v^\prime) \in E$ be the edge selected at $v$ by this strategy pair.
Note that the path from $v$ under the strategy profile $(\pi^{\text{bw}}_{\min}, \pi^{\text{bw}}_{\max})$ consists of a finite path of length $t \le n$ and a recurrent cycle of length $c \le n$. The analogous path from $v^\prime$ contains the same cycle.
Since the mean-payoff is the average reward of the cycle and Blackwell optimal strategies are also mean-payoff optimal~\citep{bewley1976asymptotic},
we have $\mu^{\text{mp}, \bfr}(v) = \mu^{\text{mp}, \bfr}(v^\prime)$.
Moreover, since $a$ is an optimal action, we have $\Delta^{\gamma, \bfr}(a) = 0$ for all $\gamma \in (\gamma_{\text{bw}}(\bfr), 1)$. By \cref{lemma:bounded-improvement-function}, the limit extends to $\Delta^{1, \bfr}(a) = 0$, which yields the existence.

\smallskip\noindent\emph{Uniqueness.} Fix a vertex $v \in V$ with at least two outgoing edges, otherwise uniqueness holds.
Let $a_1 = (v, v_1)$ and $a_2 = (v, v_2)$ be any two distinct outgoing edges from $v$.
We define the event~$\cE$ that both $a_1$ and $a_2$ satisfy the conditions:
\[
    \cE \triangleq \bigl\{ \bfr \in \R^E : \mu^{\text{mp}, \bfr}(v) = \mu^{\text{mp}, \bfr}(v_1) = \mu^{\text{mp}, \bfr}(v_2), \ \Delta^{1, \bfr}(a_1) = 0, \ \Delta^{1, \bfr}(a_2) = 0 \bigr\} ,
\]
where the mean-payoff equality at $v_2$ is included so that $\Delta^{1, \bfr}(a_2)$ is well-defined. We show that $\Pr_{\xi} [\cE] = 0$, which implies that with probability $1$ at most one outgoing edge from $v$ satisfies the conditions.
Since $a_1 \neq a_2$, a Blackwell optimal strategy must differ from at least one of $a_1$ or $a_2$ at vertex~$v$. Therefore, we decompose $\cE \subseteq \cE_1 \cup \cE_2$, where
\begin{align*}
    \cE_1 &\triangleq \cE \cap \bigl\{ \bfr \in \R^E : \exists\, (\pi^{\text{bw}}_{\min}, \pi^{\text{bw}}_{\max}) \text{ Blackwell optimal for } \bfr \text{ with } \pi^{\text{bw}}(v) \neq v_1 \bigr\} , \\
    \cE_2 &\triangleq \cE \cap \bigl\{ \bfr \in \R^E : \exists\, (\pi^{\text{bw}}_{\min}, \pi^{\text{bw}}_{\max}) \text{ Blackwell optimal for } \bfr \text{ with } \pi^{\text{bw}}(v) \neq v_2 \bigr\} ,
\end{align*}
and $\pi^{\text{bw}}(v) \triangleq \pi^{\text{bw}}_{\min}(v)$ if $v \in V_{\min}$ and $\pi^{\text{bw}}(v) \triangleq \pi^{\text{bw}}_{\max}(v)$ if $v \in V_{\max}$.
We show that $\Pr_{\xi} [\cE_1] = 0$. The bound $\Pr_{\xi} [\cE_2] = 0$ follows by the symmetric argument with the roles of $a_1$ and $a_2$ swapped.
Consider any fixed $\bfr_{-a_1} \in \R ^{E \setminus \{a_1\}}$. We analyze the event $\cE_1$ with respect to the random variable $x \triangleq \bfr(a_1)$. Formally, we prove $\Pr_{\xi_{a_1}} [\cE_1] = 0$.
Let $\mu^{\gamma, \bfr_{-a_1}}$ denote the optimal value vector of the restricted game played on the edge set $E \setminus \{a_1\}$. Since this restricted game is completely independent of $a_1$, its value vector $\mu^{\gamma, \bfr_{-a_1}}$ is a function depending only on $\bfr_{-a_1}$.
Indeed, for any $x \in \R$ such that $\bfr \triangleq (x, \bfr_{-a_1})\in \cE_1$,
there exists a Blackwell optimal strategy pair $(\pi^{\text{bw}}_{\min}, \pi^{\text{bw}}_{\max})$ where $\pi^{\text{bw}}(v) \neq v_1$.
Since the edge $(v, \pi^{\text{bw}}(v)) \in E \setminus \{a_1\}$, removing $a_1$ does not change the optimal value at vertex $v$. Consequently, the value vectors of the full game and the restricted game are equal, i.e., $\mu^{\gamma, \bfr} = \mu^{\gamma, \bfr_{-a_1}}$ for all $\gamma \in (\gamma_{\text{bw}}(\bfr), 1)$.
This equality also implies $\mu^{\text{mp}, \bfr_{-a_1}}(v) = \mu^{\text{mp}, \bfr_{-a_1}}(v_1)$ by \cref{eq:blackwell-optimality}.
Under the event $\cE_1$, taking the limit of suboptimality gap evaluates to:
\[
    \Delta^{1, \bfr}(a_1) = x + \lim_{\gamma \uparrow 1}\ \bigl( \gamma \mu^{\gamma, \bfr_{-a_1}}(v_1) - \mu^{\gamma, \bfr_{-a_1}}(v) \bigr) .
\]
Let $C(\bfr_{-a_1})$ denote the limit term. By \cref{lemma:bounded-improvement-function} applied to the restricted game, $C(\bfr_{-a_1})$ exists and is finite.
Since $C(\bfr_{-a_1})$ is fixed with respect to $x$, the condition $\Delta^{1, \bfr}(a_1) = 0$ requires $x + C(\bfr_{-a_1}) = 0$, or equivalently, $x = -C(\bfr_{-a_1})$.
Let $\lambda(\cE_1)$ denote the Lebesgue measure of the set of $x$ values satisfying this condition. Note that $\lambda(\cE_1) = 0$.
Since the probability density of $x$ is uniformly bounded by $\phi$, we have $\Pr_{\xi_{a_1}} [\cE_1] = 0$.
Taking the expectation over the independent randomness in $\xi_{-a_1}$ gives
\[
    \Pr_{\xi} [\cE_1] = \E_{\xi_{-a_1}} \bigl[ \Pr_{\xi_{a_1}} [\cE_1] \bigr] = 0 .
\]
The symmetric argument applied to $a_2$ gives $\Pr_{\xi} [\cE_2] = 0$, so a union bound yields $\Pr_{\xi} [\cE] \le \Pr_{\xi} [\cE_1] + \Pr_{\xi} [\cE_2] = 0$.
A further union bound over the finitely many pairs of distinct outgoing edges at every vertex $v \in V$ completes the proof.
\end{proof}

\begin{corollary}[Uniqueness of Blackwell optimal strategies]\label{corollary:unique-blackwell-strategy}
Under the smoothed model, we have
\[
    \Pr_{\xi}\bigl[ \exists !\, (\pi^{\text{bw}}_{\min}, \pi^{\text{bw}}_{\max}) \in \Pi_{\min} \times \Pi_{\max} \text{ that is Blackwell optimal for } \bfr \bigr] = 1 .
\]
In words, with probability $1$, there exists a unique Blackwell optimal strategy pair.
\end{corollary}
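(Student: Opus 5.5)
The corollary should follow from \cref{proposition:optimality-uniqueness}, by showing that on its probability-one event $\Omega$ every Blackwell optimal pair is forced to choose the distinguished edge at each vertex. Existence needs no probability: Blackwell optimal pairs always exist~\citep{blackwell1962discrete,bewley1976asymptotic}. So the work is uniqueness on $\Omega$.

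\emph{Step 1: every Blackwell optimal pair picks a distinguished edge.} Fix $\bfr \in \Omega$ and any Blackwell optimal pair $(\pi^{\text{bw}}_{\min}, \pi^{\text{bw}}_{\max})$. Let $v \in V$ and let $a = (v, v^\prime)$ be the edge chosen at $v$. The existence part of the proof of \cref{proposition:optimality-uniqueness} applies to this pair, not just to some fixed one. The paths from $v$ and from $v^\prime$ end in the same recurrent cycle, so $\mu^{\text{mp}, \bfr}(v) = \mu^{\text{mp}, \bfr}(v^\prime)$. Since $a$ is optimal for every $\gamma \in (\gamma_{\text{bw}}(\bfr), 1)$, we have $\Delta^{\gamma, \bfr}(a) = 0$ there. \Cref{lemma:bounded-improvement-function} then gives $\Delta^{1, \bfr}(a) = 0$.

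\emph{Step 2: compare two pairs.} Suppose $(\pi_{\min}, \pi_{\max})$ and $(\pi^\prime_{\min}, \pi^\prime_{\max})$ are both Blackwell optimal. By Step 1, at each vertex $v$ both choose an outgoing edge satisfying the two conditions. On $\Omega$ such an edge is unique, so both choose the same edge at every $v$. Hence the two pairs coincide on $V_{\min}$ and on $V_{\max}$. Since $\Pr_\xi[\Omega] = 1$, the corollary follows.

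\emph{Main obstacle.} The key point is that the threshold $\gamma_{\text{bw}}(\bfr)$ and the quantity $\Delta^{1, \bfr}$ are defined from values only, not from a particular strategy pair. Two different Blackwell optimal pairs may have different individual thresholds $\gamma_0$. However, each is optimal on an interval ending at $1$, and optimal values do not depend on which optimal pair is used. So the identity $\Delta^{\gamma, \bfr}(a) = 0$ holds for each pair on a left neighbourhood of $1$, and the limit $\Delta^{1, \bfr}(a) = 0$ is unaffected. I would state this explicitly when applying \cref{lemma:bounded-improvement-function}.
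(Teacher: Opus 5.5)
Your proof is correct and follows the paper's argument essentially verbatim. Existence comes from the classical Blackwell result; then, on the probability-one event of the optimality-uniqueness proposition, every Blackwell optimal pair must select at each vertex the unique edge with equal mean-payoff endpoints and $\Delta^{1,\bfr}(a)=0$, which pins down both strategies. Your remark that each pair is optimal on some left neighbourhood of $1$ is a small extra bit of care that the paper glosses over: it means the limit $\Delta^{1,\bfr}(a)=0$ does not depend on how $\gamma_{\text{bw}}(\bfr)$ is read, whereas the paper simply cites optimality on $(\gamma_{\text{bw}}(\bfr),1)$.
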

\begin{proof}
Existence is guaranteed deterministically by the classical Blackwell optimality result~\citep{blackwell1962discrete,bewley1976asymptotic}.
It remains to establish uniqueness with probability $1$.
By \cref{proposition:optimality-uniqueness}, the event
\[
    \cE \triangleq \bigl\{ \bfr \in \R^E : \forall v \in V,\ \exists !\, a^*(v) = (v, v^\prime) \in E \text{ s.t. } \mu^{\text{mp}, \bfr}(v) = \mu^{\text{mp}, \bfr}(v^\prime) \text{ and } \Delta^{1, \bfr}(a^*(v)) = 0 \bigr\}
\]
satisfies $\Pr_{\xi}[\cE] = 1$. We show that on $\cE$, the Blackwell optimal strategy pair is unique.

Fix $\bfr \in \cE$ and let $(\pi^{\text{bw}}_{\min}, \pi^{\text{bw}}_{\max})$ be any Blackwell optimal strategy pair for $\bfr$.
For any vertex $v \in V$, let $a = (v, \pi^{\text{bw}}(v)) \in E$ be the edge selected at $v$ by this pair, where $\pi^{\text{bw}}(v) \triangleq \pi^{\text{bw}}_{\min}(v)$ if $v \in V_{\min}$ and $\pi^{\text{bw}}(v) \triangleq \pi^{\text{bw}}_{\max}(v)$ if $v \in V_{\max}$.
Since Blackwell optimal strategies are also mean-payoff optimal~\citep{bewley1976asymptotic}, we have $\mu^{\text{mp}, \bfr}(v) = \mu^{\text{mp}, \bfr}(\pi^{\text{bw}}(v))$. Moreover, since Blackwell optimal strategies are optimal in discounted-sum games for all $\gamma \in (\gamma_{\text{bw}}(\bfr), 1)$, we have $\Delta^{\gamma, \bfr}(a) = 0$ for such $\gamma$, and by \cref{lemma:bounded-improvement-function}, taking the limit yields $\Delta^{1, \bfr}(a) = 0$.
Since $\bfr \in \cE$, the unique edge satisfying these conditions at $v$ is $a^*(v)$, so $a = a^*(v)$.
Since this holds for every $v \in V$, both $\pi^{\text{bw}}_{\min}$ and $\pi^{\text{bw}}_{\max}$ are uniquely determined, which completes the proof.
\end{proof}

\begin{lemma}[Unique optimal strategies at fixed thresholds]\label{lemma:unique-threshold-optimal-blackwell}
Let $\Gamma \subseteq [0,1)$ be any finite set of discount factors.
Under the smoothed model, with probability $1$, for every $\gamma \in \Gamma$ with $\gamma > \gamma_{\text{bw}}(\bfr)$, the discounted-sum game $G^{\gamma,\bfr}$ has a unique optimal strategy pair, and this pair is the unique Blackwell optimal strategy pair for $\bfr$.
\end{lemma}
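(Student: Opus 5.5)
Since $\Gamma$ is finite, a union bound over $\gamma \in \Gamma$ reduces the claim to one fixed $\gamma \in [0,1)$. We must show that, with probability $1$, either $\gamma \le \gamma_{\text{bw}}(\bfr)$, or $G^{\gamma,\bfr}$ has a unique optimal strategy pair which equals the Blackwell pair. On the probability-one event of \cref{corollary:unique-blackwell-strategy}, the Blackwell pair $(\pi^{\text{bw}}_{\min},\pi^{\text{bw}}_{\max})$ is unique. By definition of $\gamma_{\text{bw}}(\bfr)$ as an infimum, it is optimal for every $\gamma > \gamma_{\text{bw}}(\bfr)$. (Here I use that the set of valid thresholds is an up-closed interval, so any $\gamma$ strictly above the infimum exceeds some valid threshold.) So existence and the identification are automatic, and only uniqueness of the optimal pair for the fixed $\gamma$ remains.

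For uniqueness, I would use the standard characterization: in a turn-based discounted game with $\gamma<1$, a positional pair is optimal if and only if every chosen edge $a$ satisfies $\Delta^{\gamma,\bfr}(a)=0$. Hence a second optimal pair exists only if some vertex $v$ has two distinct outgoing edges $a_1=(v,v_1)$ and $a_2=(v,v_2)$ with $\Delta^{\gamma,\bfr}(a_1)=\Delta^{\gamma,\bfr}(a_2)=0$. I would show this event has probability $0$ for each fixed $\gamma$ and each pair $a_1\neq a_2$, then take a union bound over the finitely many pairs. This argument does not even need the Blackwell condition.

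The core step mirrors the uniqueness argument in \cref{proposition:optimality-uniqueness}, but it is simpler because $\gamma$ is fixed. Split the event as $\cE_1\cup\cE_2$, where in $\cE_1$ some optimal pair for $G^{\gamma,\bfr}$ does not use $a_1$ at $v$. Such a pair exists because the optimal edges at $v$ are all zero-gap edges and there are at least two of them. Condition on $\bfr_{-a_1}$. On $\cE_1$, deleting $a_1$ leaves the values unchanged, so $\mu^{\gamma,\bfr}=\mu^{\gamma,\bfr_{-a_1}}$. The condition $\Delta^{\gamma,\bfr}(a_1)=0$ then reads $x+\gamma\mu^{\gamma,\bfr_{-a_1}}(v_1)-\mu^{\gamma,\bfr_{-a_1}}(v)=0$, where $x=\bfr(a_1)$. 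This pins $x$ to a single value determined by $\bfr_{-a_1}$. By the density bound $\phi$, this has $\Pr_{\xi_{a_1}}$-probability $0$, and integrating over $\xi_{-a_1}$ gives $\Pr_\xi[\cE_1]=0$; the case $\cE_2$ is symmetric.

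The main obstacle I expect is justifying the zero-gap characterization and the value-preservation under edge deletion cleanly. For the characterization, the zero-gap condition is exactly the Bellman fixed point, and $\gamma<1$ makes any pair consistent with it optimal. For edge deletion, the restricted game's value is at most the full game's for the player controlling $v$, with equality because an optimal pair avoiding $a_1$ still exists. One also needs measurability of the events, which is routine since values are piecewise rational in $\bfr$. Finally, intersecting with the event of \cref{corollary:unique-blackwell-strategy} and taking the union bound over $\Gamma$ finishes the proof.
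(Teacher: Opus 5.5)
Your proposal is correct and follows essentially the same route as the paper. For each fixed $\gamma$, you rule out two zero-gap edges at a vertex by conditioning on $\bfr_{-a_1}$ and deleting $a_1$; the values are unchanged because an optimal pair using $a_2$ exists, so $\Delta^{\gamma,\bfr}(a_1)=0$ pins $x=\bfr(a_1)$ to a single value. You then union-bound over edge pairs and over $\Gamma$, and intersect with the event of \cref{corollary:unique-blackwell-strategy} to identify the unique discounted optimal pair with the Blackwell pair. The one cosmetic difference is your split into $\cE_1\cup\cE_2$, which is redundant: as you observe, the whole event already lies in $\cE_1$, and the paper uses that observation directly.
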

\begin{proof}
Fix a discount factor $\gamma \in [0,1)$.
We first show that, with probability $1$, the game $G^{\gamma,\bfr}$ has a unique optimal strategy pair.
It is enough to show that, with probability $1$, no vertex has two distinct outgoing edges that both satisfy the discounted Bellman equality $\Delta^{\gamma,\bfr}(a)=0$.
Indeed, by the Bellman optimality equations, at every vertex at least one outgoing edge satisfies this equality, and if exactly one such edge exists at every vertex, then both players' optimal positional strategies are uniquely determined.

Fix a vertex $v$ with two distinct outgoing edges $a_1=(v,v_1)$ and $a_2=(v,v_2)$, and consider the event
\[
    \cE \triangleq \bigl\{ \bfr \in \R^E : \Delta^{\gamma,\bfr}(a_1)=0 \text{ and } \Delta^{\gamma,\bfr}(a_2)=0 \bigr\}.
\]
We prove that $\Pr_\xi[\cE]=0$.
Condition on all rewards except $a_1$, writing $x=\bfr(a_1)$ and $\bfr=(x;\bfr_{-a_1})$.
On the event $\cE$, the edge $a_2$ is also optimal at $v$; hence there exists an optimal strategy pair for $G^{\gamma,\bfr}$ whose strategy at $v$ selects $a_2$ and therefore does not use $a_1$.
Removing $a_1$ from the game does not change the value vector at discount $\gamma$: if $v \in V_{\min}$, Min still has an optimal strategy that avoids $a_1$, while removing a Min edge can only increase the value; if $v \in V_{\max}$, Max still has an optimal strategy that avoids $a_1$, while removing a Max edge can only decrease the value.
Thus $\mu^{\gamma,\bfr}=\mu^{\gamma,\bfr_{-a_1}}$ on $\cE$, where $\mu^{\gamma,\bfr_{-a_1}}$ denotes the value vector of the game with edge $a_1$ removed.
Consequently, on $\cE$,
\[
    0=\Delta^{\gamma,\bfr}(a_1)
    = x + \gamma \mu^{\gamma,\bfr_{-a_1}}(v_1) - \mu^{\gamma,\bfr_{-a_1}}(v).
\]
For the fixed conditioned vector $\bfr_{-a_1}$, the last two terms are independent of $x$, so this equality can hold for at most one value of $x$.
Since the conditional density of $x=\bfr(a_1)$ is bounded, the conditional probability of $\cE$ is zero.
Integrating over $\bfr_{-a_1}$ gives $\Pr_\xi[\cE]=0$.
A union bound over all vertices and all pairs of distinct outgoing edges shows that $G^{\gamma,\bfr}$ has a unique optimal strategy pair with probability $1$.

Since $\Gamma$ is finite, another union bound implies that, with probability $1$, every game $G^{\gamma,\bfr}$ with $\gamma \in \Gamma$ has a unique optimal strategy pair.
Intersect this event with the probability-one event of \cref{corollary:unique-blackwell-strategy}.
On the intersection, if $\gamma \in \Gamma$ and $\gamma>\gamma_{\text{bw}}(\bfr)$, then the unique Blackwell optimal pair is optimal for $G^{\gamma,\bfr}$ by the definition of the Blackwell threshold.
Since the discounted optimal pair at $\gamma$ is unique, it must coincide with the unique Blackwell optimal pair.
\end{proof}

\subsection{Ingredient~2: Suboptimality Gap Separation}\label{sec:ingredient-2}
We now establish a probabilistic separation result for the suboptimality gap under the smoothed model (Ingredient~2 from \cref{sec:techniques-overview}).
For any fixed edge $a = (v, v^\prime)$, the event that (i)~vertices $v$ and $v^\prime$ have the same mean-payoff value and (ii)~the suboptimality gap at $\gamma=1$ is small and non-zero, has low probability.

\begin{lemma}[Suboptimality separation]\label{lemma:value-separation-at-one}
Let $G$ be any game graph and let $a = (v, v^\prime) \in E$ be any fixed edge.
Under the smoothed model, for any $\delta > 0$, we have
    \[
    \Pr _{\xi} \bigl[ \mu^{\text{mp}, \bfr}(v) = \mu^{\text{mp}, \bfr}(v^\prime) \text{ and } 0 < \Abs[\big]{\Delta^{1, \bfr}(a)} < \delta \bigr] \le 2\delta \phi .
    \]
\end{lemma}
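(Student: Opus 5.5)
We follow the single-edge conditioning technique used in the proofs of \cref{proposition:optimality-uniqueness} and \cref{lemma:unique-threshold-optimal-blackwell}. We condition on $\bfr_{-a}$ and study the event as a set of values of $x \triangleq \bfr(a)$. On the event, the gap $\Delta^{1,\bfr}(a)$ is nonzero. By \cref{proposition:optimality-uniqueness}, off a null set the edge $a$ is then not used by the (unique, \cref{corollary:unique-blackwell-strategy}) Blackwell optimal pair. We split according to this: the event is contained, up to a null set, in $\cE_a$, the set of $\bfr$ for which there is a Blackwell optimal pair whose choice at $v$ is not $v^\prime$, intersected with the event of the lemma.

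Fix $\bfr_{-a}$ and consider the restricted game on $E\setminus\{a\}$, with value vectors $\mu^{\gamma,\bfr_{-a}}$ and mean-payoff values $\mu^{\text{mp},\bfr_{-a}}$; these depend only on $\bfr_{-a}$. For $\bfr=(x;\bfr_{-a})\in\cE_a$, the same reasoning as in the proof of \cref{proposition:optimality-uniqueness} applies. A Blackwell optimal pair avoids $a$, so removing $a$ does not change values, and $\mu^{\gamma,\bfr}=\mu^{\gamma,\bfr_{-a}}$ for all $\gamma$ near $1$. Hence $\mu^{\text{mp},\bfr}=\mu^{\text{mp},\bfr_{-a}}$ by \cref{eq:blackwell-optimality}. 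In particular the condition $\mu^{\text{mp},\bfr}(v)=\mu^{\text{mp},\bfr}(v^\prime)$ becomes a condition on $\bfr_{-a}$ alone. When it holds, \cref{lemma:bounded-improvement-function} applied to the restricted game makes
\[
C(\bfr_{-a}) \triangleq \lim_{\gamma\uparrow1}\bigl(\gamma\mu^{\gamma,\bfr_{-a}}(v^\prime)-\mu^{\gamma,\bfr_{-a}}(v)\bigr)
\]
a finite constant independent of $x$. Consequently $\Delta^{1,\bfr}(a)=x+C(\bfr_{-a})$ on $\cE_a$.

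With this identity, the event $0<\Abs{\Delta^{1,\bfr}(a)}<\delta$ forces $x\in(-C-\delta,-C+\delta)$, an interval of length $2\delta$. Since the density of $x=\bfr_0(a)+\xi_a$ is bounded by $\phi$, the conditional probability is at most $2\delta\phi$. If the mean-payoff condition fails for the restricted game, the conditional probability is $0$. Integrating over $\xi_{-a}$, using independence, gives the bound $2\delta\phi$; the null set discarded from \cref{proposition:optimality-uniqueness} does not affect this.

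The main obstacle is making the reduction rigorous. We must check that on $\cE_a$ the full game's mean-payoff values and the limiting gap genuinely coincide with those of the restricted game, uniformly for the relevant $x$. The argument must not depend circularly on $x$ through $\gamma_{\text{bw}}(\bfr)$. We handle this by noting that the values agree for all $\gamma$ sufficiently close to $1$, which suffices for limits. We also justify that $\Delta^{1,\bfr}(a)\neq0$ excludes $a$ from the Blackwell pair: by the existence part of the proof of \cref{proposition:optimality-uniqueness}, any Blackwell-selected edge has gap $0$. So this step is deterministic, and the null set is not even needed there.
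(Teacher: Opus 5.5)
Your proposal is correct and follows essentially the same route as the paper: condition on $\bfr_{-a}$, observe that $\Delta^{1,\bfr}(a)\neq 0$ forces $a$ out of the Blackwell optimal pair, so the values agree with those of the game on $E\setminus\{a\}$ for $\gamma$ near $1$. This gives $\Delta^{1,\bfr}(a)=x+C(\bfr_{-a})$, and the density bound applied to an interval of length $2\delta$ finishes the argument. The paper also handles separately, at the start, the case where $a$ is the only outgoing edge of $v$ (so that the restricted game is well defined); your argument covers this case implicitly, since the event forces a Blackwell pair that avoids $a$.
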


\begin{proof}[Proof overview]
Consider the event $\cE \triangleq \bigl\{ \bfr \in \R^E : \mu^{\text{mp}, \bfr}(v) = \mu^{\text{mp}, \bfr}(v^\prime) \text{ and } 0 < \Abs[\big]{\Delta^{1, \bfr}(a)} < \delta \bigr\}$.
If $a$ is the only outgoing edge from $v$, then $\Delta^{1, \bfr}(a) = 0$ whenever well-defined, so $\cE$ is empty.
Otherwise, fix the rewards $\bfr_{-a}$ on all edges except $a$ and let $x \triangleq \bfr(a)$.
For any $x$ such that $(x;\bfr_{-a}) \in \cE$, the edge $a$ is strictly suboptimal for all $\gamma$ sufficiently close to $1$.
Deleting $a$ therefore leaves the value vector unchanged and preserves the shared mean-payoff value of $v$ and $v^\prime$.
Thus,
\[
    \Delta^{1, \bfr}(a)
    = x + \lim_{\gamma \uparrow 1} \bigl( \gamma \mu^{\gamma, \bfr_{-a}}(v^\prime) - \mu^{\gamma, \bfr_{-a}}(v) \bigr) .
\]
By \cref{lemma:value-decomposition}, the limit term is finite and independent of $x$.
Hence the set of $x$ values producing $\cE$ has length at most $2\delta$, and therefore conditional probability at most $2\delta\phi$.
Since this bound holds for every fixed $\bfr_{-a}$, the unconditional bound follows.
\end{proof}

\begin{proof}
If $a$ is the only outgoing edge from $v$, it must be selected by any valid strategy. Thus, $\Delta^{1, \bfr}(a) = 0$ whenever well-defined, and the event has probability $0$. Assume $a$ is not the only outgoing edge from $v$.
Consider any fixed $\bfr_{-a} \in \R ^{E \setminus \{a\}}$.
We bound the probability of the event
\[
    \cE \triangleq \bigl\{ \bfr(a) \in \R : \mu^{\text{mp}, \bfr}(v) = \mu^{\text{mp}, \bfr}(v^\prime) \text{ and } 0 < \Abs[\big]{\Delta^{1, \bfr}(a)} < \delta \bigr\} .
\]
Formally, we prove $\Pr_{\xi_a} [\cE] \le 2\delta \phi$.

Let $\mu^{\gamma, \bfr_{-a}}$ denote the optimal value vector of the restricted game played on the edge set~$E \setminus \{a\}$. Since this restricted game is completely independent of edge $a$, its value vector $\mu^{\gamma, \bfr_{-a}}$ is a function depending only on $\bfr_{-a}$.
For any $x \in \cE$, let $\bfr = (x, \bfr_{-a})$. Then, we have $\Delta^{1, \bfr}(a) \neq 0$. This implies that the edge $a$ is strictly suboptimal under the Blackwell optimal strategies. Consequently, for $\gamma$ sufficiently close to $1$, the optimal strategy does not select $a$, meaning the value vectors of the full game and the restricted game exactly match: $\mu^{\gamma, \bfr} = \mu^{\gamma, \bfr_{-a}}$. Taking the limit $\gamma \uparrow 1$ of $(1-\gamma) \mu^{\gamma, \bfr}$ and applying \cref{eq:blackwell-optimality} gives $\mu^{\text{mp}, \bfr_{-a}}(u) = \mu^{\text{mp}, \bfr}(u)$ for all $u \in V$. In particular, the assumption $\mu^{\text{mp}, \bfr}(v) = \mu^{\text{mp}, \bfr}(v^\prime)$ implies $\mu^{\text{mp}, \bfr_{-a}}(v) = \mu^{\text{mp}, \bfr_{-a}}(v^\prime)$.

Under this condition, the limit of the suboptimality gap evaluates to:
\[
    \Delta^{1, \bfr}(a) = x + \lim_{\gamma \uparrow 1}\ \bigl( \gamma \mu^{\gamma, \bfr_{-a}}(v^\prime) - \mu^{\gamma, \bfr_{-a}}(v) \bigr) .
\]
Let $C(\bfr_{-a})$ denote the limit term. Applying the value decomposition (\cref{lemma:value-decomposition}) to the restricted game and using $\mu^{\text{mp}, \bfr_{-a}}(v) = \mu^{\text{mp}, \bfr_{-a}}(v^\prime)$, the singular $\frac{1}{1-\gamma}$ terms cancel, so $C(\bfr_{-a})$ exists and is finite.
Since $C(\bfr_{-a})$ is fixed with respect to $x$, the condition $0 < \Abs[\big]{x + C(\bfr_{-a})} < \delta$ requires the value of $x$ to fall strictly within the interval $\bigl(-C(\bfr_{-a}) - \delta, -C(\bfr_{-a}) + \delta \bigr)$.
Let $\lambda(\cE)$ denote the Lebesgue measure of the set of $x$ values satisfying this condition. Note that $\lambda(\cE) \le 2\delta$.
Since the probability density of $x$ is uniformly bounded by $\phi$, the probability of the event satisfies $\Pr_{\xi_a} [\cE] \le 2\delta \phi$.
Taking expectation over the independent randomness in $\xi_{-a}$ gives
\[
    \Pr_{\xi} [\cE] = \E_{\xi_{-a}} \bigl[ \Pr_{\xi_a} [\cE] \bigr] \le 2\delta \phi ,
\]
which completes the proof.
\end{proof}

\subsection{Ingredient~3: Mean-payoff Value Separation}\label{sec:ingredient-3}
We establish Ingredient~3 from \cref{sec:techniques-overview}: under the smoothed model, the mean-payoff values of any two vertices are either the same or separated by an inverse-polynomial gap with high probability.

\begin{lemma}[Mean-payoff separation]\label{lemma:non-ergodic-value-separation}
Let $G$ be any game graph and let $v, v^\prime \in V$. Under the smoothed model, for any $B > 0$, we have
\[
    \Pr_{\xi} \Bigl[ 0 < \Abs[\big]{ \mu^{\text{mp}, \bfr}(v) - \mu^{\text{mp}, \bfr}(v^\prime) } < B \Bigr] \le 2 B m n \phi .
\]
\end{lemma}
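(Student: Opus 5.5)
Under the Blackwell optimal pair, the mean-payoff value of each vertex equals the average reward of the simple cycle its play eventually enters. Since the pair is an equilibrium, these cycles are the ones that matter. We would write $\mu^{\text{mp},\bfr}(v)=\frac{1}{|C|}\sum_{e\in C}\bfr(e)$ and $\mu^{\text{mp},\bfr}(v')=\frac{1}{|C'|}\sum_{e\in C'}\bfr(e)$, where $C,C'$ are the recurrent cycles reached from $v,v'$ under the (a.s.\ unique, by \cref{corollary:unique-blackwell-strategy}) Blackwell optimal pair. If $C=C'$ the difference is zero, so the event forces $C\neq C'$. These cycles are vertex-disjoint, since each vertex has one outgoing edge under the profile, so $C$ contains an edge $a\notin C'$.

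The plan is to pick an edge $a$ of $C$ and condition on $\bfr_{-a}$. The naive approach is to observe that $\mu(v)-\mu(v')$ has slope $1/|C|\ge 1/n$ in $x=\bfr(a)$, so the bad set of $x$ has length at most $2Bn$, giving probability $2Bn\phi$. A union bound over the at most $m$ choices of $a$ would then give $2Bmn\phi$, which matches the claimed bound.

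The main obstacle is that as $x$ varies, the Blackwell strategies, and hence the cycles $C,C'$, change. So the map $x\mapsto \mu^{\text{mp},(x;\bfr_{-a})}(v)-\mu^{\text{mp},(x;\bfr_{-a})}(v')$ is not affine globally. It is only piecewise linear and monotone: values of a game are nondecreasing in each reward. My plan to handle this is a reachability lemma in the spirit of the paper's \cref{lemma:reachability-separation}. For fixed $\bfr_{-a}$, partition the $x$-line into the finitely many intervals on which the Blackwell pair is constant. On each interval, the difference is affine in $x$ with slope $1/|C|$ if $a\in C\setminus C'$, $-1/|C'|$ if $a\in C'\setminus C$, and $0$ otherwise.

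To avoid counting many pieces, define for each edge $a$ the bad event $\cE_a$: the difference lies in $(0,B)$ in absolute value and $a$ is an edge of the current recurrent cycle of $v$ but not of $v'$, or vice versa. Then every bad outcome lies in some $\cE_a$, since disjoint distinct cycles always supply such an edge. I would then argue that on the set of $x$ where $\cE_a$ holds, the difference is a continuous monotone function of $x$. Continuity follows because values are continuous in rewards. Monotonicity should follow by showing that $a$ on $v$'s cycle implies $\partial\mu(v)/\partial x=1/|C|$ while $\mu(v')$ does not depend on $x$ nearby; more robustly, one could use the fact that the difference's derivative is at least $1/n$ wherever $\cE_a$ holds. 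A coarea/change-of-variables argument then bounds the Lebesgue measure of that set by $n\cdot 2B$. This yields $\Pr_{\xi_a}[\cE_a]\le 2Bn\phi$, and integrating over $\xi_{-a}$ and summing over the $m$ edges gives the bound.

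I expect the measure argument on the non-affine set to be the delicate step. In particular, the region where $a$ lies on $v$'s cycle may consist of several disjoint intervals where the difference increases, separated by intervals where it decreases or stays constant. If so, I would fall back on the observation that the set of $x$ with $a$ on a recurrent cycle from $v$ is an interval, namely $x\ge$ a threshold, by monotonicity of the optimal choice in the reward of $a$.
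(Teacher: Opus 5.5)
Your skeleton (distinct recurrent cycles are vertex-disjoint, condition on $\bfr_{-a}$ for an edge $a$ of $v$'s cycle, union bound over the $m$ edges) is right. However, the step you flag as delicate is a genuine gap, and the fallback you propose is false.

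Write $f(x)=\mu^{\text{mp},(x;\bfr_{-a})}(v)$, $h(x)=\mu^{\text{mp},(x;\bfr_{-a})}(v^\prime)$ and $g=f-h$. Knowing $\Abs{g^\prime}\ge 1/n$ a.e.\ on the slice $S$ and $g(S)\subseteq(-B,B)$ only gives $\lambda(S)\le n\int_{-B}^{B}N(y)\,dy$, where $N(y)$ is the number of points of $S$ mapped to $y$. This is useless without control of $N$. If $S$ has several components and $h$ sits at a different constant on each, $g$ can sweep through $(0,B)$ on every one of them. The statement ``$h$ does not depend on $x$ nearby'' is local and cannot exclude this.

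Your event also mixes slopes $+1/\Abs{C}$ and $-1/\Abs{C^\prime}$ through the ``vice versa'' clause, so $g$ is not monotone on it. That part is harmless, since edges of $C$ alone suffice.

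The fallback fails: the set of $x$ for which $a$ lies on $v$'s recurrent cycle is in general neither a half-line nor an interval. First example: let $v\in V_{\min}$ have a self-loop of reward $1$ and an edge to $p\in V_{\max}$, and let $p$ have the self-loop $a$ (reward $x$) and an edge to a vertex carrying a self-loop of reward $0$. Then $f(x)=\min(1,\max(x,0))$, and $a$ is on $v$'s cycle for $0<x<1$ but not for $x<0$ or $x>1$. Second example: let $a=(p,q)$ be the only edge out of $p$. Let $q\in V_{\max}$ choose between the path $q\to s_1\to s_2\to p$ and an edge to $u\in V_{\min}$, where $u$ either returns to $p$ or exits to a self-loop of reward $1$; all other rewards are $0$. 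The value at $q$ is $\max\bigl(\min(1,x/3),x/4\bigr)$, and $a$ is on $q$'s recurrent cycle for $x<3$ and $x>4$ but not for $3<x<4$.

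The missing idea is \emph{global} constancy of $h$ on a slice of fixed sign. This is exactly what \cref{lemma:reachability-separation} provides; you name it but never formulate or use it.

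Take $S^+=\{x: 0<g(x)<B,\ a\in E(C_v)\}$. The lemma gives $a\notin E_{\pi^{\text{bw}}_{\min}}(v^\prime)$. Recall that $h(x)=\min_{\pi_{\min}}\max_{C:\,E(C)\subseteq E_{\pi_{\min}}(v^\prime)}\avg_{(x;\bfr_{-a})}(C)$, and the minimum is attained by a strategy whose reachable edges avoid $a$. Hence $h$ equals the same min-max restricted to Min strategies with $a\notin E_{\pi_{\min}}(v^\prime)$. That restricted quantity is a number $\alpha(\bfr_{-a})$ involving no cycle through $a$, so it is the same on every component of $S^+$.

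Now only monotonicity of $f$, not of $g$, is needed. We have $S^+\subseteq f^{-1}\bigl((\alpha,\alpha+B)\bigr)$, which is an interval because $f$ is nondecreasing. Also $f^\prime\ge 1/n$ a.e.\ on $S^+$. Therefore
\[
\lambda(S^+)\le n\int_{f^{-1}((\alpha,\alpha+B))}f^\prime\le nB .
\]
The opposite sign is handled symmetrically, either via part~2 of the lemma or by exchanging $v$ and $v^\prime$. This gives $2nB\phi$ per edge and $2mnB\phi$ in total, which is the paper's argument.
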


\begin{proof}[Proof overview]
With probability $1$ the Blackwell optimal pair is unique; fix it and let $C^{\bfr}_v$ be the recurrent cycle reached from $v$, of average reward $\mu^{\text{mp}, \bfr}(v)$.
If $\mu^{\text{mp}, \bfr}(v) > \mu^{\text{mp}, \bfr}(v^\prime)$, then no edge of $C^{\bfr}_v$ is reachable from $v^\prime$ under the optimal Min strategy $\pi^{\text{bw}}_{\min}$, since otherwise Max could reach and repeat $C^{\bfr}_v$ from $v^\prime$ and secure $\mu^{\text{mp}, \bfr}(v) > \mu^{\text{mp}, \bfr}(v^\prime)$ against $\pi^{\text{bw}}_{\min}$.
Assume $0 < \mu^{\text{mp}, \bfr}(v) - \mu^{\text{mp}, \bfr}(v^\prime) < B$, fix an edge $a \in E(C^{\bfr}_v)$, condition on $\bfr_{-a}$, and let $x \triangleq \bfr(a)$.
Since $a$ is unreachable from $v^\prime$, the value $\mu^{\text{mp}, (x; \bfr_{-a})}(v^\prime)$ is a constant $\alpha$ independent of $x$, whereas $x \mapsto \mu^{\text{mp}, (x; \bfr_{-a})}(v)$ is nondecreasing with derivative at least $1/n$ almost everywhere on the set where $a$ lies on the cycle reached from $v$.
This set therefore maps into $(\alpha, \alpha + B)$ and has Lebesgue measure at most $nB$; multiplying by the density bound $\phi$, summing over the $m$ edges, and adding the symmetric case gives $2 m n B \phi$.
\end{proof}

To prove \cref{lemma:non-ergodic-value-separation}, we first define some useful notation. We then establish a reachability separation lemma, which is a deterministic statement about the structure of Blackwell optimal strategies. Finally, we use this lemma to prove the mean-payoff value separation result.

\paragraph{Cycles.} Let $\cC$ denote the set of simple directed cycles of $G$. For a cycle $C \in \cC$, let $E(C)$ and $V(C)$ be its edge set and vertex set, and let
\[
    \avg_{\bfr}(C) \triangleq \frac{1}{\Abs{E(C)}} \sum_{a \in E(C)} \bfr(a) .
\]

\paragraph{Reachable vertices and edges.}
For a strategy $\pi_{\min} \in \Pi_{\min}$ of the Min Player, let $V_{\pi_{\min}}(v)$ be the set of vertices reachable from $v$ after fixing Min to $\pi_{\min}$ and leaving all Max edges available.
Let $E_{\pi_{\min}}(v)$ be the set of edges reachable from $v$ in this one-player game, i.e.,
\[
    E_{\pi_{\min}}(v) \triangleq \bigl\{ (v^\prime, v^{\prime\prime}) \in E : v^\prime \in V_{\pi_{\min}}(v) \cap V_{\max} \bigr\} \cup \bigl\{ (v^\prime, \pi_{\min}(v^\prime)) : v^\prime \in V_{\pi_{\min}}(v) \cap V_{\min} \bigr\} .
\]
Symmetrically, for a strategy $\pi_{\max} \in \Pi_{\max}$ of the Max Player, let $V_{\pi_{\max}}(v)$ be the set of vertices reachable from $v$ after fixing Max to $\pi_{\max}$ and leaving all Min edges available, and let
\[
    E_{\pi_{\max}}(v) \triangleq \bigl\{ (v^\prime, v^{\prime\prime}) \in E : v^\prime \in V_{\pi_{\max}}(v) \cap V_{\min} \bigr\} \cup \bigl\{ (v^\prime, \pi_{\max}(v^\prime)) : v^\prime \in V_{\pi_{\max}}(v) \cap V_{\max} \bigr\} .
\]
For a strategy $\pi_{\min} \in \Pi_{\min}$, the cycles $C \in \cC$ with $E(C) \subseteq E_{\pi_{\min}}(v)$ are exactly the simple directed cycles reachable from $v$ after fixing $\pi_{\min}$, and likewise for $\pi_{\max} \in \Pi_{\max}$.

\begin{lemma}[Reachability separation]\label{lemma:reachability-separation}
Let $\bfr \in \R^E$, let $(\pi^{\text{bw}}_{\min}, \pi^{\text{bw}}_{\max})$ be a Blackwell optimal strategy pair, and let $v, v^\prime \in V$.
The cycle $C^{\bfr}_v$ is the recurrent cycle reached from $v$ under $(\pi^{\text{bw}}_{\min}, \pi^{\text{bw}}_{\max})$. 
\begin{enumerate}
    \item If $\mu^{\text{mp}, \bfr}(v) > \mu^{\text{mp}, \bfr}(v^\prime)$, then
    \[
        E(C^{\bfr}_v) \cap E_{\pi^{\text{bw}}_{\min}}(v^\prime) = \emptyset ,
    \]
    \item If $\mu^{\text{mp}, \bfr}(v) < \mu^{\text{mp}, \bfr}(v^\prime)$, then
    \[
        E(C^{\bfr}_v) \cap E_{\pi^{\text{bw}}_{\max}}(v^\prime) = \emptyset .
    \]
\end{enumerate}
\end{lemma}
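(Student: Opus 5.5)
We prove item~1 by contradiction; item~2 follows by the symmetric argument with the roles of Min and Max exchanged.
Assume $\mu^{\text{mp}, \bfr}(v) > \mu^{\text{mp}, \bfr}(v^\prime)$, and suppose some edge $a = (u, u^\prime) \in E(C^{\bfr}_v)$ lies in $E_{\pi^{\text{bw}}_{\min}}(v^\prime)$. In particular $u \in V_{\pi^{\text{bw}}_{\min}}(v^\prime)$. We will build a Max strategy from $v^\prime$ that secures mean payoff $\mu^{\text{mp}, \bfr}(v)$ against $\pi^{\text{bw}}_{\min}$, which contradicts optimality of $\pi^{\text{bw}}_{\min}$ at $v^\prime$.

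The key steps are as follows.
\begin{enumerate}
\item Since $u$ is reachable from $v^\prime$ in the one-player game with Min fixed to $\pi^{\text{bw}}_{\min}$, there is a finite path $v^\prime = x_0, x_1, \dots, x_k = u$. At each Min vertex on it, the path follows $\pi^{\text{bw}}_{\min}$; at Max vertices, it uses some available edge.
\item Define a (possibly non-positional) Max strategy $\sigma$ from $v^\prime$: first follow this path to $u$, then play $\pi^{\text{bw}}_{\max}$ forever.
\item Once the token is at $u$, both players follow $(\pi^{\text{bw}}_{\min}, \pi^{\text{bw}}_{\max})$. Since $u$ lies on the recurrent cycle $C^{\bfr}_v$ of this profile, the play stays on $C^{\bfr}_v$ forever. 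Its mean payoff is therefore $\avg_{\bfr}(C^{\bfr}_v) = \mu^{\text{mp}, \bfr}(v)$, because the cycle reached from $v$ under a Blackwell optimal (hence mean-payoff optimal) pair has average equal to the value at $v$.
\item The finite prefix does not affect the $\liminf$ average. Hence $u^{\text{mp}, \bfr}(\rho(v^\prime, \pi^{\text{bw}}_{\min}, \sigma)) = \mu^{\text{mp}, \bfr}(v) > \mu^{\text{mp}, \bfr}(v^\prime)$.
\end{enumerate}

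The remaining step, which I expect to be the only subtle point, is turning this into a contradiction. The issue is that $\sigma$ is not positional, while the paper defines optimality against positional opponents. There are two ways to handle it.
\begin{itemize}
\item \emph{Direct positional fix.} The play induced by $\sigma$ against $\pi^{\text{bw}}_{\min}$ reaches the cycle $C^{\bfr}_v$. Shortcut the prefix path so that it is simple, and truncate it at the first vertex of $C^{\bfr}_v$ it hits; the prefix then avoids the cycle's vertices, apart from its endpoint. Now define a positional Max strategy $\pi_{\max}$ as follows:
  \begin{itemize}
  \item on the prefix vertices, follow the prefix;
  \item on the cycle vertices, follow $\pi^{\text{bw}}_{\max}$;
  \item elsewhere, choose arbitrarily.
  \end{itemize}
  Min vertices on the prefix and on the cycle already follow $\pi^{\text{bw}}_{\min}$. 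So $\rho(v^\prime, \pi^{\text{bw}}_{\min}, \pi_{\max})$ traverses the prefix and then cycles on $C^{\bfr}_v$.
\item \emph{Alternative.} Use the standard fact that when one player's positional strategy is fixed, the remaining one-player game admits positional optimal responses. Thus $\max_{\pi_{\max}} u^{\text{mp}, \bfr}(\rho(v^\prime, \pi^{\text{bw}}_{\min}, \pi_{\max}))$ is at least the payoff of $\sigma$.
\end{itemize}
Either way, some $\pi_{\max} \in \Pi_{\max}$ yields mean payoff $\mu^{\text{mp}, \bfr}(v)$ from $v^\prime$ against $\pi^{\text{bw}}_{\min}$.

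Finally, $\pi^{\text{bw}}_{\min}$ is mean-payoff optimal~\citep{bewley1976asymptotic}. Hence every such payoff is at most $\mu^{\text{mp}, \bfr}(v^\prime) < \mu^{\text{mp}, \bfr}(v)$, a contradiction. Therefore $E(C^{\bfr}_v) \cap E_{\pi^{\text{bw}}_{\min}}(v^\prime) = \emptyset$, which proves item~1. Item~2 is the mirror argument: Min steers from $v^\prime$ into $C^{\bfr}_v$ against $\pi^{\text{bw}}_{\max}$ and obtains $\mu^{\text{mp}, \bfr}(v) < \mu^{\text{mp}, \bfr}(v^\prime)$, contradicting optimality of $\pi^{\text{bw}}_{\max}$.
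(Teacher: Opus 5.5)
Your proof is correct and takes the same approach as the paper. You assume an edge of the recurrent cycle $C^{\bfr}_v$ is reachable from $v'$ once Min is fixed to the Blackwell optimal strategy, let Max steer the play into that cycle and stay on it forever, and conclude that the mean-payoff optimality of Min's strategy at $v'$ forces $\mu^{\text{mp},\bfr}(v) \le \mu^{\text{mp},\bfr}(v')$. Your explicit positional construction (a simple prefix truncated at the first cycle vertex) spells out a step the paper only asserts: the steering Max strategy can be taken positional, as the paper's definition of optimality against positional opponents requires.
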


\begin{proof}
Since the pair $(\pi^{\text{bw}}_{\min}, \pi^{\text{bw}}_{\max})$ is Blackwell optimal, it is also mean-payoff optimal~\citep{bewley1976asymptotic}. Therefore, $\pi^{\text{bw}}_{\min}$ is an optimal min strategy and $\pi^{\text{bw}}_{\max}$ an optimal max strategy for the game $G^{\text{mp}, \bfr}$.
Suppose $E(C^{\bfr}_v) \cap E_{\pi^{\text{bw}}_{\min}}(v^\prime) \neq \emptyset$. Then, there exists a vertex $u \in C^{\bfr}_v \cap V_{\pi^{\text{bw}}_{\min}}(v^\prime)$. Since $C^{\bfr}_v$ follows $\pi^{\text{bw}}_{\min}$ at every Min vertex, traversing it from $u$ gives $E(C^{\bfr}_v) \subseteq E_{\pi^{\text{bw}}_{\min}}(v^\prime)$, so some Max strategy $\pi_{\max}$ makes $\rho(v^\prime, \pi^{\text{bw}}_{\min}, \pi_{\max})$ reach $C^{\bfr}_v$ and cycle it forever, giving $u^{\text{mp}, \bfr}(\rho(v^\prime, \pi^{\text{bw}}_{\min}, \pi_{\max})) = \avg_{\bfr}(C^{\bfr}_v)$. As $\pi^{\text{bw}}_{\min}$ is an optimal min strategy, this payoff is at most $\mu^{\text{mp}, \bfr}(v^\prime)$, and therefore
\[
    \mu^{\text{mp}, \bfr}(v) = \avg_{\bfr}(C^{\bfr}_v) \le \mu^{\text{mp}, \bfr}(v^\prime) ,
\]
contradicting $\mu^{\text{mp}, \bfr}(v) > \mu^{\text{mp}, \bfr}(v^\prime)$.
Hence $E(C^{\bfr}_v) \cap E_{\pi^{\text{bw}}_{\min}}(v^\prime) = \emptyset$.

The proof of the second statement is symmetric.
Suppose $E(C^{\bfr}_v) \cap E_{\pi^{\text{bw}}_{\max}}(v^\prime) \neq \emptyset$. Then, there exists a vertex $u \in C^{\bfr}_v \cap V_{\pi^{\text{bw}}_{\max}}(v^\prime)$. Since $C^{\bfr}_v$ follows $\pi^{\text{bw}}_{\max}$ at every Max vertex, traversing it from $u$ gives $E(C^{\bfr}_v) \subseteq E_{\pi^{\text{bw}}_{\max}}(v^\prime)$, so some Min strategy $\pi_{\min}$ makes $\rho(v^\prime, \pi_{\min}, \pi^{\text{bw}}_{\max})$ reach $C^{\bfr}_v$ and cycle it forever, giving $u^{\text{mp}, \bfr}(\rho(v^\prime, \pi_{\min}, \pi^{\text{bw}}_{\max})) = \avg_{\bfr}(C^{\bfr}_v)$. As $\pi^{\text{bw}}_{\max}$ is an optimal max strategy, this payoff is at least $\mu^{\text{mp}, \bfr}(v^\prime)$, and therefore
\[
    \mu^{\text{mp}, \bfr}(v) = \avg_{\bfr}(C^{\bfr}_v) \ge \mu^{\text{mp}, \bfr}(v^\prime) ,
\]
contradicting $\mu^{\text{mp}, \bfr}(v) < \mu^{\text{mp}, \bfr}(v^\prime)$.
Hence $E(C^{\bfr}_v) \cap E_{\pi^{\text{bw}}_{\max}}(v^\prime) = \emptyset$.
\end{proof}

\begin{proof}[Proof of \cref{lemma:non-ergodic-value-separation}]
If $v = v^\prime$, the event is empty; assume $v \neq v^\prime$.
Let
\[
    \cE_{\text{bw}} \triangleq \bigl\{ \bfr \in \R^E : \exists !\, (\pi^{\text{bw}}_{\min}, \pi^{\text{bw}}_{\max}) \in \Pi_{\min} \times \Pi_{\max} \text{ that is Blackwell optimal for } \bfr \bigr\} .
\]
By \cref{corollary:unique-blackwell-strategy}, $\Pr_\xi[\cE_{\text{bw}}] = 1$, so it suffices to bound the two bad events
\begin{align*}
    \cB^+ &\triangleq \cE_{\text{bw}} \cap \bigl\{ \bfr : 0 < \mu^{\text{mp}, \bfr}(v) - \mu^{\text{mp}, \bfr}(v^\prime) < B \bigr\} , \\
    \cB^- &\triangleq \cE_{\text{bw}} \cap \bigl\{ \bfr : 0 < \mu^{\text{mp}, \bfr}(v^\prime) - \mu^{\text{mp}, \bfr}(v) < B \bigr\} .
\end{align*}
We bound $\Pr_\xi[\cB^+]$; the bound on $\Pr_\xi[\cB^-]$ is identical after exchanging $v$ and $v^\prime$.

Given an edge $a \in E$ and rewards $\bfr_{-a}$, when $(x; \bfr_{-a}) \in \cE_{\text{bw}}$ let $\pi^{\text{bw}}_{\min}$ be its Blackwell optimal Min strategy and $C^{(x; \bfr_{-a})}_v$ the recurrent cycle reached from $v$, and define the slice
\[
    S_a^+(\bfr_{-a}) \triangleq \bigl\{ x \in \R : (x; \bfr_{-a}) \in \cE_{\text{bw}},\ 0 < \mu^{\text{mp}, (x; \bfr_{-a})}(v) - \mu^{\text{mp}, (x; \bfr_{-a})}(v^\prime) < B,\ a \in E(C^{(x; \bfr_{-a})}_v),\ a \notin E_{\pi^{\text{bw}}_{\min}}(v^\prime) \bigr\} .
\]
For rewards $\bfr \in \cB^+$, let $C^{\bfr}_v$ be the recurrent cycle reached from $v$ under its unique Blackwell optimal pair. Therefore, we have $\avg_{\bfr}(C^{\bfr}_v) = \mu^{\text{mp}, \bfr}(v)$.
By \cref{lemma:reachability-separation}, for all edges $a \in E(C^{\bfr}_v)$, we have $a \notin E_{\pi^{\text{bw}}_{\min}}(v^\prime)$. Thus, $\bfr(a) \in S_a^+(\bfr_{-a})$.
Consequently, we have
\[
    \cB^+ \subseteq \bigcup_{a \in E} \bigl\{ \bfr : \bfr(a) \in S_a^+(\bfr_{-a}) \bigr\} .
\]
Moreover, we show that $\mu^{\text{mp}, (x; \bfr_{-a})}(v^\prime)$ is constant on $S_a^+(\bfr_{-a})$.
For a fixed Min strategy $\pi_{\min}$, every play from $v^\prime$ is a lasso whose recurrent cycle $C$ satisfies $E(C) \subseteq E_{\pi_{\min}}(v^\prime)$, and the Max Player can reach and repeat any such cycle; hence the largest payoff the Max Player secures from $v^\prime$ against $\pi_{\min}$ is $\max\{ \avg_{(x; \bfr_{-a})}(C) : C \in \cC,\ E(C) \subseteq E_{\pi_{\min}}(v^\prime) \}$. Since the mean-payoff value is the min-max of the payoff,
\[
    \mu^{\text{mp}, (x; \bfr_{-a})}(v^\prime)
    = \min_{\pi_{\min} \in \Pi_{\min}} \; \max_{\substack{C \in \cC \\ E(C) \subseteq E_{\pi_{\min}}(v^\prime)}} \avg_{(x; \bfr_{-a})}(C) .
\]
For $x \in S_a^+(\bfr_{-a})$, the Blackwell optimal $\pi^{\text{bw}}_{\min}$ is an optimal min strategy, so it attains this minimum, and it satisfies $a \notin E_{\pi^{\text{bw}}_{\min}}(v^\prime)$; hence the minimum is attained among the strategies with $a \notin E_{\pi_{\min}}(v^\prime)$:
\[
    \mu^{\text{mp}, (x; \bfr_{-a})}(v^\prime)
    = \min_{\substack{\pi_{\min} \in \Pi_{\min} \\ a \notin E_{\pi_{\min}}(v^\prime)}} \; \max_{\substack{C \in \cC \\ E(C) \subseteq E_{\pi_{\min}}(v^\prime)}} \avg_{(x; \bfr_{-a})}(C) .
\]
Every cycle on the right-hand side avoids $a$, so its average does not involve $x = \bfr(a)$; hence $\mu^{\text{mp}, (x; \bfr_{-a})}(v^\prime)$ is a constant $\alpha$ depending only on $\bfr_{-a}$ for all $x \in S_a^+(\bfr_{-a})$.
Let $f(x) \triangleq \mu^{\text{mp}, (x; \bfr_{-a})}(v)$. The function $f$ is continuous, piecewise affine, nondecreasing, and absolutely continuous because it is the min-max over finitely many positional strategy pairs, and for each such pair the recurrent-cycle mean is an affine function of $x$ with slope $0$ or $1/c$ for some $c \le n$.
Let
\[
    D \triangleq \bigcup_{\substack{C, C^\prime \in \cC \\ \avg_{(\cdot; \bfr_{-a})}(C) \,\not\equiv\, \avg_{(\cdot; \bfr_{-a})}(C^\prime)}} \bigl\{ x \in \R : \avg_{(x; \bfr_{-a})}(C) = \avg_{(x; \bfr_{-a})}(C^\prime) \bigr\} ,
\]
which is finite and, since each affine piece of $f$ coincides with $\avg_{(x; \bfr_{-a})}(C)$ for some $C \in \cC$, contains every breakpoint of $f$.
Fix a connected component $J$ of $\R \setminus D$ with $J \cap S_a^+(\bfr_{-a}) \neq \emptyset$.
Because every breakpoint of $f$ belongs to $D$, the restriction of $f$ to $J$ consists of a single affine piece.
Hence there exists a cycle $\widehat C \in \cC$ such that
\[
    f(x)=\avg_{(x;\bfr_{-a})}(\widehat C)
    \qquad\text{for every }x\in J.
\]
Choose $x_0\in J\cap S_a^+(\bfr_{-a})$ and let
\[
    C_0\triangleq C^{(x_0;\bfr_{-a})}_v .
\]
By the definition of $S_a^+(\bfr_{-a})$, we have $(x_0;\bfr_{-a})\in\cE_{\text{bw}}$ and $a\in E(C_0)$.
The Blackwell optimal strategy pair at $(x_0;\bfr_{-a})$ induces the recurrent cycle $C_0$ from $v$, and therefore
\[
    f(x_0)
    =\avg_{(x_0;\bfr_{-a})}(C_0)
    =\avg_{(x_0;\bfr_{-a})}(\widehat C).
\]
If the two affine functions
$\avg_{(\cdot;\bfr_{-a})}(C_0)$ and
$\avg_{(\cdot;\bfr_{-a})}(\widehat C)$ were not identical, their equality at $x_0$ would imply $x_0\in D$ by the definition of $D$, contradicting $x_0\in J\subseteq\R\setminus D$.
Consequently, these affine functions are identical, and hence
\[
    f(x)=\avg_{(x;\bfr_{-a})}(C_0)
    \qquad\text{for every }x\in J.
\]
The recurrent cycle $C_0$ is simple, contains $a$, and has at most $n$ edges.
Thus, the variable reward $x=\bfr(a)$ occurs exactly once in the sum defining its cycle average, and therefore
\[
    f^\prime(x) = \frac{1}{\Abs{E(C_0)}} \ge \frac{1}{n} \qquad \text{for all } x \in J .
\]
Every $x \in S_a^+(\bfr_{-a}) \setminus D$ lies in such a component, so $f^\prime(x) \ge 1/n$ for almost every $x \in S_a^+(\bfr_{-a})$.
Since $\mu^{\text{mp}, (x; \bfr_{-a})}(v^\prime) = \alpha$ on $S_a^+(\bfr_{-a})$, we have $S_a^+(\bfr_{-a}) \subseteq f^{-1}\bigl((\alpha, \alpha + B)\bigr)$. Therefore, since $f$ is nondecreasing and absolutely continuous, we have
\[
    \lambda\bigl(S_a^+(\bfr_{-a})\bigr)
    \le n \int_{S_a^+(\bfr_{-a})} f^\prime(x)\,dx
    \le n \int_{f^{-1}((\alpha, \alpha + B))} f^\prime(x)\,dx
    \le nB ,
\]
where $\lambda(\cdot)$ denotes the Lebesgue measure.

Conditioning on $\bfr_{-a}$, the reward $\bfr(a) = \bfr_0(a) + \xi_a$ has density at most $\phi$ and is independent of $\bfr_{-a}$, so
\[
    \Pr_\xi[\cB^+]
    \le \sum_{a \in E} \Pr_\xi\bigl[ \bfr(a) \in S_a^+(\bfr_{-a}) \bigr]
    \le \sum_{a \in E} \phi \, \E_{\bfr_{-a}}\bigl[ \lambda\bigl(S_a^+(\bfr_{-a})\bigr) \bigr]
    \le mnB\phi .
\]
The identical argument gives $\Pr_\xi[\cB^-] \le mnB\phi$, and therefore
\[
    \Pr_\xi\Bigl[ 0 < \Abs[\big]{\mu^{\text{mp}, \bfr}(v) - \mu^{\text{mp}, \bfr}(v^\prime)} < B \Bigr]
    \le \Pr_\xi[\cB^+] + \Pr_\xi[\cB^-]
    \le 2mnB\phi .
\]
\end{proof}

\section{Discounted-sum games}\label{sec:ergodic-games}
In this section, we focus our attention on discounted-sum games under the smoothed model.
To establish smoothed polynomial runtime, we first derive a tail bound on the Blackwell threshold $\gamma_{\text{bw}}(\bfr)$ by combining the mathematical ingredients from \cref{sec:math-properties} (\cref{sec:tail-bound-general}).
We then introduce a restarting variant of policy iteration and establish a total iteration bound for it in terms of $\gamma_{\text{bw}}(\bfr)$ (\cref{sec:repi-deterministic-bound}).
Combining this bound with the tail bound yields smoothed polynomial runtime with high probability (\cref{sec:threshold-PI-and-high-prob-bound-ergodic}).
Combining it with a bound on the expected truncated inverse gap yields smoothed polynomial expected runtime (\cref{sec:expected-running-time-ergodic}).

\subsection{Tail bound}\label{sec:tail-bound-general}
In this subsection, we establish a tail bound on the Blackwell threshold $\gamma_{\text{bw}}(\bfr)$ under the smoothed model.
The resulting bound shows that $\gamma_{\text{bw}}(\bfr)$ is bounded away from $1$ with high probability and serves as the foundation for both the high probability and expected runtime analyses in the sequel.
We first define the variable $K$ which is a polynomial in the input parameters $m$, $n$, and $\phi$:
\begin{equation}
\label{eq:K-definition}
    K \triangleq 92 m^2 n^2 (\phi + 1).
\end{equation}
\begin{lemma}[Tail bound]\label{lemma:general-bw-threshold-tail}
Let $G$ be any game graph.
Under the smoothed model, for any $x \ge 1$, with the bounding function defined as $M(x) \triangleq 1 + \frac{1}{\theta} \ln (m x)$, we have
\[
    \Pr_{\xi} \bigl[ \gamma_{\text{bw}}(\bfr) > 1 - \tfrac{1}{x} \bigr] \le \frac{K M(x)}{x} .
\]
\end{lemma}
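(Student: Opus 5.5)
The plan is to show that, outside a bad event of small probability, the unique Blackwell optimal pair is already optimal at every $\gamma \in [1-\frac1x, 1)$. That forces $\gamma_{\text{bw}}(\bfr) \le 1 - \frac1x$. The bad event will be the union of three pieces:
\begin{enumerate}
\item a large-reward event $\{\Norm{\bfr}_\infty > M(x)\}$;
\item suboptimality gaps at one that are nonzero but small, $0<\Abs{\Delta^{1,\bfr}(a)}<\delta$, on edges whose endpoints share a mean-payoff value;
\item mean-payoff values that differ but lie within $B$ of each other, $0<\Abs{\mu^{\text{mp},\bfr}(u)-\mu^{\text{mp},\bfr}(u')}<B$.
\end{enumerate}
The thresholds $\delta$ and $B$ will be proportional to $nM(x)/x$, with polynomial factors chosen so that the union bound gives $KM(x)/x$.

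\emph{Bounding the bad event.} By \cref{eq:reward-tail-bound} and a union bound over the $m$ edges, $\Pr[\Norm{\bfr}_\infty > M(x)] \le m e^{-\ln(mx)} = 1/x \le M(x)/x$. By \cref{lemma:value-separation-at-one}, summed over the $m$ edges, the gap event has probability at most $2m\delta\phi$. By \cref{lemma:non-ergodic-value-separation}, summed over the at most $n^2$ vertex pairs, the value event has probability at most $2Bmn^3\phi$; if this is too lossy, I would restrict to pairs that are endpoints of edges, giving $2Bm^2n\phi$. We also intersect with the probability-one events of \cref{corollary:unique-blackwell-strategy} and \cref{proposition:optimality-uniqueness}. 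With $\delta \approx c\,n^2 M(x)/x$ and $B \approx c\,n M(x)/x$ for a suitable constant $c$, the total is at most $92m^2n^2(\phi+1)M(x)/x$.

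\emph{Deterministic step.} On the good event, fix $\gamma \ge 1-\frac1x$, and suppose $\gamma_{\text{bw}} > \gamma$. Let $\gamma^*$ denote the infimum of the discounts at which the Blackwell pair is optimal, so $\gamma^* = \gamma_{\text{bw}}$. Just above $\gamma^*$ the pair is optimal; at $\gamma^*$, by continuity of values, some edge $a=(u,u')$ not used by the pair must attain $\Delta^{\gamma^*,\bfr}(a)=0$. I split into two cases.
\begin{enumerate}
\item \emph{Case $\mu^{\text{mp}}(u)=\mu^{\text{mp}}(u')$.} Then $\Delta^{1,\bfr}(a) \neq 0$ by uniqueness, so $\Abs{\Delta^{1,\bfr}(a)}\ge\delta$. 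By \cref{lemma:bounded-improvement-function}, which extends to the closed interval $[\gamma^*,1)$ by continuity, $\Abs{\Delta^{\gamma^*}(a)} \ge \delta - 18n^2 M(x)(1-\gamma^*) > 0$ once $\delta > 18n^2M(x)/x$. This contradicts $\Delta^{\gamma^*,\bfr}(a)=0$.
\item \emph{Case $\mu^{\text{mp}}(u)\ne\mu^{\text{mp}}(u')$.} Then the difference is at least $B$. Using \cref{lemma:value-decomposition} for the Blackwell pair, evaluated at $\gamma^*$, we get
\[
\Delta^{\gamma^*,\bfr}(a) = \bfr(a) + \frac{\gamma^*\mu^{\text{mp}}(u')-\mu^{\text{mp}}(u)}{1-\gamma^*} + O(nM(x)).
\]
The middle term equals $\frac{\mu^{\text{mp}}(u')-\mu^{\text{mp}}(u)}{1-\gamma^*}-\mu^{\text{mp}}(u')$. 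Its sign must make $a$ non-improving in the mean-payoff sense: for a Min vertex $u$, mean-payoff optimality gives $\mu^{\text{mp}}(u')\ge\mu^{\text{mp}}(u)$, hence $\ge B$ here, and symmetrically for Max. So $\Abs{\Delta^{\gamma^*}(a)} \ge B/(1-\gamma^*) - (2+8n)M(x) \ge Bx - 10nM(x) > 0$ once $B > 10nM(x)/x$. This is again a contradiction.
\end{enumerate}

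The main obstacle is rigor at the boundary $\gamma^*$. The decomposition lemma is stated only on the open interval $(\gamma_{\text{bw}},1)$. I would apply it to the fixed Blackwell pair's path values, which are rational in $\gamma$ and valid for all $\gamma$, rather than to game values. The bounds in \cref{lemma:value-decomposition} then hold on all of $[0,1)$, since the proof only uses the path structure. Then "optimality of the pair at $\gamma$" is equivalent to all switch gaps computed from these path values having the correct sign. The argument then shows that no gap can change sign on $[1-\frac1x,1)$, so the pair is optimal throughout and $\gamma_{\text{bw}}\le 1-\frac1x$ directly. This avoids the infimum and continuity issue entirely.
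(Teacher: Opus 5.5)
Your proposal is correct and follows essentially the same route as the paper. It uses the same good event (reward bound, mean-payoff separation on edge endpoints, gap separation at one, and the probability-one uniqueness event), the same two-case argument via \cref{lemma:bounded-improvement-function} and \cref{lemma:value-decomposition}, and the same union bound with $\delta$ of order $n^2M(x)/x$ and $B$ of order $nM(x)/x$, which yields $K = 92m^2n^2(\phi+1)$.

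Two small remarks. First, restricting the mean-payoff separation to endpoints of edges is necessary, not optional: the $n^2$-pair bound $2Bmn^3\phi$ does not fit inside $K$ when $m$ is much smaller than $n^2$. Second, your direct argument using the Blackwell pair's path values on the closed interval $[1-\frac1x,1)$ handles the boundary slightly more cleanly than the paper's left-neighborhood continuity step.
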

\begin{proof}[Proof overview]
We combine the reward tail bound~\cref{eq:reward-tail-bound} with the suboptimality gap and mean-payoff-value separation events from \cref{lemma:value-separation-at-one} and \cref{lemma:non-ergodic-value-separation}.
Condition on the intersection of these events, \cref{lemma:value-decomposition} and \cref{lemma:bounded-improvement-function} imply that every non-optimal edge retains its sign for all $\gamma > 1-1/x$, and hence $\gamma_{\text{bw}}(\bfr) \le 1-1/x$.
Choosing the separation parameters proportional to $M(x)/x$ and applying union bounds over the edges shows that the complement of this event has probability at most $KM(x)/x$.
\end{proof}

\begin{proof}
For any $M \ge 1$, $B > 0$, and $\delta > 0$, define the events
\begin{align*}
    \cE_1(M) &\triangleq \bigl\{ \bfr \in \R^E : \Norm{\bfr}_\infty \le M \bigr\} , \\
    \cE_2(B) &\triangleq \bigl\{ \bfr \in \R^E : \Abs[\big]{\mu^{\text{mp}, \bfr}(v) - \mu^{\text{mp}, \bfr}(v^\prime)} \in \{0\} \cup [B, \infty) \text{ for all } (v, v^\prime) \in E \bigr\} , \\
    \cE_3(\delta) &\triangleq \bigl\{ \bfr \in \R^E : \Abs[\big]{\Delta^{1, \bfr}(a)} \in \{0\} \cup [\delta, \infty) \text{ for all } a = (v, v^\prime) \in E \text{ with } \mu^{\text{mp}, \bfr}(v) = \mu^{\text{mp}, \bfr}(v^\prime) \bigr\} , \\
    \cE_4 &\triangleq \bigl\{ \bfr \in \R^E : \forall v \in V,\ \exists !\, a = (v, v^\prime) \in E \text{ such that } \mu^{\text{mp}, \bfr}(v) = \mu^{\text{mp}, \bfr}(v^\prime) \text{ and } \Delta^{1, \bfr}(a) = 0 \bigr\} .
\end{align*}
Let $L \triangleq 18 n^2 M$. Conditioned on $\cE_1(M) \cap \cE_2(B) \cap \cE_3(\delta) \cap \cE_4$, we show that $\gamma_{\text{bw}}(\bfr) \le \max\bigl(0, 1 - \frac{\delta}{2L}, 1 - \frac{B}{10 n M}\bigr) = \max\bigl(0, 1 - \frac{\delta}{36 n^2 M}, 1 - \frac{B}{10 n M}\bigr)$.

Indeed, consider $\bfr \in \cE_1(M) \cap \cE_2(B) \cap \cE_3(\delta) \cap \cE_4$. Since $\bfr \in \cE_4$, by \cref{proposition:optimality-uniqueness}, for each vertex $v \in V$ there exists a unique edge $a^\star(v) = (v, v^\prime) \in E$ with $\mu^{\text{mp}, \bfr}(v) = \mu^{\text{mp}, \bfr}(v^\prime)$ and $\Delta^{1, \bfr}(a^\star(v)) = 0$. Therefore, by \cref{corollary:unique-blackwell-strategy}, there exists a unique Blackwell optimal strategy pair $(\pi^{\text{bw}}_{\min}, \pi^{\text{bw}}_{\max})$, which selects $a^\star(v)$ at each vertex $v \in V$.

By \cref{lemma:bounded-improvement-function} and \cref{lemma:value-decomposition}, since $\bfr \in \cE_1(M)$, the $L$-Lipschitz bound on the suboptimality gap and the value decomposition both hold on $(\gamma_{\text{bw}}(\bfr), 1)$. We argue by contradiction: suppose $\gamma_{\text{bw}}(\bfr) > \max\bigl(1 - \frac{\delta}{2L}, 1 - \frac{B}{10nM}\bigr)$. Then for any $\gamma \in (\gamma_{\text{bw}}(\bfr), 1)$, $1 - \gamma < 1 - \gamma_{\text{bw}}(\bfr) < \min\bigl(\frac{\delta}{2L}, \frac{B}{10nM}\bigr)$. We bound $\Abs[\big]{\Delta^{\gamma, \bfr}(a)}$ for any non-optimal edge $a = (v, v^\prime) \neq a^\star(v)$ on $(\gamma_{\text{bw}}(\bfr), 1)$, splitting into two cases.

\smallskip\noindent
\textbf{Case $\mu^{\text{mp}, \bfr}(v) = \mu^{\text{mp}, \bfr}(v^\prime)$.}
Since $\bfr \in \cE_3(\delta)$ and $a$ is non-optimal, $\Abs[\big]{\Delta^{1, \bfr}(a)} \ge \delta$. Then:
\begin{align*}
    \Abs[\big]{\Delta^{\gamma, \bfr}(a)} &\ge \Abs[\big]{\Delta^{1, \bfr}(a)} - \Abs[\big]{\Delta^{\gamma, \bfr}(a) - \Delta^{1, \bfr}(a)} \\
    &\ge \delta - L (1-\gamma) > \frac{\delta}{2} > 0 ,
\end{align*}
where the first inequality is the triangle inequality, the second combines $\Abs[\big]{\Delta^{1, \bfr}(a)} \ge \delta$ with the $L$-Lipschitz bound on $(\gamma_{\text{bw}}(\bfr), 1)$, and the strict third inequality uses $1 - \gamma < \frac{\delta}{2L}$.

\smallskip\noindent
\textbf{Case $\mu^{\text{mp}, \bfr}(v) \neq \mu^{\text{mp}, \bfr}(v^\prime)$.}
Since $\bfr \in \cE_2(B)$, $\Abs[\big]{\mu^{\text{mp}, \bfr}(v^\prime) - \mu^{\text{mp}, \bfr}(v)} \ge B$. Substituting the value decomposition from \cref{lemma:value-decomposition} into the suboptimality gap yields
\[
    \Delta^{\gamma, \bfr}(a) = \bfr(v, v^\prime) - \mu^{\text{mp}, \bfr}(v^\prime) + \frac{\mu^{\text{mp}, \bfr}(v^\prime) - \mu^{\text{mp}, \bfr}(v)}{1-\gamma} + \gamma w^{\bfr}_{v^\prime}(\gamma) - w^{\bfr}_v(\gamma) .
\]
Applying the reverse triangle inequality to isolate the singular term:
\begin{align*}
    \Abs {\Delta^{\gamma, \bfr}(a)}
    &\ge \frac{\Abs {\mu^{\text{mp}, \bfr}(v^\prime) - \mu^{\text{mp}, \bfr}(v)}}{1-\gamma} - \Abs[\big]{ \bfr(v, v^\prime) - \mu^{\text{mp}, \bfr}(v^\prime) + \gamma w^{\bfr}_{v^\prime}(\gamma) - w^{\bfr}_v(\gamma) } \\
    &\ge \frac{B}{1-\gamma} - \bigl( 2M + 4nM + 4nM \bigr) \\
    &\ge \frac{B}{1-\gamma} - 10nM > 0 ,
\end{align*}
where the first inequality is the reverse triangle inequality together with $\Abs[\big]{\mu^{\text{mp}, \bfr}(v^\prime) - \mu^{\text{mp}, \bfr}(v)} \ge B$, the second uses $\Abs{\bfr(v, v^\prime)} \le M$ and $\Abs{\mu^{\text{mp}, \bfr}(v^\prime)} \le M$ (from $\cE_1(M)$) together with $\Abs{w^{\bfr}_v(\gamma)}, \Abs{w^{\bfr}_{v^\prime}(\gamma)} \le 4nM$ from \cref{lemma:value-decomposition}, the third uses $n \ge 1$, and the final strict inequality uses $1 - \gamma < \frac{B}{10nM}$.

\smallskip
Combining the two cases, $\Abs[\big]{\Delta^{\gamma, \bfr}(a)} > 0$ for every non-optimal edge $a$ and every $\gamma \in (\gamma_{\text{bw}}(\bfr), 1)$. By definition of Blackwell optimality, on $(\gamma_{\text{bw}}(\bfr), 1)$ the pair $(\pi^{\text{bw}}_{\min}, \pi^{\text{bw}}_{\max})$ is optimal, so for non-optimal edges $a$ outgoing from $v$, the sign of $\Delta^{\gamma, \bfr}(a)$ is positive if $v \in V_{\min}$ and negative if $v \in V_{\max}$. Since $\Delta^{\gamma, \bfr}(a)$ is continuous in $\gamma$ on $[0, 1)$ and its magnitude stays strictly positive throughout $(\gamma_{\text{bw}}(\bfr), 1)$, continuity extends the strictly positive magnitude and the consistent sign to a left-neighborhood $(\gamma_{\text{bw}}(\bfr) - \eta, \gamma_{\text{bw}}(\bfr)]$ for some $\eta > 0$ uniform over the finite edge set. On this neighborhood, $a^\star(v)$ is the unique edge at $v$ achieving $\Delta^{\gamma, \bfr}(a^\star(v)) = 0$, so the pair $(\pi^{\text{bw}}_{\min}, \pi^{\text{bw}}_{\max})$ remains optimal on $(\gamma_{\text{bw}}(\bfr) - \eta, 1)$. This contradicts the definition of $\gamma_{\text{bw}}(\bfr)$ as the infimum of valid thresholds. Therefore, $\gamma_{\text{bw}}(\bfr) \le \max\bigl(1 - \frac{\delta}{36 n^2 M}, 1 - \frac{B}{10 n M}\bigr)$.

For a given $x \ge 1$, we substitute the designated $M(x) = 1 + \frac{1}{\theta} \ln (m x)$, $\delta(x) \triangleq \frac{36 n^2 M(x)}{x}$, and $B(x) \triangleq \frac{10 n M(x)}{x}$.
By definition, $\frac{\delta(x)}{36 n^2 M(x)} = \frac{B(x)}{10 n M(x)} = \frac{1}{x}$, which implies $\gamma_{\text{bw}}(\bfr) \le \max\bigl(0, 1 - \frac{1}{x}\bigr)$.
Under the event $\cE_1(M(x)) \cap \cE_2(B(x)) \cap \cE_3(\delta(x)) \cap \cE_4$, we have $\gamma_{\text{bw}}(\bfr) \le 1 - \frac{1}{x}$.
Therefore, the event $\gamma_{\text{bw}}(\bfr) > 1 - \frac{1}{x}$ is upper bounded:
\[
    \Pr_{\xi} \bigl[ \gamma_{\text{bw}}(\bfr) > 1 - \tfrac{1}{x} \bigr] \le \Pr_{\xi} \bigl[ \overline{\cE_1}(M(x)) \bigr] + \Pr_{\xi} \bigl[ \overline{\cE_2}(B(x)) \bigr] + \Pr_{\xi} \bigl[ \overline{\cE_3}(\delta(x)) \bigr] + \Pr_{\xi} \bigl[ \overline{\cE_4} \bigr] .
\]
For $\Pr_{\xi} \bigl[ \overline{\cE_1}(M(x)) \bigr]$, we have
\begin{align*}
    \Pr_{\xi} \bigl[ \overline{\cE_1}(M(x)) \bigr] &\le m \exp\bigl(- \theta (M(x)-1) \bigr)\\
    &= m \exp\bigl(- \theta \frac{1}{\theta} \ln (m x) \bigr) = \frac{1}{x} ,
\end{align*}
where the first inequality is due to \cref{eq:reward-tail-bound} and a union bound over the $m$ edges, and the first equality is due to the definition of $M(x)$.
For $\Pr_{\xi} \bigl[ \overline{\cE_2}(B(x)) \bigr]$, we have
\begin{align*}
    \Pr_{\xi} \bigl[ \overline{\cE_2}(B(x)) \bigr] &\le 2 B(x) m^2 n \phi\\
    &= 2 m^2 n \frac{10 n M(x)}{x} \phi = \frac{20 m^2 n^2 \phi M(x)}{x} ,
\end{align*}
where the first inequality is due to \cref{lemma:non-ergodic-value-separation} and a union bound over the $m$ edges, and the first equality is due to the definition of $B(x)$.
For $\Pr_{\xi} \bigl[ \overline{\cE_3}(\delta(x)) \bigr]$, we have
\begin{align*}
    \Pr_{\xi} \bigl[ \overline{\cE_3}(\delta(x)) \bigr] &\le 2 m \delta(x) \phi\\
    &= 2 m \frac{36 n^2 M(x)}{x} \phi = \frac{72 m n^2 \phi M(x)}{x} ,
\end{align*}
where the first inequality is due to \cref{lemma:value-separation-at-one} and a union bound over the $m$ edges, and the first equality is due to the definition of $\delta(x)$.
We also have $\Pr_{\xi} \bigl[ \overline{\cE_4} \bigr] = 0$ by \cref{proposition:optimality-uniqueness}.
Therefore, we obtain
\begin{align*}
    \Pr_{\xi} \bigl[ \gamma_{\text{bw}}(\bfr) > 1 - \tfrac{1}{x} \bigr] &\le \frac{1}{x} + \frac{20 m^2 n^2 \phi M(x)}{x} + \frac{72 m n^2 \phi M(x)}{x}\\
    &= \frac{1 + 4 m n^2 (5m + 18) \phi M(x)}{x}\\
    &\le \frac{1 + 92 m^2 n^2 \phi M(x)}{x} \le \frac{K M(x)}{x} ,
\end{align*}
where the second inequality uses $5m + 18 \le 23 m$ (which holds since $m \ge 1$),
and the last inequality uses $1 \le 92 m^2 n^2 M(x)$ (which holds since $M(x) \ge 1$ for $x \ge 1$ and $m, n \ge 1$) and the definition of $K$ in \cref{eq:K-definition}.
This completes the proof.
\end{proof}

\subsection{Restarting Policy Iteration}\label{sec:repi-deterministic-bound}

In this subsection, we introduce the restarting variant of policy iteration called $\algname{RePI}$, presented in full detail in \cref{alg:restarting-tpi}.
We bound its total iteration count in \cref{lemma:general-repi-total-iteration-bound}: deterministically by the number of positional strategies for every reward vector, and, under the smoothed model, by an inverse-polynomial factor of the Blackwell gap $1 - \gamma_{\text{bw}}(\bfr)$ with probability~$1$.

\paragraph{Informal description.}
Our algorithm sweeps through a geometric sequence of thresholds $\gamma_k = 1 - e^{-k}$ for $k = 0, 1, \dots, \lceil n \ln n \rceil$.
For each $k$ in turn, if the target discount factor $\gamma$ already satisfies $\gamma \le \gamma_k$, the algorithm directly executes the standard $\algname{PI}$ procedure on $\gamma$.
Otherwise, it solves the game with the discount factor~$\gamma_k$ to obtain a candidate strategy pair, computes the payoff of this pair under the target discount factor $\gamma$, and checks for any unilateral improving switches.
If none exists, the candidate is optimal for the discount factor $\gamma$ and the algorithm terminates.
If the test fails, the algorithm proceeds to the next threshold $\gamma_{k+1}$ and retries.
If every threshold in the sequence fails, then the algorithm falls back to running the standard $\algname{PI}$ procedure on $\gamma$.

\begin{algorithm}[htb]
\caption{Restarting Policy Iteration: $\algname{RePI}(G, \gamma, \bfr)$}\label{alg:restarting-tpi}
\begin{algorithmic}[1]
    \State \textbf{Input:} Graph $G = (V_{\min}, V_{\max}, E)$ with $n = \Abs {V} \ge 2$, target discount factor $\gamma \in [0,1)$, reward vector $\bfr \in \R^E$.
    \State \textbf{Output:} Optimal strategy pair $(\pi_{\min}^*, \pi_{\max}^*)$ and value vector $\mu^{\gamma, \bfr}$ for the game $G^{\gamma, \bfr}$.
    \State Let $k_{\max} = \lceil n \ln n \rceil$.
    \For{$k = 0, 1, \dots, k_{\max}$}
        \State Set the threshold discount factor $\gamma_k = 1 - e^{-k}$.
        \If{$\gamma \le \gamma_k$}
            \State \Return $\algname{PI}(G, \gamma, \bfr)$
        \Else
            \State \textbf{// Phase 1: Solve at the threshold}
            \State $(\pi_{\min}, \pi_{\max}, \mu_{\gamma_k}) \gets \algname{PI}(G, \gamma_k, \bfr)$
            \State \textbf{// Phase 2: Optimality test for the target discount factor}
            \State Compute the value vector $\mu_\gamma \in \R^V$ by evaluating $(\pi_{\min}, \pi_{\max})$ under $\gamma$.
            \State Let $I_{\min} = \bigl\{ v \in V_{\min} \mid \exists (v, v^\prime) \in E \text{ such that } \bfr(v, v^\prime) + \gamma \mu_\gamma(v^\prime) < \mu_\gamma(v) \bigr\}$.
            \State Let $I_{\max} = \bigl\{ v \in V_{\max} \mid \exists (v, v^\prime) \in E \text{ such that } \bfr(v, v^\prime) + \gamma \mu_\gamma(v^\prime) > \mu_\gamma(v) \bigr\}$.
            \If{$I_{\min} = \emptyset$ \textbf{and} $I_{\max} = \emptyset$}
                \State \Return $(\pi_{\min}, \pi_{\max}, \mu_\gamma)$
            \EndIf
        \EndIf
    \EndFor
    \State \textbf{// Phase 3: Fallback}
    \State \Return $\algname{PI}(G, \gamma, \bfr)$
\end{algorithmic}
\end{algorithm}

\paragraph{Comparison with~\cite{loff2024smoothed}.}
\Cref{alg:restarting-tpi} is related in spirit to the increasing-discount algorithm $\algname{IncreasingDiscountPI2}$ of~\cite{loff2024smoothed}, but the crucial difference is that \cref{alg:restarting-tpi} has an explicit fallback after a fixed finite sequence of thresholds.
This gives a uniform deterministic bound on the running time, which is essential for our expected-runtime analysis.
Thus, although $\algname{IncreasingDiscountPI2}$ yields a high-probability smoothed guarantee in~\cite{loff2024smoothed}, the algorithm does not achieve a polynomial expected-running-time guarantee based on our techniques.
Moreover, $\algname{IncreasingDiscountPI2}$ warm-starts policy iteration from the policies computed at the previous discount factor, whereas \cref{alg:restarting-tpi} restarts policy iteration from scratch at each threshold; this difference, however, is not used in our analysis.

\begin{lemma}[Total iteration bound]\label{lemma:general-repi-total-iteration-bound}
Let $G$ be any game graph with $n \ge 2$, let $\gamma \in [0,1)$, and let $\bfr \in \R^E$. Denote by $N_{\text{iter}}(\bfr)$ the total number of iterations executed by $\algname{RePI}(G, \gamma, \bfr)$ (\cref{alg:restarting-tpi}), summed over all calls to $\algname{PI}$.
Then $N_{\text{iter}}(\bfr) \le (\lceil n \ln n \rceil + 2) \, n^n$ for every $\bfr$, and, under the smoothed model, with probability $1$,
\[
    N_{\text{iter}}(\bfr) \le 27 \cdot \frac{m}{1-\gamma_{\text{bw}}(\bfr)} \ln\Bigl( \frac{e \cdot n}{1-\gamma_{\text{bw}}(\bfr)} \Bigr) .
\]
\end{lemma}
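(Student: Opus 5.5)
The plan is to prove the two bounds separately. The deterministic one is a crude counting argument. The probabilistic one combines \cref{prop:pi-worst-case-bound}, applied at each threshold, with \cref{lemma:unique-threshold-optimal-blackwell}, which says that once a threshold exceeds $\gamma_{\text{bw}}(\bfr)$ the Phase~2 test succeeds.

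\emph{Deterministic bound.} Each call to \algname{PI} strictly decreases the Min value vector at every iteration, so no Min strategy is visited twice. Hence a single call performs at most $\Abs{\Pi_{\min}} \le n^n$ iterations. \algname{RePI} makes at most one \algname{PI} call for each $k \in \{0,\dots,k_{\max}\}$, namely either the early-return call or the Phase~1 call, plus possibly one fallback call. That is at most $\lceil n\ln n\rceil+2$ calls. The Phase~2 evaluation is not a \algname{PI} iteration, so this gives $N_{\text{iter}}(\bfr)\le(\lceil n\ln n\rceil+2)n^n$ for every $\bfr$.

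\emph{Probabilistic bound.} Apply \cref{lemma:unique-threshold-optimal-blackwell} with the finite set $\Gamma=\{\gamma_0,\dots,\gamma_{k_{\max}}\}$ and work on the resulting probability-one event.

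First suppose $\gamma_{\text{bw}}(\bfr)<\gamma_{k_{\max}}$. Let $k^\star$ be the least $k$ with $\gamma_k>\gamma_{\text{bw}}(\bfr)$.
\begin{itemize}
\item If $k^\star=0$, then $\gamma_{\text{bw}}(\bfr)<\gamma_0=0$, which is impossible. So $k^\star\ge1$ and $\gamma_{k^\star-1}\le\gamma_{\text{bw}}(\bfr)$. This gives $e^{-(k^\star-1)}\ge 1-\gamma_{\text{bw}}(\bfr)$, that is, $e^{k^\star}\le e/(1-\gamma_{\text{bw}}(\bfr))$.
\item The algorithm never passes iteration $k^\star$. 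If it reaches $k^\star$ with $\gamma\le\gamma_{k^\star}$, it returns after one \algname{PI} call at $\gamma\le\gamma_{k^\star}$.
\item Otherwise $\gamma>\gamma_{k^\star}>\gamma_{\text{bw}}(\bfr)$. Phase~1 returns an optimal pair for $G^{\gamma_{k^\star},\bfr}$. By the lemma this pair is unique and equals the Blackwell pair, which is also optimal at $\gamma$. Its evaluation at $\gamma$ therefore satisfies the Bellman equations with no improving switch for either player, so $I_{\min}=I_{\max}=\emptyset$ and the algorithm returns.
\end{itemize}
Thus every \algname{PI} call is made at a discount at most $\gamma_k=1-e^{-k}$ for some $k\le k^\star$, with one call per $k$. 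By \cref{prop:pi-worst-case-bound},
\[
N_{\text{iter}}(\bfr)\le\sum_{k=0}^{k^\star}6me^k\ln(ne^k)\le 6m\,\frac{e}{e-1}\,e^{k^\star}\ln(ne^{k^\star}).
\]
Substituting $e^{k^\star}\le e/(1-\gamma_{\text{bw}}(\bfr))$ gives
\[
N_{\text{iter}}(\bfr)\le \frac{6e^2}{e-1}\cdot\frac{m}{1-\gamma_{\text{bw}}(\bfr)}\ln\Bigl(\frac{en}{1-\gamma_{\text{bw}}(\bfr)}\Bigr),
\]
and $6e^2/(e-1)\approx25.8\le27$.

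Now suppose $\gamma_{\text{bw}}(\bfr)\ge\gamma_{k_{\max}}$. Then $1-\gamma_{\text{bw}}(\bfr)\le e^{-k_{\max}}\le n^{-n}$. In that case the claimed right-hand side is at least $27mn^n\ln(en^{n+1})\ge 27mn^n(n\ln n+1)$. This dominates $(\lceil n\ln n\rceil+2)n^n\le(n\ln n+3)n^n$, since $27m(n\ln n+1)\ge 27(n\ln n+1)\ge n\ln n+3$. So the deterministic bound already implies the claim.

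The main obstacle is the correctness step in the third case above: showing that the Phase~2 test necessarily succeeds at $k^\star$. This needs the pair returned by \algname{PI} at $\gamma_{k^\star}$ (including Max's best response) to be a genuine optimal pair. Then uniqueness from \cref{lemma:unique-threshold-optimal-blackwell} forces it to be the Blackwell pair, and optimality at every $\gamma>\gamma_{\text{bw}}(\bfr)$ transfers to the target $\gamma$. The remaining steps are bookkeeping over the geometric sum and the boundary cases, both handled above.
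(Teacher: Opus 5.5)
Your proposal is correct and follows the paper's proof. It uses the same counting argument for the deterministic bound, the same conditioning on the probability-one event of \cref{lemma:unique-threshold-optimal-blackwell}, and the same geometric-sum estimate with constant $6e^2/(e-1)$ when $\gamma_{\text{bw}}(\bfr) < \gamma_{k_{\max}}$. The only difference is the case $\gamma_{\text{bw}}(\bfr) \ge \gamma_{k_{\max}}$: you observe that the right-hand side already dominates the deterministic bound $(\lceil n \ln n \rceil + 2)\, n^n$, whereas the paper bounds the for-loop sum and the $n^n$ fallback separately to get the constant $6e^2/(e-1)+1 \le 27$; both routes are valid, and yours is slightly shorter.
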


\begin{proof}
Let $k_{\max} = \lceil n \ln n \rceil$ and let $k_{\text{final}} \in \{0, 1, \dots, k_{\max}\}$ denote the index of the last iteration of the for-loop executed by $\algname{RePI}$.
The algorithm performs $\algname{PI}$ at threshold $\gamma_k$ for each $k \in \{0, 1, \dots, k_{\text{final}}\}$ (or at the target $\gamma$ when $\gamma \leq \gamma_{k_{\text{final}}}$), and optionally one fallback $\algname{PI}(G, \gamma, \bfr)$ in Phase~3.
By \cref{prop:pi-worst-case-bound}, each $\algname{PI}$ call at threshold $\gamma_k = 1 - e^{-k}$ executes at most $\frac{6m}{1-\gamma_k} \ln\bigl(\frac{n}{1-\gamma_k}\bigr) = 6m \, e^k (\ln n + k)$ iterations, and is also trivially bounded by $n^n$ because the number of iterations is upper bounded by the number of positional strategies. Therefore, we have
\[
    N_{\text{iter}}(\bfr) \le \sum_{k=0}^{k_{\text{final}}} \min\bigl(6m \, e^k (\ln n + k), \, n^n\bigr) + n^n \cdot \1\{\text{Phase 3 triggered}\} .
\]

\smallskip\noindent\emph{Deterministic bound.}
Bounding every term in the sum by $n^n$ and using that there are at most $k_{\max}+1$ terms plus at most one fallback gives $N_{\text{iter}}(\bfr) \le (\lceil n \ln n \rceil + 2) \, n^n$ for every $\bfr$.

\smallskip\noindent\emph{Smoothed geometric bound.}
Let $\Gamma_{\text{alg}} \triangleq \{\gamma_k : k = 0,1,\dots,k_{\max}\}$ be the finite set of thresholds used by the algorithm.
By \cref{lemma:unique-threshold-optimal-blackwell}, with probability $1$, for every $\gamma_k \in \Gamma_{\text{alg}}$ with $\gamma_k > \gamma_{\text{bw}}(\bfr)$, the game $G^{\gamma_k,\bfr}$ has a unique optimal strategy pair, and this pair is Blackwell optimal.
Condition on this probability-one event.
We bound the total in terms of $\gamma_{\text{bw}}(\bfr)$, splitting into two cases according to whether the Blackwell threshold is resolved within the for-loop.
In each case we establish the two bounds
\begin{equation}\label{eq:repi-kfinal-bounds}
    e^{k_{\text{final}}+1} \le \frac{e^2}{1-\gamma_{\text{bw}}(\bfr)}
    \qquad \text{and} \qquad
    \ln n + k_{\text{final}} \le \ln\Bigl(\frac{en}{1-\gamma_{\text{bw}}(\bfr)}\Bigr) .
\end{equation}

\smallskip\noindent\textbf{Case 1: $\gamma_{\text{bw}}(\bfr) < \gamma_{k_{\max}}$.}
Since $\gamma_0 = 0$, there exists $i \in \{0, 1, \dots, k_{\max}-1\}$ with $\gamma_i \le \gamma_{\text{bw}}(\bfr) < \gamma_{i+1}$.
Because $\gamma_{i+1} > \gamma_{\text{bw}}(\bfr)$, the conditioned event implies that the unique optimal pair of $G^{\gamma_{i+1}, \bfr}$ is Blackwell optimal, so $\algname{PI}(G, \gamma_{i+1}, \bfr)$ returns a Blackwell optimal pair; being Blackwell optimal, it is optimal for every $\gamma^\prime \in [\gamma_{\text{bw}}(\bfr), 1)$.
Hence, if iteration $k = i+1$ reaches Phase~2 (that is, if $\gamma > \gamma_{i+1}$), the returned pair is optimal for the target $\gamma$ and Phase~2 succeeds; otherwise $\gamma \le \gamma_{i+1}$ and the algorithm has already returned $\algname{PI}(G, \gamma, \bfr)$ at some iteration $k \le i+1$.
Either way the algorithm terminates without triggering Phase~3, so $k_{\text{final}} \le i+1$ and the Phase~3 term vanishes.
From $1 - \gamma_i = e^{-i} \ge 1 - \gamma_{\text{bw}}(\bfr)$ we obtain $e^i \le 1/(1-\gamma_{\text{bw}}(\bfr))$, hence
\[
    e^{k_{\text{final}}+1} \le e^{i+2} \le \frac{e^2}{1-\gamma_{\text{bw}}(\bfr)}
    \qquad \text{and} \qquad
    k_{\text{final}} \le \ln\frac{1}{1-\gamma_{\text{bw}}(\bfr)} + 1 ,
\]
which gives \cref{eq:repi-kfinal-bounds} since $\ln n + k_{\text{final}} \le \ln n + \ln\frac{1}{1-\gamma_{\text{bw}}(\bfr)} + 1 = \ln\bigl(en/(1-\gamma_{\text{bw}}(\bfr))\bigr)$.

\smallskip\noindent\textbf{Case 2: $\gamma_{\text{bw}}(\bfr) \ge \gamma_{k_{\max}}$.}
Here $1/(1-\gamma_{\text{bw}}(\bfr)) \ge e^{k_{\max}} \ge n^n$, and the algorithm may run all $k_{\max}+1$ for-loop iterations plus a Phase~3 fallback, so $k_{\text{final}} = k_{\max}$. Hence
\[
    e^{k_{\text{final}}+1} = e \cdot e^{k_{\max}} \le \frac{e}{1-\gamma_{\text{bw}}(\bfr)} \le \frac{e^2}{1-\gamma_{\text{bw}}(\bfr)}
    \qquad \text{and} \qquad
    k_{\text{final}} = k_{\max} \le \ln\frac{1}{1-\gamma_{\text{bw}}(\bfr)} ,
\]
which gives \cref{eq:repi-kfinal-bounds} since $\ln n + k_{\text{final}} \le \ln n + \ln\frac{1}{1-\gamma_{\text{bw}}(\bfr)} \le \ln\bigl(en/(1-\gamma_{\text{bw}}(\bfr))\bigr)$.

\smallskip
We now bound the for-loop contribution using \cref{eq:repi-kfinal-bounds}:
\begin{align*}
    \sum_{k=0}^{k_{\text{final}}} 6m \, e^k (\ln n + k)
    &\le 6m \, (\ln n + k_{\text{final}}) \sum_{k=0}^{k_{\text{final}}} e^k \\
    &\le 6m \, (\ln n + k_{\text{final}}) \cdot \frac{e^{k_{\text{final}}+1}}{e-1} \\
    &\le \frac{6 e^2}{e-1} \cdot \frac{m}{1-\gamma_{\text{bw}}(\bfr)} \ln\Bigl(\frac{en}{1-\gamma_{\text{bw}}(\bfr)}\Bigr) ,
\end{align*}
where the first inequality uses $\ln n + k \le \ln n + k_{\text{final}}$ for every $k \le k_{\text{final}}$, the second sums the geometric series $\sum_{k=0}^{k_{\text{final}}} e^k \le e^{k_{\text{final}}+1}/(e-1)$, and the third substitutes both bounds of \cref{eq:repi-kfinal-bounds}.
It remains to account for the Phase~3 fallback. In Case~1 this term vanishes; in Case~2 it contributes at most
\[
    n^n \le \frac{1}{1-\gamma_{\text{bw}}(\bfr)} \le \frac{m}{1-\gamma_{\text{bw}}(\bfr)} \ln\Bigl(\frac{en}{1-\gamma_{\text{bw}}(\bfr)}\Bigr) ,
\]
using $m \ge 1$ and $\ln(en) \ge 1$. Adding the two contributions yields
\[
    N_{\text{iter}}(\bfr)
    \le \Bigl(\frac{6e^2}{e-1} + 1\Bigr) \cdot \frac{m}{1-\gamma_{\text{bw}}(\bfr)} \ln\Bigl(\frac{en}{1-\gamma_{\text{bw}}(\bfr)}\Bigr)
    \le 27 \cdot \frac{m}{1-\gamma_{\text{bw}}(\bfr)} \ln\Bigl(\frac{en}{1-\gamma_{\text{bw}}(\bfr)}\Bigr) .
\]
which is the second bound. As it was derived on the probability-one event from \cref{lemma:unique-threshold-optimal-blackwell}, this bound holds with probability $1$.
\end{proof}

\subsection{High Probability Guarantees}\label{sec:threshold-PI-and-high-prob-bound-ergodic}

In this subsection, we combine the total iteration bound of \cref{lemma:general-repi-total-iteration-bound} with the tail bound from \cref{sec:tail-bound-general} to show that, with probability at least $1 - \epsilon$, $\algname{RePI}$ terminates after a number of iterations polynomial in the game parameters, the density bound, the inverse of tail bound, and the inverse failure probability $1/\epsilon$.

\begin{lemma}\label{lemma:general-RePI-high-probability-convergence}
Let $G$ be any game graph. Under the smoothed model, for any $\epsilon \in (0, 1)$, define
\[
    L_\epsilon \triangleq \theta + \ln\Bigl(1 + \frac{2 m K}{\epsilon\theta}\Bigr)
    \quad \text{and} \quad
    x_\epsilon \triangleq \frac{2 K L_\epsilon}{\epsilon \theta} .
\]
With probability at least $1 - \epsilon$ over the perturbation $\bfr$, \cref{alg:restarting-tpi} executes $\algname{PI}$ for a total of at most $27 \, m \, x_\epsilon \ln(en \, x_\epsilon)$ iterations. In particular, $x_\epsilon$ is polynomial in $n, m, \phi, 1/\theta$, and $1/\epsilon$.
\end{lemma}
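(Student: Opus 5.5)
The plan is to combine the tail bound of \cref{lemma:general-bw-threshold-tail} at $x = x_\epsilon$ with the probability-one iteration bound of \cref{lemma:general-repi-total-iteration-bound}. On the event $\gamma_{\text{bw}}(\bfr) \le 1 - 1/x_\epsilon$ we have $1/(1-\gamma_{\text{bw}}(\bfr)) \le x_\epsilon$. The map $y \mapsto 27\, m\, y \ln(en\,y)$ is increasing for $y \ge 1$, so the second bound of \cref{lemma:general-repi-total-iteration-bound} gives at most $27\, m\, x_\epsilon \ln(en\, x_\epsilon)$ iterations, except on a null set. It therefore suffices to show
\[
    \Pr_\xi\bigl[\gamma_{\text{bw}}(\bfr) > 1 - \tfrac{1}{x_\epsilon}\bigr] \le \frac{K M(x_\epsilon)}{x_\epsilon} \le \epsilon ,
\]
after checking $x_\epsilon \ge 1$. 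This check is immediate: $K \ge 92$, $L_\epsilon \ge \theta$ and $\epsilon < 1$, so $x_\epsilon \ge 2K > 1$.

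The main step is the inequality $K M(x_\epsilon) \le \epsilon x_\epsilon = 2 K L_\epsilon/\theta$. Since $M(x) = 1 + \frac{1}{\theta}\ln(mx)$, this is equivalent to
\[
    \theta + \ln(m x_\epsilon) \le 2 L_\epsilon = 2\theta + 2\ln\bigl(1 + \tfrac{2mK}{\epsilon\theta}\bigr) .
\]
Write $A \triangleq 2mK/(\epsilon\theta)$, so that $m x_\epsilon = A L_\epsilon$. The target becomes $\ln A + \ln L_\epsilon \le \theta + 2\ln(1+A)$. Because $\ln A \le \ln(1+A)$, it suffices to show $\ln L_\epsilon \le \theta + \ln(1+A)$. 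That holds because $\ln L_\epsilon \le L_\epsilon - 1 < L_\epsilon = \theta + \ln(1+A)$.

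This self-referential inequality is the only delicate step, since $x_\epsilon$ appears inside the logarithm defining $M$. The definition of $L_\epsilon$ was evidently chosen so that the bound $\ln L \le L$ closes the loop. I would double-check the constants, but the slack above (the factor $2$ and the dropped $\theta$) makes me confident it goes through.

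Finally, I would check that $x_\epsilon$ is polynomial. $K$ is polynomial in $m, n, \phi$. Moreover, $L_\epsilon/\theta = 1 + \frac{1}{\theta}\ln(1+A)$, and $\ln(1+A) \le A$ is polynomial in $m, n, \phi, 1/\theta, 1/\epsilon$. Hence $x_\epsilon = \frac{2K}{\epsilon}\cdot\frac{L_\epsilon}{\theta}$ is polynomial, and so is $27\, m\, x_\epsilon\ln(en\,x_\epsilon)$. Intersecting with the probability-one event of \cref{lemma:general-repi-total-iteration-bound} does not change the probability, which gives probability at least $1-\epsilon$.
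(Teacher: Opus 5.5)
Your proposal is correct and follows essentially the same route as the paper. You apply \cref{lemma:general-bw-threshold-tail} at $x = x_\epsilon$, close the self-referential inequality $\theta + \ln(m x_\epsilon) \le 2L_\epsilon$ using $\ln A \le \ln(1+A)$ and $\ln L_\epsilon \le L_\epsilon$, and then combine the probability-one bound of \cref{lemma:general-repi-total-iteration-bound} with the monotonicity of $y \mapsto y\ln(en\,y)$; the paper's proof does exactly this, including the preliminary check that $x_\epsilon \ge 2K/\epsilon > 1$.
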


\begin{proof}[Proof overview]
Fix a failure probability $\epsilon \in (0,1)$.
The tail bound in \cref{lemma:general-bw-threshold-tail} shows that the inverse Blackwell gap $1/(1-\gamma_{\text{bw}}(\bfr))$ is bounded by a polynomial in $n,m,\phi,1/\theta$, and $1/\epsilon$, with probability at least $1-\epsilon$.
Conditioned on this event, substituting this bound into the total iteration bound of \cref{lemma:general-repi-total-iteration-bound} yields a polynomial bound on the total number of iterations executed by $\algname{RePI}$.
\end{proof}

\begin{proof}
Since $L_\epsilon \ge \theta$, $K \ge 1$, $m \ge 1$, and $\epsilon \in (0,1)$, we have $x_\epsilon \ge \frac{2K}{\epsilon} > 2$.

We claim that $\gamma_{\text{bw}}(\bfr) \le 1 - 1/x_\epsilon$ with probability at least $1 - \epsilon$.
Indeed, applying \cref{lemma:general-bw-threshold-tail} with $x = x_\epsilon$ yields
\[
    \Pr_{\xi} \bigl[ \gamma_{\text{bw}}(\bfr) > 1 - \tfrac{1}{x_\epsilon} \bigr]
    \le \frac{K M(x_\epsilon)}{x_\epsilon}
    = \frac{\epsilon}{2L_\epsilon}
    \Bigl(\theta + \ln(m x_\epsilon)\Bigr),
\]
where the equality uses the definitions of $M(x)$ and $x_\epsilon$.
By the definition of $L_\epsilon$,
\[
    \ln(m x_\epsilon)
    = \ln\Bigl(\frac{2 m K}{\epsilon\theta}\Bigr) + \ln L_\epsilon
    \le \ln\Bigl(1 + \frac{2 m K}{\epsilon\theta}\Bigr) + L_\epsilon
    = 2L_\epsilon - \theta ,
\]
where the inequality uses $\ln L_\epsilon \le L_\epsilon$. Consequently,
\[
    \frac{K M(x_\epsilon)}{x_\epsilon}
    \le \frac{\epsilon}{2L_\epsilon}(\theta + 2L_\epsilon - \theta) = \epsilon ,
\]
which yields $\gamma_{\text{bw}}(\bfr) \le 1 - 1/x_\epsilon$ with probability at least $1 - \epsilon$. As the geometric bound of \cref{lemma:general-repi-total-iteration-bound} holds with probability $1$, its intersection with this event still has probability at least $1 - \epsilon$; assume both hold for the rest of the proof.

The map $y \mapsto y \ln(en \, y)$ is increasing on $y \ge 1$. Therefore, we have
\[
    \frac{1}{1-\gamma_{\text{bw}}(\bfr)} \ln\bigl(\frac{en}{1-\gamma_{\text{bw}}(\bfr)}\bigr) \le x_\epsilon \ln(en \, x_\epsilon) .
\]
\Cref{lemma:general-repi-total-iteration-bound} then yields $N_{\text{iter}}(\bfr) \le 27 \, m \, x_\epsilon \ln(en \, x_\epsilon)$.
\end{proof}

\begin{theorem}\label{theorem:general-discounted-sum-high-probability}
The restarting PI algorithm achieves smoothed polynomial time with high probability guarantee for discounted-sum games.
\end{theorem}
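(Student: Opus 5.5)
The plan is to convert the iteration bound of \cref{lemma:general-RePI-high-probability-convergence} into a bound on the total number of arithmetic operations $\cT(G,\gamma,\bfr_0+\xi)$. We then check that the result is polynomial in $n, m, \phi, 1/\theta, 1/\epsilon$, which is exactly the high-probability guarantee of \cref{sec:smoothed-model}.

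\emph{Correctness.} Each return of $\algname{RePI}$ is either a call to $\algname{PI}(G,\gamma,\bfr)$, which is correct by the correctness of policy iteration~\cite{condon1992complexity}, or a Phase~2 return.
\begin{itemize}
    \item In the Phase~2 case, $I_{\min}=I_{\max}=\emptyset$ means no player has a unilateral improving switch with respect to $\mu_\gamma$.
    \item Then $\mu_\gamma$ satisfies the Bellman optimality equations of $G^{\gamma,\bfr}$.
    \item Since these equations have a unique fixed point (the operator is a $\gamma$-contraction), $\mu_\gamma=\mu^{\gamma,\bfr}$ and the pair is optimal.
\end{itemize}

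\emph{Cost accounting.} We bound the work in three parts.
\begin{itemize}
    \item Each $\algname{PI}$ iteration performs one policy evaluation, costing $\cO(n^3m^2\log^2 n)$ arithmetic operations~\cite{post2015simplex}. The improvement step scans the edges, costing $\cO(m)$.
    \item Each Phase~2 test evaluates a fixed strategy pair, which amounts to solving a linear system or following lassos, costing $\cO(n^3)$. The sets $I_{\min}, I_{\max}$ cost $\cO(m)$ to compute.
    \item There are at most $k_{\max}+1=\cO(n\log n)$ such tests.
\end{itemize}
Summing gives
\[
    \cT \le \cO(n^3m^2\log^2 n)\cdot N_{\text{iter}}(\bfr) + \cO\bigl(n\log n\,(n^3+m)\bigr).
\]

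\emph{High-probability step.} Fix $\epsilon\in(0,1)$. By \cref{lemma:general-RePI-high-probability-convergence}, with probability at least $1-\epsilon$ we have
\[
    N_{\text{iter}}(\bfr)\le 27\,m\,x_\epsilon\ln(en\,x_\epsilon).
\]
It remains to verify that $x_\epsilon=2KL_\epsilon/(\epsilon\theta)$ is polynomial in the parameters.
\begin{itemize}
    \item $K=92m^2n^2(\phi+1)$ is polynomial.
    \item $L_\epsilon=\theta+\ln\bigl(1+2mK/(\epsilon\theta)\bigr)$ contains an additive $\theta$, which is not bounded by a polynomial in $1/\theta$.
    \item This is the main obstacle. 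It appears only through $L_\epsilon/\theta$, and $L_\epsilon/\theta = 1+\theta^{-1}\ln(1+2mK/(\epsilon\theta))$.
    \item Using $\ln(1+y)\le y$, this is at most $1+2mK/(\epsilon\theta^2)$, which is polynomial in $m,n,\phi,1/\theta,1/\epsilon$.
\end{itemize}
Hence $x_\epsilon$ is polynomial, so $x_\epsilon\ln(en\,x_\epsilon)\le x_\epsilon\cdot en\,x_\epsilon$ is polynomial as well. Multiplying by the per-iteration cost and adding the Phase~2 overhead gives $\cT\le\text{poly}(n,m,\phi,1/\theta,1/\epsilon)$ with probability at least $1-\epsilon$, which proves the theorem.
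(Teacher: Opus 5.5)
Your proof is correct and takes the same route as the paper, which simply invokes \cref{lemma:general-RePI-high-probability-convergence} and asserts that $x_\epsilon$ is polynomial. Your extra steps (the Bellman-contraction argument for the Phase~2 return, the conversion from iterations to arithmetic operations, and the check that $L_\epsilon/\theta \le 1 + 2mK/(\epsilon\theta^2)$, so the additive $\theta$ in $L_\epsilon$ is harmless) only make explicit what the paper leaves implicit.
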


\begin{proof}
Fix a failure probability $\epsilon \in (0,1)$.
By \cref{lemma:general-RePI-high-probability-convergence}, with probability at least $1-\epsilon$, $\algname{RePI}$ executes at most $27 \, m \, x_\epsilon \ln(en \, x_\epsilon)$ iterations, where $x_\epsilon$ is polynomial in $n, m, \phi, 1/\theta$, and $1/\epsilon$.
\end{proof}

\subsection{Expected Runtime Guarantees}\label{sec:expected-running-time-ergodic}

To bound the expected number of iterations of \algname{RePI}, we first define the notion of truncated inverse gap and show that its expectation is bounded by a polynomial in the game parameters, the density bound, and the inverse of the tail bound in \cref{lemma:general-expected-truncated-bw} (Ingredient~4 from \cref{sec:techniques-overview}).
We then combine this bound with the total iteration bound of \cref{lemma:general-repi-total-iteration-bound} to obtain the desired polynomial bound on the expected number of iterations.

\paragraph{Truncated inverse gap.}
For any reward vector $\bfr \in \R^E$, define
\[
    Y(\bfr) \triangleq \min\left(
        1/(1-\gamma_{\text{bw}}(\bfr))
        \ln\left(1/(1-\gamma_{\text{bw}}(\bfr))\right),
        \, n^n
    \right).
\]

\begin{lemma}[Bounding expected truncated inverse gap]\label{lemma:general-expected-truncated-bw}
Let $G$ be any game graph with $n \geq 2$. Let $K$ be defined in \cref{eq:K-definition}. Under the smoothed model, we have
\[
    \mathbb{E}_{\xi} \bigl[ Y(\bfr) \bigr]
    \le \frac{6 (\theta + 1)}{\theta} K (n \ln m)^3.
\]
\end{lemma}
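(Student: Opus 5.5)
We bound $\mathbb{E}_\xi[Y(\bfr)]$ through the layer-cake formula $\mathbb{E}[Y] = \int_0^{n^n} \Pr[Y > y]\,dy$, converting the event $\{Y > y\}$ into an event about $\gamma_{\text{bw}}(\bfr)$ and then applying the tail bound of \cref{lemma:general-bw-threshold-tail}. Write $g(z) \triangleq z\ln z$, which is increasing on $z\ge 1$. For $y \ge e$ let $z(y) \ge 1$ be the solution of $g(z)=y$. Then $Y>y$ implies $1/(1-\gamma_{\text{bw}}(\bfr)) > z(y)$, i.e.\ $\gamma_{\text{bw}}(\bfr) > 1-1/z(y)$. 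Hence
\[
\Pr[Y>y] \le \frac{K M(z(y))}{z(y)}.
\]
A cleaner route is to change variables to $z$ directly: $dy = (\ln z + 1)\,dz$, and the upper limit $y=n^n$ corresponds to some $z_{\max}$ with $z_{\max}\ln z_{\max} = n^n$, so $\ln z_{\max} \le n\ln n$. This gives
\[
\mathbb{E}[Y] \le g(e) + \int_{1}^{z_{\max}} \frac{K M(z)}{z}(\ln z+1)\,dz ,
\]
where small $y$ are handled trivially by bounding the probability by $1$.

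Next we substitute $M(z) = 1+\frac{1}{\theta}\ln(mz) \le \frac{\theta+1}{\theta}\,(1+\ln m+\ln z)$ and set $u=\ln z$. The integral becomes at most
\[
K\frac{\theta+1}{\theta}\int_0^{\ln z_{\max}} (1+\ln m + u)(1+u)\,du ,
\]
a polynomial integral. It evaluates to order $(\ln z_{\max})^3 + \ln m\,(\ln z_{\max})^2$, which is at most a constant times $(n\ln n)^3 + \ln m\,(n\ln n)^2$. Using $n \ge 2$ and $m\ge n$ (every vertex has an outgoing edge, so $\ln n \le \ln m$), all terms are absorbed into $c\,(n\ln m)^3$. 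The additive constants, including the contribution $g(e)=e$ from small $y$, are absorbed because $K\ge 92$ and $(n\ln m)^3 \ge (2\ln 2)^3$.

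The key point is that truncating at $n^n$ turns the $\int dz/z$-type divergence (the tail is only $\tilde O(1/z)$, so the untruncated expectation of $z\ln z$ would diverge) into a cutoff in $u=\ln z$ at $n\ln n$. This is exactly where the cubic polynomial factor comes from. The main obstacle is the constant bookkeeping needed to land on $6\frac{\theta+1}{\theta}K(n\ln m)^3$. I would bound $(1+\ln m+u)(1+u) \le (1+\ln m + u)^2$ and integrate to get $\frac13(1+\ln m + n\ln n)^3$. Then I would use $1+\ln m+n\ln n \le c'\, n\ln m$ with $c'$ small enough (e.g.\ $1+\ln m \le 2.5\ln m$ for $m\ge 2$, together with $n\ge2$) that $\frac13 c'^3$ plus the small-$y$ term stays below $6$. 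If the constant does not fit, I would split $\ln z_{\max}$ more carefully, using $\ln z_{\max}\le n\ln n - \ln\ln z_{\max}$.
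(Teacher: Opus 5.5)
Your proposal is correct and follows essentially the same route as the paper: the layer-cake formula, the substitution $y = x\ln x$, the tail bound of \cref{lemma:general-bw-threshold-tail}, then $u = \ln x$ with the cutoff $\ln x_0 \le n\ln n$, and $m \ge n \ge 2$ for the constant bookkeeping. The only difference is cosmetic and affects only the constant. The paper pulls $M(e^{n\ln n})$ out of the integral and integrates $1+u$, giving leading constant $3$. Your bound $(1+\ln m+u)(1+u) \le (1+\ln m+u)^2$ together with $1+\ln m+n\ln n \le 2.25\, n\ln m$ gives leading constant $\tfrac13 (2.25)^3 \approx 3.8$. Adding the $e$ term still stays below $6$, so the fallback refinement you mention is not needed.
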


\begin{proof}[Proof overview]
By \cref{lemma:general-bw-threshold-tail}, the inverse Blackwell gap has tail probability of $\cO(\ln(x)/x)$.
Integrating this tail up to the truncation point $n^n$ yields the stated polynomial bound.
\end{proof}

\begin{proof}
Let $x_0$ be the unique real number satisfying $x_0 \ge 1$ and $x_0 \ln x_0 = n^n$.
We compute the expectation of $Y(\bfr)$ by integrating the tail probability and applying the substitution $y = x \ln x$, where $dy = (1 + \ln x) dx$:
\begin{equation}\label{eq:general-Y-expectation-sum-bound}
\begin{aligned}
    \mathbb{E}_{\xi} \bigl[ Y(\bfr) \bigr] &= \int_{0}^{n^n} \Pr_{\xi} \bigl[ Y(\bfr) > y \bigr] \, dy \\
    &= \int_{1}^{x_0} \Pr_{\xi} \bigl[ \gamma_{\text{bw}}(\bfr) > 1 - \tfrac{1}{x} \bigr] (1 + \ln x) \, dx \\
    &\le \int_{1}^{e} 1 \cdot (1 + \ln x) \, dx + \int_{e}^{x_0} \Pr_{\xi} \bigl[ \gamma_{\text{bw}}(\bfr) > 1 - \tfrac{1}{x} \bigr] (1 + \ln x) \, dx \\
    &= e + \int_{e}^{x_0} \Pr_{\xi} \bigl[ \gamma_{\text{bw}}(\bfr) > 1 - \tfrac{1}{x} \bigr] (1 + \ln x) \, dx .
\end{aligned}
\end{equation}
For the second integral, we substitute the bound $\Pr_{\xi} [\gamma_{\text{bw}}(\bfr) > 1 - \tfrac{1}{x}] \le \frac{K M(x)}{x}$ established in \cref{lemma:general-bw-threshold-tail}, which applies on $[e, x_0]$ since $x \ge e > 1$:
\[
    \int_{e}^{x_0} \Pr_{\xi} \bigl[ \gamma_{\text{bw}}(\bfr) > 1 - \tfrac{1}{x} \bigr] (1 + \ln x) \, dx \le K \int_{e}^{x_0} \frac{M(x)(1+\ln x)}{x} \, dx .
\]
We apply the substitution $u = \ln x$, meaning $dx/x = du$. The integration limits shift to $u \in [1, \ln x_0]$. Since $x_0 \le n^n$, the upper limit is strictly bounded by $n \ln n$. This yields:
\begin{align*}
    \int_{e}^{x_0} \frac{M(x)(1+\ln x)}{x} \, dx
    &\le \int_{1}^{n \ln n} M(e^u) (1 + u) \, du \\
    &\le M(e^{n \ln n}) \int_{1}^{n \ln n} (1 + u) \, du \\
    &= \Bigl( 1 + \frac{\ln m + n \ln n}{\theta} \Bigr) \bigl[ u + \frac{u^2}{2} \bigr]_{1}^{n \ln n} \\
    &\le \Bigl( 1 + \frac{\ln m + n \ln n}{\theta} \Bigr) \bigl( \frac{1}{2} (n \ln n)^2 + n \ln n \bigr) \\
    &\le \Bigl( 1 + \frac{\ln m + n \ln n}{\theta} \Bigr) \frac{3}{2} (n \ln n)^2 \\
    &\le \frac{\theta + 1}{\theta} (\ln m + n \ln n) \frac{3}{2} (n \ln n)^2 \\
    &\le \frac{\theta + 1}{\theta} \, 2 n \ln m \, \frac{3}{2} (n \ln n)^2 \\
    &\le \frac{\theta + 1}{\theta} \, 2 n \ln m \, \frac{3}{2} (n \ln m)^2 \\
    &= \frac{3 (\theta + 1)}{\theta} (n \ln m)^3 ,
\end{align*}
where the first equality follows from $M(e^u) = 1 + \frac{\ln m + u}{\theta}$;
the inequality $\frac{1}{2} (n \ln n)^2 + n \ln n \le \frac{3}{2} (n \ln n)^2$ uses $n \ln n \le (n \ln n)^2$ (which holds for all $n \ge 1$, since $n \ln n \ge 0$);
the bound $1 + \frac{\ln m + n \ln n}{\theta} \le \frac{\theta + 1}{\theta} (\ln m + n \ln n)$ uses $1 \le \ln m + n \ln n$ (which holds for $n \ge 2$);
the bounds $\ln m + n \ln n \le 2 n \ln m$ and $(n \ln n)^2 \le (n \ln m)^2$ both use $m \ge n$.

Substituting this bound back into \cref{eq:general-Y-expectation-sum-bound} yields $\mathbb{E}_{\xi}[Y(\bfr)] \le e + K \frac{3 (\theta + 1)}{\theta} (n \ln m)^3$.
Finally, since $\frac{\theta + 1}{\theta} \ge 1$ and $n \ln m \ge 2 \ln 2 > 1$ (as $m \ge n \ge 2$), we have $\frac{3 (\theta + 1)}{\theta} (n \ln m)^3 \ge 3 (n \ln m)^3 > 3 > e$; since $K \ge 1$, this gives $e \le K \frac{3 (\theta + 1)}{\theta} (n \ln m)^3$, and hence $\mathbb{E}_{\xi}[Y(\bfr)] \le 2 K \frac{3 (\theta + 1)}{\theta} (n \ln m)^3$, completing the proof.
\end{proof}

\begin{lemma}\label{lemma:general-repi-expected-running-time}
Let $G$ be any game graph with $n \ge 2$.
Under the smoothed model, the expected total number of iterations in all the \algname{PI}'s executed by \cref{alg:restarting-tpi} is bounded by
\[
    324 \, m \ln n \, K \frac{3 (\theta + 1)}{\theta} (n \ln m)^3 ,
\]
where $K$ is defined in \cref{eq:K-definition}. In particular, this bound is polynomial in $n, m$, $\phi$, and $1/\theta$.
\end{lemma}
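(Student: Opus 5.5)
The plan is to combine the two bounds of \cref{lemma:general-repi-total-iteration-bound} into a single pointwise bound on $N_{\text{iter}}(\bfr)$ by a constant multiple of $m\ln n$ times $Y(\bfr)$ (up to an additive term), and then take expectations using \cref{lemma:general-expected-truncated-bw}. Write $g \triangleq 1/(1-\gamma_{\text{bw}}(\bfr)) \ge 1$. On the probability-one event of \cref{lemma:general-repi-total-iteration-bound}, $N_{\text{iter}}(\bfr)$ is bounded both by $(\lceil n\ln n\rceil+2)n^n$ and by $27 m g\ln(eng) = 27 m g(1+\ln n+\ln g)$. The whole proof is a case analysis according to which term in the minimum defining $Y(\bfr)$ is active.

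\emph{Case $g\ln g \le n^n$.} Here $Y(\bfr)=g\ln g$. I would bound $g(1+\ln n+\ln g)$ by a multiple of $\ln n\,(g\ln g + g)$. Since $1+\ln n \le 3\ln n$ for $n\ge 2$, we get $g(1+\ln n+\ln g)\le 3\ln n\,(g+g\ln g)$. The stray $g$ term (present when $g$ is small, e.g.\ $g<e$ where $\ln g<1$) is handled by noting $g \le e + g\ln g$, so this piece is at most $3\ln n\,(e+2Y)$.

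\emph{Case $g\ln g > n^n$.} Here $Y=n^n$, and I use the deterministic bound. Since $\lceil n\ln n\rceil+2 \le 4n\ln n \le 6\ln n\cdot n$ for $n\ge2$, this needs a constant multiple of $m\ln n\cdot n^n$. The factor $n$ should be absorbed by $m\ge n$, but then I get $n\ln n\cdot n^n$, not $m\ln n\cdot n^n$ times a constant. Since $n\le m$, indeed $(\lceil n\ln n\rceil+2)n^n \le 4 m\ln n\cdot n^n \cdot \frac{n}{m}\cdot\frac{1}{1}$. With a bit of care, $\lceil n\ln n\rceil + 2\le 5 n\ln n \le 5 m \ln n$, which gives $\le 5 m\ln n\, Y$.

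Combining the cases, $N_{\text{iter}}(\bfr)\le 27\cdot 3\, m\ln n\,(e+2Y(\bfr))\le 162\,m\ln n\,(Y(\bfr)+e)$ almost surely. Taking expectations and applying \cref{lemma:general-expected-truncated-bw} gives $\mathbb{E}[Y]\le 2K\frac{3(\theta+1)}{\theta}(n\ln m)^3$. The additive $e$ is absorbed because $K\frac{3(\theta+1)}{\theta}(n\ln m)^3 > e$, exactly as at the end of that lemma's proof. This yields $324\,m\ln n\,K\frac{3(\theta+1)}{\theta}(n\ln m)^3$ after arranging constants; polynomiality follows since $K$ is polynomial in $m,n,\phi$.

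The main obstacle is purely the constant bookkeeping needed to hit exactly $324$: the small-$g$ regime where $\ln g<1$, and the conversion of the deterministic bound's factor $n\ln n$ into $m\ln n$. I would first try to fold the $+e$ into the $2K\cdots$ slack already present in \cref{lemma:general-expected-truncated-bw}, rather than proving a cleaner pointwise inequality.
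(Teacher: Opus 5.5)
Your approach is the paper's: combine the two bounds of \cref{lemma:general-repi-total-iteration-bound} into an almost-sure pointwise bound in terms of $Y(\bfr)$, splitting on which term of the truncation is active, and then take expectations using \cref{lemma:general-expected-truncated-bw}. The case analysis and the absorption of $n$ into $m$ in the truncated case are fine. The one issue is the constant.

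Write $A \triangleq \frac{3(\theta+1)}{\theta}(n\ln m)^3$. Your pointwise bound $N_{\text{iter}}\le 162\,m\ln n\,(Y+e)$, combined with the \emph{stated} bound $\mathbb{E}[Y]\le 2KA$, gives $162\,m\ln n\,(2KA+e)$. That exceeds $324\,m\ln n\,KA$ by the additive term $162e\,m\ln n$. Absorbing it via $e\le KA$ gives the constant $486$, not $324$.

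The repair you suggest at the end does work. The proof of \cref{lemma:general-expected-truncated-bw} actually establishes $\mathbb{E}[Y]\le e+KA$. Since $KA\ge 2e$ (e.g.\ $K\ge 1$ and $A\ge 3(2\ln 2)^3>2e$), you get $162\,m\ln n\,(KA+2e)\le 324\,m\ln n\,KA$.

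The paper avoids reaching inside that proof by keeping a sharper pointwise inequality, $N_{\text{iter}}\le 27m\bigl((\ln n+2)Y+e(\ln n+1)\bigr)$. It comes from $g\ln(eng)=g\ln g+(\ln n+1)g$ and $g\le Y+e$, without first inflating $1+\ln n$ to $3\ln n$ and doubling the coefficient of $Y$. With that, the lemma's stated bound $2KA$ suffices, using $e(\ln n+1)\le(\ln n+2)KA$ and $\ln n+2\le 4\ln n$.
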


\begin{proof}[Proof overview]
By \cref{lemma:general-repi-total-iteration-bound}, the runtime of \algname{RePI} is bounded by the inverse Blackwell gap (up to polynomial factors of $n$ and $m$), while its worst-case fallback is bounded by the finite number of strategies.
These two bounds are captured by the truncated inverse gap $Y(\bfr)$.
Therefore, the polynomial expectation of $Y(\bfr)$ from \cref{lemma:general-expected-truncated-bw} yields a polynomial expected runtime.
\end{proof}

\begin{proof}
Let $x \triangleq 1/(1-\gamma_{\text{bw}}(\bfr)) \ge 1$, and recall the truncated inverse gap $Y(\bfr) = \min\bigl(x \ln x, \, n^n\bigr)$.
By \cref{lemma:general-repi-total-iteration-bound}, with probability $1$ we have
\[
    N_{\text{iter}}(\bfr) \le \min\Bigl( 27 \, m \, x \ln(en \, x), \ (\lceil n \ln n \rceil + 2) \, n^n \Bigr) .
\]
We claim that
\begin{equation}\label{eq:repi-expected-pointwise}
    N_{\text{iter}}(\bfr) \le 27 \, m \bigl( (\ln n + 2) \, Y(\bfr) + e(\ln n + 1) \bigr) .
\end{equation}
Indeed, we prove this by splitting into two cases according to the truncation.

\smallskip\noindent\textbf{Case 1: $x \ln x \le n^n$.}
Here $Y(\bfr) = x \ln x$, and we bound the minimum by its first term:
\begin{align*}
    N_{\text{iter}}(\bfr)
    &\le 27 \, m \, x \ln(en \, x)\\
    &= 27 \, m \bigl( x \ln x + (\ln n + 1) x \bigr) \\
    &\le 27 \, m \bigl( Y(\bfr) + (\ln n + 1)(Y(\bfr) + e) \bigr)\\
    &= 27 \, m \bigl( (\ln n + 2) Y(\bfr) + e(\ln n + 1) \bigr) ,
\end{align*}
where the first equality uses $\ln(en\,x) = \ln x + \ln n + 1$, and the second inequality uses $x \ln x = Y(\bfr)$ together with $x \le x \ln x + e = Y(\bfr) + e$, valid for all $x \ge 1$.

\smallskip\noindent\textbf{Case 2: $x \ln x > n^n$.}
Here $Y(\bfr) = n^n$, and we bound the minimum by its second term. Since every vertex has at least one outgoing edge, $m \ge n$, hence $\lceil n \ln n \rceil + 2 \le n \ln n + 3 \le 27 \, m (\ln n + 2)$. Therefore
\[
    N_{\text{iter}}(\bfr)
    \le (\lceil n \ln n \rceil + 2) \, n^n
    \le 27 \, m (\ln n + 2) \, n^n
    \le 27 \, m \bigl( (\ln n + 2) Y(\bfr) + e(\ln n + 1) \bigr) .
\]

\smallskip
Consequently, \cref{eq:repi-expected-pointwise} holds with probability $1$; since $N_{\text{iter}}(\bfr) \le (\lceil n \ln n \rceil + 2) \, n^n$ for every $\bfr$ by \cref{lemma:general-repi-total-iteration-bound}, the complementary null event contributes nothing to $\mathbb{E}_{\xi}[N_{\text{iter}}(\bfr)]$.
Taking the expectation of \cref{eq:repi-expected-pointwise} over the smoothed perturbation $\bfr$ and applying \cref{lemma:general-expected-truncated-bw},
\begin{align*}
    \mathbb{E}_{\xi}[N_{\text{iter}}(\bfr)]
    &\le 27 \, m \bigl( (\ln n + 2) \, \mathbb{E}_{\xi}[Y(\bfr)] + e(\ln n + 1) \bigr)\\
    &\le 27 \, m \Bigl( 2 (\ln n + 2) K \frac{3 (\theta + 1)}{\theta} (n \ln m)^3 + e(\ln n + 1) \Bigr)\\
    &\le 81 \, m (\ln n + 2) K \frac{3 (\theta + 1)}{\theta} (n \ln m)^3\\
    &\le 324 \, m \ln n \, K \frac{3 (\theta + 1)}{\theta} (n \ln m)^3 ,
\end{align*}
where the third inequality uses $e (\ln n + 1) \le (\ln n + 2) K \frac{3 (\theta + 1)}{\theta} (n \ln m)^3$, which follows from $e \le K \frac{3 (\theta + 1)}{\theta} (n \ln m)^3$ (since $K \ge 1$ and $\frac{3 (\theta + 1)}{\theta} (n \ln m)^3 \ge 3 (n \ln m)^3 \ge 3 > e$), and the last inequality uses $\ln n + 2 \le 4 \ln n$ (which holds for $n \ge 2$). This is the claimed bound.
\end{proof}

\begin{theorem}\label{theorem:general-discounted-sum-expected}
The restarting PI algorithm achieves smoothed polynomial time with expected runtime guarantee for discounted-sum games.
\end{theorem}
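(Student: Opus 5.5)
The plan is to convert the iteration bound of \cref{lemma:general-repi-expected-running-time} into a bound on arithmetic operations, and then check that no other part of \algname{RePI} adds cost. The runtime $\cT(G,\gamma,\bfr_0+\xi)$ of \cref{alg:restarting-tpi} has three components.

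\begin{enumerate}
\item The iterations inside all calls to \algname{PI}. Each iteration performs one policy evaluation, a deterministic MDP solve costing at most $\cO(n^3 m^2 \log^2 n)$ arithmetic operations~\cite{post2015simplex}. It also performs one improvement step, costing $\cO(m)$ operations.
\item The Phase~2 work in each of the at most $k_{\max}+1 = \lceil n\ln n\rceil + 1$ rounds of the for-loop. This consists of evaluating a fixed strategy pair under $\gamma$, which is a linear system solvable in $\cO(n^3)$ operations, and scanning all edges for improving switches in $\cO(m)$.
\item The bookkeeping, such as computing $\gamma_k$, which is $\cO(1)$ per round.
\end{enumerate}

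Summing these gives the pointwise bound
\[
\cT(G,\gamma,\bfr) \le c\, n^3 m^2 \log^2 n \cdot N_{\text{iter}}(\bfr) + c\,(n\ln n + 2)(n^3 + m),
\]
for an absolute constant $c$. This holds for every $\bfr$, because each \algname{PI} call executes at least one evaluation, which is counted in $N_{\text{iter}}$. Only the Phase~2 overhead has to be added separately, and it is deterministic and polynomial.

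Next I take expectations over $\xi$. By linearity and \cref{lemma:general-repi-expected-running-time}, $\mathbb{E}_\xi[N_{\text{iter}}(\bfr)] \le 324\, m\ln n\, K \frac{3(\theta+1)}{\theta}(n\ln m)^3$. Here $K = 92m^2n^2(\phi+1)$ and $\frac{\theta+1}{\theta} = 1 + 1/\theta$, so the bound is polynomial in $n,m,\phi,1/\theta$. Multiplying by the per-iteration cost and adding the deterministic overhead gives $\mathbb{E}_\xi[\cT] \le \text{poly}(n,m,\phi,1/\theta)$, uniformly over $G$, $\gamma$ and $\bfr_0\in[-1,1]^E$. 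This is exactly the expected smoothed polynomial time guarantee.

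The step needing the most care is measurability and integrability. One concern is that the pointwise bound in \cref{lemma:general-repi-total-iteration-bound} holds only almost surely. That lemma, however, also provides the deterministic bound $(\lceil n\ln n\rceil+2)n^n$ for every $\bfr$, so the null set contributes nothing to the expectation; this is the same argument as in the proof of \cref{lemma:general-repi-expected-running-time}. A second concern is that the per-iteration evaluation cost must be independent of $\bfr$. The strongly polynomial bound of~\cite{post2015simplex} counts arithmetic operations only and does not depend on the magnitudes of the rewards, so this holds and the multiplication is valid. Correctness of the output is inherited from the correctness of \algname{PI} and the optimality test in Phase~2.
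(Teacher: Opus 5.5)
Your proposal is correct and takes the same route as the paper. The paper's proof simply invokes \cref{lemma:general-repi-expected-running-time} and observes that the resulting iteration bound is polynomial in $n, m, \phi, 1/\theta$ by the definition of $K$; your extra accounting (multiplying by the per-iteration evaluation cost, adding the deterministic Phase~2 overhead, and disposing of the null event via the deterministic worst-case bound) only makes explicit the conversion from iterations to arithmetic operations that the paper leaves implicit.
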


\begin{proof}
By \cref{lemma:general-repi-expected-running-time}, the expected total number of iterations executed by $\algname{RePI}$ is at most $324 \, m \ln n \, K \frac{3 (\theta + 1)}{\theta} (n \ln m)^3$, which is polynomial in $n, m, \phi$, and $1/\theta$ by the definition of $K$ in \cref{eq:K-definition}.
\end{proof}

\section{Mean-payoff games}\label{sec:mean-payoff-games}
As established in \cref{sec:background}, any Blackwell optimal strategy for a discounted-sum game is an optimal strategy for the corresponding mean-payoff game~\citep{bewley1976asymptotic}.
Therefore, our smoothed analysis for computing Blackwell optimal strategies directly yields a smoothed polynomial-time algorithm for mean-payoff games, presented in \cref{theorem:mpg-main}.

\begin{corollary}\label{theorem:mpg-main}
The restarting mean-payoff PI algorithm $\algname{RePI}_{\text{mp}}$ achieves smoothed polynomial time for mean-payoff games, with both high-probability and expected runtime guarantees.
\end{corollary}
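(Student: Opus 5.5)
The algorithm $\algname{RePI}_{\text{mp}}$ is not shown before the statement, so I take it to mirror \cref{alg:restarting-tpi} with the target discount factor replaced by a mean-payoff optimality test. For $k = 0, 1, \dots, k_{\max} = \lceil n \ln n \rceil$, it runs $\algname{PI}(G, \gamma_k, \bfr)$ to obtain a candidate pair $(\pi_{\min}, \pi_{\max})$. It then checks whether this pair is optimal for $G^{\text{mp}, \bfr}$. If every threshold fails, it falls back to an exhaustive or worst-case procedure, such as mean-payoff policy iteration, using at most $n^n$ iterations. The check is done by evaluating the pair's gain and bias in polynomial time, one linear solve per strategy profile. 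It verifies that neither player has an improving switch, using the standard lexicographic criterion: gain first, then bias. With this structure, the proof reduces to the discounted analysis of \cref{sec:ergodic-games}.

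The first step is correctness. At each $k$, the test only certifies genuinely mean-payoff-optimal pairs, and the fallback is always correct. So the output is optimal regardless of $\bfr$.

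The second step is termination. I condition on the probability-one event of \cref{lemma:unique-threshold-optimal-blackwell} for the finite set $\Gamma_{\text{alg}} = \{\gamma_k\}$. Suppose $\gamma_{k} > \gamma_{\text{bw}}(\bfr)$. Then the pair returned by $\algname{PI}(G,\gamma_k,\bfr)$ is the unique Blackwell optimal pair. Blackwell optimal pairs are mean-payoff optimal \citep{bewley1976asymptotic}, so they pass the test. The main obstacle is that the test must accept a Blackwell optimal pair. A gain-only test would accept it automatically. With a bias-based test, I would use the value decomposition of \cref{lemma:value-decomposition}: for a Blackwell optimal pair, $w^{\bfr}_v(\gamma)$ converges to the bias as $\gamma \uparrow 1$. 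Discounted optimality for all $\gamma$ near $1$ then rules out both gain-improving and bias-improving switches, by the same cancellation used in \cref{lemma:bounded-improvement-function}. The algorithm therefore stops at the first $k$ with $\gamma_k > \gamma_{\text{bw}}(\bfr)$.

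The third step is the iteration count. The Case~1/Case~2 computation of \cref{lemma:general-repi-total-iteration-bound} carries over verbatim. It gives the deterministic bound $(\lceil n\ln n\rceil + 2)n^n$ and, with probability $1$, the bound $27\, m\, x \ln(en x)$ with $x = 1/(1-\gamma_{\text{bw}}(\bfr))$. Each test costs only polynomially many arithmetic operations, at most $k_{\max}+1$ times, so it adds only a polynomial term. Combining these bounds with \cref{lemma:general-bw-threshold-tail}, exactly as in \cref{lemma:general-RePI-high-probability-convergence}, gives the high-probability guarantee. Combining them with \cref{lemma:general-expected-truncated-bw} via the pointwise bound \cref{eq:repi-expected-pointwise}, as in \cref{lemma:general-repi-expected-running-time}, gives the expected guarantee. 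The tail and expectation lemmas concern only $\gamma_{\text{bw}}(\bfr)$ and do not involve the target discount factor, so they apply unchanged.
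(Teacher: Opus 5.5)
Your proposal is correct and follows the paper's route: you condition on the probability-one event of \cref{lemma:unique-threshold-optimal-blackwell}, note that once $\gamma_k > \gamma_{\text{bw}}(\bfr)$ the pair returned by $\algname{PI}$ is Blackwell (hence mean-payoff) optimal so the Phase~2 test succeeds, reuse the Case~1/Case~2 count of \cref{lemma:general-repi-total-iteration-bound}, and then invoke \cref{lemma:general-bw-threshold-tail} and \cref{lemma:general-expected-truncated-bw}. The only difference is the optimality test: the paper uses the longest-shortest-paths test of Bj\"orklund and Vorobyov as a black box, while your gain/bias switch test relies on the extra fact you supply, that $w^{\bfr}_v(\gamma)$ converges to the true (Ces\`aro) bias of the Blackwell pair---so the test must use exactly that bias, because an arbitrary per-cycle normalization could make it reject a Blackwell optimal pair.
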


In the rest of this section, we provide the technical development for mean-payoff games that establishes \cref{theorem:mpg-main}, combining a high-probability guarantee (\cref{lemma:mpg-high-probability-convergence}) and an expected runtime guarantee (\cref{lemma:mpg-repi-expected-running-time}).
We first adapt the restarting algorithm of \cref{sec:repi-deterministic-bound} to the mean-payoff setting and establish a total iteration bound for it (\cref{sec:repi-mp-deterministic-bound}).
Combining this bound with the tail bound of \cref{sec:tail-bound-general} yields smoothed polynomial runtime with high probability (\cref{sec:high-prob-bound-mpg}); combining it with the expected truncated inverse gap of \cref{sec:expected-running-time-ergodic} yields smoothed polynomial expected runtime (\cref{sec:expected-running-time-mpg}).

\subsection{Restarting Mean-payoff Policy Iteration}\label{sec:repi-mp-deterministic-bound}

In this subsection, we adapt the restarting variant of policy iteration to the mean-payoff setting, called $\algname{RePI}_{\text{mp}}$, presented in full detail in \cref{alg:restarting-tpi-mp}, and then establish a deterministic bound on its total iteration count in terms of the Blackwell threshold $\gamma_{\text{bw}}(\bfr)$ (\cref{lemma:mpg-repi-total-iteration-bound}).

\paragraph{Informal description.}
Our algorithm sweeps through the same geometric sequence of thresholds $\gamma_k = 1 - e^{-k}$ for $k = 0, 1, \dots, \lceil n \ln n \rceil$, with each successive threshold $e$ times closer to $1$ than the previous one.
For each $k$ in turn, it solves the game with the discount factor~$\gamma_k$ via the standard discounted-sum $\algname{PI}(G, \gamma_k, \bfr)$ to obtain a candidate strategy pair $(\pi_{\min}, \pi_{\max})$.
The algorithm then tests whether this candidate pair is already optimal for the mean-payoff game.
For this purpose, we use the combinatorial strategy evaluation of~\cite{bjorklund2007combinatorial}, which reduces mean-payoff games to the longest shortest paths problem and evaluates strategies using shortest distances to a sink.
Their method can determine whether a given positional strategy pair is mean-payoff optimal; we invoke this test as a black box.
If the test succeeds, the algorithm computes the mean-payoff value vector induced by the candidate pair and returns both.
Otherwise it proceeds to the next threshold $\gamma_{k+1}$ and retries.
If every threshold in the sequence fails to be optimal, the algorithm falls back to the longest-shortest-path strategy-improvement algorithm of~\cite{bjorklund2007combinatorial}, denoted by $\algname{PI}_{\text{mp}}(G,\bfr)$.
We use this algorithm as a deterministic black-box fallback: it returns a mean-payoff optimal positional strategy pair and value vector, and for our purposes the upper bound of at most $n^n$ strategy improvements is sufficient,
which is due to the fact that there are at most $n^n$ positional strategies pairs in a game with $n$ vertices.

\begin{algorithm}[htb]
\caption{Restarting Mean-payoff Policy Iteration: $\algname{RePI}_{\text{mp}}(G, \bfr)$}\label{alg:restarting-tpi-mp}
\begin{algorithmic}[1]
    \State \textbf{Input:} Graph $G = (V_{\min}, V_{\max}, E)$ with $n = \Abs{V} \ge 2$, reward vector $\bfr \in \R^E$.
    \State \textbf{Output:} Mean-payoff optimal strategy pair $(\pi_{\min}^*, \pi_{\max}^*)$ and mean-payoff value vector $\mu^{\text{mp}, \bfr}$ for the game $G^{\text{mp}, \bfr}$.
    \State Let $k_{\max} = \lceil n \ln n \rceil$.
    \For{$k = 0, 1, \dots, k_{\max}$}
        \State Set the threshold discount factor $\gamma_k = 1 - e^{-k}$.
        \State \textbf{// Phase 1: Solve the discounted game at the threshold}
        \State $(\pi_{\min}, \pi_{\max}, \mu_{\gamma_k}) \gets \algname{PI}(G, \gamma_k, \bfr)$
        \State \textbf{// Phase 2: Mean-payoff optimality test}
        \State Use the longest shortest paths algorithm of~\cite{bjorklund2007combinatorial} to test whether $(\pi_{\min},\pi_{\max})$ is mean-payoff optimal.
        \If{$(\pi_{\min},\pi_{\max})$ is mean-payoff optimal}
            \State Compute the mean-payoff value vector $\mu^{\mathrm{mp},\bfr}$ induced by $(\pi_{\min},\pi_{\max})$.
            \State \Return $(\pi_{\min}, \pi_{\max}, \mu^{\mathrm{mp},\bfr})$
        \EndIf
    \EndFor
    \State \textbf{// Phase 3: Fallback}
    \State \Return $\algname{PI}_{\text{mp}}(G, \bfr)$
\end{algorithmic}
\end{algorithm}

The intuition mirrors the discounted-sum case.
Under the smoothed model, with high probability some threshold $\gamma_k$ in the sequence strictly exceeds the Blackwell threshold $\gamma_{\text{bw}}(\bfr)$. Thus, the strategy pair returned by $\algname{PI}(G, \gamma_k, \bfr)$ is Blackwell optimal and consequently mean-payoff optimal~\citep{bewley1976asymptotic}, so the Phase~2 test succeeds and the algorithm terminates without reaching the fallback.

\begin{lemma}[Total iteration bound]\label{lemma:mpg-repi-total-iteration-bound}
Let $G$ be any game graph with $n \ge 2$ and let $\bfr \in \R^E$; let $N_{\text{iter}}(\bfr)$ be the total number of iterations performed by $\algname{RePI}_{\text{mp}}(G, \bfr)$ (\cref{alg:restarting-tpi-mp}), across all calls to $\algname{PI}$ and the optional $\algname{PI}_{\text{mp}}$ fallback.
Then $N_{\text{iter}}(\bfr) \le (\lceil n \ln n \rceil + 2) \, n^n$ for every $\bfr$, and, under the smoothed model, with probability $1$,
\[
    N_{\text{iter}}(\bfr) \le 27 \cdot \frac{m}{1-\gamma_{\text{bw}}(\bfr)} \ln\Bigl( \frac{en}{1-\gamma_{\text{bw}}(\bfr)} \Bigr) .
\]
\end{lemma}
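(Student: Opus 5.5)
The plan is to mirror the proof of \cref{lemma:general-repi-total-iteration-bound} almost verbatim. Let $k_{\max} = \lceil n \ln n \rceil$ and let $k_{\text{final}}$ be the index of the last for-loop iteration executed by $\algname{RePI}_{\text{mp}}$. The $k$-th iteration makes one call $\algname{PI}(G, \gamma_k, \bfr)$. By \cref{prop:pi-worst-case-bound} this call uses at most $6m\,e^k(\ln n + k)$ iterations, and it also uses at most $n^n$ iterations, since it never revisits a strategy. The Phase~2 test and the value computation are black-box evaluations, not policy-iteration iterations, so they do not count toward $N_{\text{iter}}$. The optional fallback $\algname{PI}_{\text{mp}}(G,\bfr)$ contributes at most $n^n$ strategy improvements, by the count of positional strategies. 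This gives
\[
    N_{\text{iter}}(\bfr) \le \sum_{k=0}^{k_{\text{final}}} \min\bigl(6m\,e^k(\ln n+k),\, n^n\bigr) + n^n \cdot \1\{\text{Phase 3 triggered}\}.
\]
Bounding every summand by $n^n$ yields the deterministic bound $(k_{\max}+2)\,n^n$ immediately.

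For the smoothed bound, I condition on the probability-one event of \cref{lemma:unique-threshold-optimal-blackwell} with $\Gamma = \{\gamma_k : 0 \le k \le k_{\max}\}$. On this event, every $\gamma_k > \gamma_{\text{bw}}(\bfr)$ has a unique optimal pair in $G^{\gamma_k,\bfr}$, and that pair is Blackwell optimal. I then split into the same two cases as before.

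In Case~1, $\gamma_{\text{bw}}(\bfr) < \gamma_{k_{\max}}$. I pick $i$ with $\gamma_i \le \gamma_{\text{bw}}(\bfr) < \gamma_{i+1}$. The call $\algname{PI}(G,\gamma_{i+1},\bfr)$ returns the unique Blackwell optimal pair, which is mean-payoff optimal by \cite{bewley1976asymptotic}. Hence the Phase~2 test succeeds, so $k_{\text{final}} \le i+1$ and Phase~3 is never triggered. This differs slightly from the discounted case: there is no early-return branch for $\gamma \le \gamma_k$, so termination comes solely from the test at index $i+1$. The test is exact, so it cannot fail on an optimal pair.

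In Case~2, $\gamma_{\text{bw}}(\bfr) \ge \gamma_{k_{\max}}$. Then $1/(1-\gamma_{\text{bw}}(\bfr)) \ge e^{k_{\max}} \ge n^n$ and $k_{\text{final}} \le k_{\max}$.

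In both cases I recover the two bounds of \cref{eq:repi-kfinal-bounds} by the identical computation. The geometric-sum estimate then bounds the for-loop contribution by $\frac{6e^2}{e-1}\cdot\frac{m}{1-\gamma_{\text{bw}}}\ln\frac{en}{1-\gamma_{\text{bw}}}$. In Case~2, the fallback term $n^n$ is absorbed using $n^n \le 1/(1-\gamma_{\text{bw}}(\bfr))$. Summing gives the constant $\frac{6e^2}{e-1}+1 \le 27$.

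The only point needing care, and the step I would double-check, is Case~1. There I must make sure the $\algname{PI}$ output at $\gamma_{i+1}$ is genuinely Blackwell optimal rather than merely optimal at $\gamma_{i+1}$. This is exactly what the uniqueness statement in \cref{lemma:unique-threshold-optimal-blackwell} provides. Everything else is a direct transcription of the earlier proof.
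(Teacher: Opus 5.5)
Your proposal is correct and follows the paper's proof. The deterministic bound comes from the same count of at most $\lceil n\ln n\rceil+2$ calls, each bounded by $n^n$. The smoothed bound uses the same two-case analysis on the probability-one event of \cref{lemma:unique-threshold-optimal-blackwell}: the pair returned at $\gamma_{i+1}$ is Blackwell (hence mean-payoff) optimal, so the Phase~2 test succeeds, and in Case~2 the fallback is absorbed via $n^n \le 1/(1-\gamma_{\text{bw}}(\bfr))$.
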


\begin{proof}
The deterministic bound $N_{\text{iter}}(\bfr) \le (\lceil n \ln n \rceil + 2) \, n^n$ holds exactly as in \cref{lemma:general-repi-total-iteration-bound}: there are at most $\lceil n \ln n \rceil + 2$ calls, each terminating in at most $n^n$ iterations (the fallback $\algname{PI}_{\text{mp}}(G, \bfr)$ likewise, by~\cite{bjorklund2007combinatorial}).
For the geometric bound, condition on the probability-one event from \cref{lemma:unique-threshold-optimal-blackwell} with $\Gamma_{\text{alg}} = \{\gamma_k : k = 0,1,\dots,k_{\max}\}$, and run the same case analysis as in \cref{lemma:general-repi-total-iteration-bound} with two minor changes.
First, when $\gamma_{i+1} > \gamma_{\text{bw}}(\bfr)$, the unique optimal pair of $G^{\gamma_{i+1}, \bfr}$ is Blackwell optimal and is also \emph{mean-payoff} optimal~\citep{bewley1976asymptotic}, so $\algname{PI}(G, \gamma_{i+1}, \bfr)$ returns it; the test based on longest shortest paths from~\cite{bjorklund2007combinatorial} therefore certifies the pair as mean-payoff optimal in Phase~2, and the algorithm terminates without triggering Phase~3.
Second, when Phase~3 is triggered (i.e., $\gamma_{\text{bw}}(\bfr) \ge \gamma_{k_{\max}}$), the fallback is $\algname{PI}_{\text{mp}}(G, \bfr)$ rather than $\algname{PI}(G, \gamma, \bfr)$; it terminates in at most $n^n$ iterations~\cite{bjorklund2007combinatorial}, and the same bound $n^n \le \frac{m}{1-\gamma_{\text{bw}}(\bfr)} \ln\bigl(\frac{en}{1-\gamma_{\text{bw}}(\bfr)}\bigr)$ applies in this regime.
The geometric sum then combines as in \cref{lemma:general-repi-total-iteration-bound} to give the second bound, which holds with probability $1$ by \cref{lemma:unique-threshold-optimal-blackwell}.
\end{proof}

\subsection{High Probability Guarantees}\label{sec:high-prob-bound-mpg}

In this subsection, we combine the total iteration bound of \cref{lemma:mpg-repi-total-iteration-bound} with the tail bound from \cref{sec:tail-bound-general} to obtain a high-probability bound on the number of iterations of $\algname{RePI}_{\text{mp}}$.

\begin{lemma}\label{lemma:mpg-high-probability-convergence}
Let $G$ be any game graph. Under the smoothed model, for any $\epsilon \in (0, 1)$, let $x_\epsilon$ be as defined in \cref{lemma:general-RePI-high-probability-convergence}. With probability at least $1 - \epsilon$ over the perturbation $\bfr$, $\algname{RePI}_{\text{mp}}$ (\cref{alg:restarting-tpi-mp}) executes a total of at most $27 \, m \, x_\epsilon \ln(en \, x_\epsilon)$ iterations.
\end{lemma}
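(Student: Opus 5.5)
The proof follows that of \cref{lemma:general-RePI-high-probability-convergence}, using the mean-payoff iteration bound \cref{lemma:mpg-repi-total-iteration-bound} in place of \cref{lemma:general-repi-total-iteration-bound}. Nothing in the high-probability step depends on the target discount factor $\gamma$: it only concerns the distribution of the Blackwell threshold $\gamma_{\text{bw}}(\bfr)$, which is the same object for both algorithms. So the argument should transfer verbatim.

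\emph{Step 1 (tail event).} Fix $\epsilon \in (0,1)$ and take $L_\epsilon$ and $x_\epsilon$ as in \cref{lemma:general-RePI-high-probability-convergence}. As shown there, $x_\epsilon > 2 \ge 1$, so \cref{lemma:general-bw-threshold-tail} applies with $x = x_\epsilon$. The inequality $\ln(m x_\epsilon) \le 2L_\epsilon - \theta$ (from $\ln L_\epsilon \le L_\epsilon$) gives $K M(x_\epsilon)/x_\epsilon \le \epsilon$. Hence the event $\cG \triangleq \{\gamma_{\text{bw}}(\bfr) \le 1 - 1/x_\epsilon\}$ has probability at least $1-\epsilon$. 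I will cite this computation rather than redo it.

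\emph{Step 2 (intersect with the almost-sure bound).} By \cref{lemma:mpg-repi-total-iteration-bound}, with probability $1$,
\[
    N_{\text{iter}}(\bfr) \le 27 \cdot \frac{m}{1-\gamma_{\text{bw}}(\bfr)} \ln\Bigl(\frac{en}{1-\gamma_{\text{bw}}(\bfr)}\Bigr).
\]
Intersecting this probability-one event with $\cG$ still leaves an event of probability at least $1-\epsilon$.

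\emph{Step 3 (monotonicity).} On that event, let $y \triangleq 1/(1-\gamma_{\text{bw}}(\bfr))$. Then $1 \le y \le x_\epsilon$. The map $y \mapsto y\ln(en\,y)$ is increasing for $y \ge 1$, so the right-hand side above is at most $27\, m\, x_\epsilon \ln(en\, x_\epsilon)$.

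There is no real obstacle. The only point to check is that the almost-sure bound of \cref{lemma:mpg-repi-total-iteration-bound} counts every iteration, including the $\algname{PI}_{\text{mp}}$ fallback. This holds because that lemma defines $N_{\text{iter}}$ to include the fallback, and in the fallback regime it bounds the fallback by the same geometric expression.
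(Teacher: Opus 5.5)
Your proposal is correct and follows the paper's proof essentially step for step. The paper likewise reuses the tail event $\gamma_{\text{bw}}(\bfr) \le 1 - 1/x_\epsilon$ from the proof of \cref{lemma:general-RePI-high-probability-convergence}, intersects it with the probability-one geometric bound of \cref{lemma:mpg-repi-total-iteration-bound}, and concludes using the monotonicity of $y \mapsto y \ln(en\,y)$ on $y \ge 1$.
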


\begin{proof}
By the proof of \cref{lemma:general-RePI-high-probability-convergence}, $\gamma_{\text{bw}}(\bfr) \le 1 - 1/x_\epsilon$ with probability at least $1 - \epsilon$.
As the geometric bound of \cref{lemma:mpg-repi-total-iteration-bound} holds with probability $1$, assume this event together with it (their intersection still has probability at least $1 - \epsilon$), so $\frac{1}{1-\gamma_{\text{bw}}(\bfr)} \le x_\epsilon$, and the map $y \mapsto y \ln(en \, y)$ being increasing on $y \ge 1$ gives $\frac{1}{1-\gamma_{\text{bw}}(\bfr)} \ln\bigl(\frac{en}{1-\gamma_{\text{bw}}(\bfr)}\bigr) \le x_\epsilon \ln(en \, x_\epsilon)$.
\Cref{lemma:mpg-repi-total-iteration-bound} then yields $N_{\text{iter}}(\bfr) \le 27 \, m \, x_\epsilon \ln(en \, x_\epsilon)$.
\end{proof}

\subsection{Expected Runtime Guarantees}\label{sec:expected-running-time-mpg}

In this subsection, we combine the total iteration bound of \cref{lemma:mpg-repi-total-iteration-bound} with the bound on the expected truncated inverse gap from \cref{lemma:general-expected-truncated-bw} to obtain a bound on the expected number of iterations of $\algname{RePI}_{\text{mp}}$.

\begin{lemma}\label{lemma:mpg-repi-expected-running-time}
Let $G$ be any game graph with $n \ge 2$.
Under the smoothed model, the expected total number of iterations executed by \cref{alg:restarting-tpi-mp} (across all \algname{PI} executions and the fallback $\algname{PI}_{\text{mp}}$, when triggered) is bounded by
\[
    324 \, m \ln n \, K \frac{3 (\theta + 1)}{\theta} (n \ln m)^3 ,
\]
where $K$ is defined in \cref{eq:K-definition}. In particular, this bound is polynomial in $n, m$, $\phi$, and $1/\theta$.
\end{lemma}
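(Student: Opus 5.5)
The plan is to mirror the proof of \cref{lemma:general-repi-expected-running-time} almost verbatim, replacing \cref{lemma:general-repi-total-iteration-bound} by its mean-payoff counterpart \cref{lemma:mpg-repi-total-iteration-bound}. That lemma supplies exactly the same two bounds: the deterministic bound $N_{\text{iter}}(\bfr) \le (\lceil n \ln n \rceil + 2)\, n^n$ for every $\bfr$, and the geometric bound $27 \cdot \frac{m}{1-\gamma_{\text{bw}}(\bfr)} \ln\bigl(\frac{en}{1-\gamma_{\text{bw}}(\bfr)}\bigr)$ with probability $1$. Once these are in hand, the remaining argument never refers to the algorithm itself.

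First, set $x \triangleq 1/(1-\gamma_{\text{bw}}(\bfr))$ and recall $Y(\bfr) = \min(x\ln x, n^n)$. On the probability-one event, I take the minimum of the two bounds above and establish the pointwise inequality
\[
N_{\text{iter}}(\bfr) \le 27\, m\bigl((\ln n + 2) Y(\bfr) + e(\ln n + 1)\bigr).
\]
This splits into two cases.
\begin{itemize}
\item If $x\ln x \le n^n$, I use the geometric bound together with $\ln(enx) = \ln x + \ln n + 1$ and $x \le x\ln x + e$.
\item If $x \ln x > n^n$, I use the deterministic bound together with $\lceil n\ln n\rceil + 2 \le 27\, m(\ln n + 2)$, which follows from $m \ge n$.
\end{itemize}
The null complement contributes nothing to the expectation, because the deterministic bound holds for every $\bfr$ and is therefore finite there.

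Next, I take expectations and apply \cref{lemma:general-expected-truncated-bw}, which gives $\mathbb{E}_\xi[Y(\bfr)] \le 2K\frac{3(\theta+1)}{\theta}(n\ln m)^3$. This lemma concerns only the distribution of $\gamma_{\text{bw}}(\bfr)$, so it applies unchanged to the mean-payoff setting. To finish, I absorb the additive term $e(\ln n+1)$ using $e \le K\frac{3(\theta+1)}{\theta}(n\ln m)^3$, and then use $\ln n + 2 \le 4\ln n$ for $n\ge 2$. This yields the stated constant $324$.

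The only point needing care is that the fallback $\algname{PI}_{\text{mp}}$ and the Phase~2 test both fit inside the count of \cref{lemma:mpg-repi-total-iteration-bound}: the fallback's at most $n^n$ improvements are already included in the deterministic bound. With that checked, the argument is arithmetic identical to the discounted case. Polynomiality in $n, m, \phi, 1/\theta$ then follows from \cref{eq:K-definition}.
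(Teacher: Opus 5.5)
Your proposal is correct and follows the paper's proof: the paper also says the argument is identical to that of \cref{lemma:general-repi-expected-running-time}, with \cref{lemma:mpg-repi-total-iteration-bound} used in place of the discounted total-iteration bound. As you do, it establishes the pointwise bound $N_{\text{iter}}(\bfr) \le 27\, m\bigl((\ln n+2)Y(\bfr) + e(\ln n+1)\bigr)$ on the probability-one event, uses the everywhere-valid deterministic bound to dispose of the null set, and applies \cref{lemma:general-expected-truncated-bw}; your case split and the step $\ln n + 2 \le 4\ln n$ reproduce the constant $324$ exactly.
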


\begin{proof}
The argument is identical to the proof of \cref{lemma:general-repi-expected-running-time}, invoking the total iteration bound of \cref{lemma:mpg-repi-total-iteration-bound} in place of \cref{lemma:general-repi-total-iteration-bound}.
With probability $1$ the same pointwise bound $N_{\text{iter}}(\bfr) \le 27 \, m \bigl( (\ln n + 2) Y(\bfr) + e(\ln n + 1) \bigr)$ holds, and since $N_{\text{iter}}(\bfr) \le (\lceil n \ln n \rceil + 2) n^n$ for every $\bfr$ the complementary null event contributes nothing to the expectation; taking the expectation and applying \cref{lemma:general-expected-truncated-bw} yields the stated bound.
\end{proof}

\begin{proof}[Proof of \cref{theorem:mpg-main}]
The high-probability guarantee follows from \cref{lemma:mpg-high-probability-convergence}.
The expected runtime guarantee follows from \cref{lemma:mpg-repi-expected-running-time}.
\end{proof}

\clearpage

\section*{Acknowledgement}
The authors are thankful for the early discussion with Raimundo Saona and Elahe Tohidi.
This paper receives funding from ERC CoG 863818 (ForM-SMArt) and Austrian Science Fund (FWF) 10.55776/COE12.

\bibliographystyle{alpha}
{\small
\bibliography{cite}
}

\end{document}